\newtheorem{theorem}{Theorem}
\newtheorem{corollary}[theorem]{Corollary}
\newtheorem{assumption}[theorem]{Assumption}
\theoremstyle{definition}
\theoremstyle{remark}
\newtheorem{remark}[theorem]{Remark}
\numberwithin{equation}{section}
\numberwithin{theorem}{section}
\def\cF{\mathcal{F}}
\def\bE{\mathbb{E}}
\def\bF{\mathbb{F}}
\def\bN{\mathbb{N}}
\def\bP{\mathbb{P}}
\def\bR{\mathbb{R}}
\def\bZ{\mathbb{Z}}
\def\sF{\mathscr{F}}
\newcommand{\sgn}{{\rm sgn}}
\newcommand{\wh}{\widehat}
\newcommand{\wt}{\widetilde}
\definecolor{Red}{rgb}{0,0,0}
\newcommand{\Red}{\color{Red}}
\definecolor{DR}{rgb}{0,0,0}
\definecolor{Green}{rgb}{0.2,0.5,0.2} 
\definecolor{Blue}{rgb}{0,0,0}
\newcommand{\Blue}{\color{Blue}}
\definecolor{PaleGrey}{rgb}{0.6,0.6,0.6} 
\definecolor{Purple}{rgb}{0.5,0.00,1} 
\title{Estimation of Tempered Stable L\'{e}vy Models of Infinite Variation}
\author{Jos\'{e} E. Figueroa-L\'{o}pez\thanks{Department of Mathematics \& Statistics, Washington University in St. Louis, St. Louis, MO 63130, U.S.A. (Email: {\tt figueroa-lopez@wustl.edu}). Research supported in part by the NSF Grants: DMS-2015323, DMS-1613016.}
\and Ruoting Gong\thanks{Department of Applied Mathematics, Illinois Institute of Technology, Chicago, IL 60616, U.S.A. (Email: {\tt rgong2@iit.edu}).}
\and Yuchen Han\thanks{Department of Mathematics \& Statistics, Washington University in St. Louis, St. Louis, MO 63130, U.S.A. (Email: {\tt y.han@wustl.edu}).}}
\date{\today}
\begin{document}

\maketitle

\begin{abstract}
Truncated realized quadratic variations (TRQV) are among the most widely used high-frequency-based nonparametric methods to estimate the volatility of a process in the presence of jumps. Nevertheless, the {\Red truncation level is known to critically affect} its performance, especially in the presence of infinite variation jumps. In this paper, we study the optimal truncation level, in the mean-square error sense, for a semiparametric tempered stable L\'evy model. We obtain a novel closed-form 2nd-order approximation of the optimal threshold in a high-frequency setting. As an application, we propose a new estimation method, which combines iteratively an approximate semiparametric method of moment estimator and TRQVs with the newly found small-time approximation for the optimal threshold. The method is tested via simulations to estimate the volatility and the Blumenthal-Getoor index of a generalized CGMY model and, via a localization technique, {\Red to estimate} the integrated volatility of a Heston type model with CGMY jumps. {\Red Our method is found to} outperform other alternatives proposed in the literature {\Blue when working with a L\'evy process  (i.e., the volatility is constant), or when the index of jump intensity $Y$ is larger than $3/2$ in the presence of stochastic volatility}.

\vspace{0.3 cm}

\noindent
\textbf{MSC 2000 subject classifications}: 60G51, {62M09}.

\vspace{0.3 cm}

\noindent
\textbf{Keywords and phrases}: Threshold estimator, high-frequency estimation, L\'{e}vy models, method of moment estimators, optimal parameter tuning.
\end{abstract}

\section{Introduction}

L\'{e}vy processes have experienced a revival in the past $20$ years, propelled by the need for more realistic modeling of irregular behavior in many phenomena of nature and society. These fundamental building blocks of stochastic modeling have been widely applied in many fields, including statistical physics, meteorology, seismology, insurance, finance, and telecommunication.
While, in principle, L\'{e}vy models offer ideal conditions for estimation purposes, two main bottlenecks complicate their estimation. Firstly, their marginal distributions often lack tractable or closed-form representations. In those situations, the marginal distributions must be approximated by Fourier, Monte Carlo, or other numerical methods, which makes the estimation slower and noisier. The second issue comes from the need to handle high-frequency sampling data of the process. This type of data has been widely available in finance during the last $15$ years and is increasingly {more common} in other fields. The two just-mentioned issues have rendered traditional statistical methods such as likelihood and Bayesian estimation unfeasible. {\Blue We refer the reader to \cite{ContTankov:2004} for more information about L\'evy processes and their application in finance, \cite{Masuda} for a survey on frequentist parametric estimation of L\'evy process, and \cite{FigueroaQiKuffner} for more information about Bayesian estimation methods.}

In this paper, we study {a new method} for the estimation of the parameters of a L\'{e}vy model. A semiparametric model is considered in which the jump component is assumed to exhibit small jumps that behave like those of {\Blue a $Y$-stable} L\'{e}vy process. Specifically, the class of tempered stable processes introduced in \cite{FigueroaLopezGongHoudre:2016}\footnote{The term ``tempered stable" is understood here in {\Red a more} general sense than in several classical sources of financial mathematics (e.g., \cite{Applebaum:2004}, \cite{ContTankov:2004}, \cite{KyprianouSchoutensWilmott:2005}) and even more general than in \cite{Rosinski:2007}. In fact, such class of L\'{e}vy processes is called {\Red the} \emph{tempered-stable-like} L\'{e}vy processes in \cite{FigueroaLopezGongHoudre:2016}.} and \cite{FigueroaLopezOlafsson:2019(1)} is considered. We focus on models of infinite variation (i.e., {$Y\in(1,2)$}), which are arguably the most relevant for financial applications (see \cite{AitSahaliaJacod:2009}, \cite{Belomestny:2010}, and \cite{FigueroaLopez:2012}). The estimation of semiparametric L\'{e}vy models of infinite jump variation under high-frequency data is not well developed. Jacod and Todorov \cite{JacodTodorov:2014} were the first to introduce an \emph{efficient} estimator of the integrated volatility of an It\^{o} semimartingale model in the presence of a L\'{e}vy jump model of infinite variation with Blumenthal-Getoor index {$\beta\in(1,3/2)$} or when the jump component is symmetric. {\Blue  Their estimator is based on locally estimating the volatility from the empirical characteristic function of the increments of the process over time blocks of decreasing length.} Recently, Mies \cite{Mies:2019} proposed an efficient estimation method for L\'{e}vy models based on a type of approximate semiparametric method of moments with scaling. Specifically, for some suitable moment functions $f_{1},f_{2},\dots,f_{m}$ and a scaling factor $u_{n}\rightarrow\infty$, \cite{Mies:2019} {proposed to look} for the parameters $\wh{\boldsymbol{\theta}}=(\wh{\theta}_{1},\ldots,\wh{\theta}_{m})$ such that
\begin{align}\label{eq:MF00}
\frac{1}{n}\sum_{i=1}^{n}f_{j}\big(u_{n}\Delta_{i}^{n}X\big)-\bE_{\wh{\boldsymbol{\theta}}}\Big(f_{j}\big(u_{n}\Delta^{n}_{i}\wt{Z}\big)\Big)=0,\quad j=1,\ldots,m,
\end{align}
where $\wt{Z}$ is the superposition of a Brownian motion and independent stable L\'{e}vy processes closely approximating $X$ in a certain sense. The distribution measure $\bP_{\boldsymbol{\theta}}$ of $\wt{Z}$ depends on some parameters $\boldsymbol{\theta}$, including the volatility $\sigma$ of $X$, and $\bE_{\boldsymbol{\theta}}(\cdot)$ denotes the expectation with respect to $\bP_{\boldsymbol{\theta}}$. Above, $\Delta_{i}^{n}L:=L_{t_{i}}-L_{t_{i-1}}$ is the $i$-th increment of a generic process $(L_{t})_{t\geq 0}$ given $n$ evenly spaced random samples $L_{t_{0}},\ldots,L_{t_{n}}$ over a fixed time interval $[0,T]$ (i.e., $t_{i}=ih_{n}$ with $h_{n}=T/n$). If $X$ were assumed to {\Red follow} a parametric L\'{e}vy model and we replaced $\bE_{\wh{\boldsymbol{\theta}}}(f_{j}(u_{n}\Delta^{n}_{i}\wt{Z}))$ with $\bE_{\wh{\boldsymbol{\theta}}}(f_{j}(u_{n}\Delta^{n}_{i}X))$ in \eqref{eq:MF00}, we will recover a standard Method of Moment Estimator (MME). However, we are assuming that $X$ is semiparametric and that it can be approximated closely enough by {a} parametric L\'{e}vy model $\wt{Z}$. The scaling $u_{n}$, which is taken to converge to $\infty$ at the order of $1/\sqrt{\ln(n)/n}$, is also a new feature of this method compare to the standard MME.

The moment functions $f_{1},\ldots,f_{m}$ and the scaling factor $u_{n}$ in \eqref{eq:MF00} critically affect the performance of the estimators. To determine an appropriate scaling $u_{n}$, we connect it to the threshold parameter $\varepsilon_{n}$ of a Truncated Realized Quadratic Variation (TRQV),
\begin{align}\label{TRQVb}
\text{TRQV}_{n}(\varepsilon_{n})=\sum_{i=1}^{n}\big(\Delta_{i}^{n}X\big)^{2}{\bf 1}_{\{|\Delta_{i}^{n}X|\leq\varepsilon_{n}\}},
\end{align}
which is known to be a consistent estimator for the integrated volatility of a general semimartingale model. {\Red Again, above $\Delta_i^nX=X_{t_i}-X_{t_{i-1}}$ and we are assuming regular sampling observations $X_{t_1},\dots,X_{t_n}$ with $t_i=ih_n$ and $h_n=T/n$. Next, note that} by taking $f_{1}(x)=x^{2}{\bf 1}_{\{|x|\leq 1\}}$ in \eqref{eq:MF00}, we {\Red recover the TRQV \eqref{TRQVb}, which suggests the relationship} $u_{n}=1/\varepsilon_{n}$. That is, $1/u_{n}$ plays the same role as the threshold in TRQV.

Recently, \cite{FigueroaLopezMancini:2019} studied the problem of optimal thresholding of TRQV {\Red \eqref{TRQVb}} under the mean-square error. Specifically, in the case of a L\'{e}vy process with volatility $\sigma$, it is shown that the threshold $\varepsilon=\varepsilon^{\star}_{n}$ that minimizes the mean-square error, $\bE((\text{TRQV}_{n}(\varepsilon)-\sigma^{2}T)^{2})$, solves the equation:
\begin{align*}
\varepsilon^{2}+2(n-1)\bE\big(b_{1,h_{n}}(\varepsilon)\big)-2T\sigma^{2}=0,
\end{align*}
where {\Red $b_{1,h_{n}}(\varepsilon):=X_{h_n}^{2}{\bf 1}_{\{|X_{h_n}|\leq\varepsilon\}}$}. By analyzing the small-time asymptotic behavior of $\bE(b_{1,h_{n}}(\varepsilon))$ (i.e., when $n\rightarrow\infty$ so that $h_{n}\rightarrow 0$), \cite{FigueroaLopezMancini:2019} {\Red proved} that the optimal threshold $\varepsilon^{\star}_{n}$ for a L\'{e}vy process with a $Y$-stable jump component behaves like
\begin{align}\label{eq:EquepsApprx20a}
\varepsilon^{\star}_{n}\sim\sqrt{(2-Y)\sigma^{2}h_{n}\ln(1/h_{n})},\quad n\rightarrow\infty,
\end{align}
where {\Red $h_{n}=T/n$} is the time span between observations and, as usual, $a_{n}\sim b_{n}$ means $a_{n}/b_{n}\rightarrow 1$ as $n\rightarrow\infty$. The proportionality constant $\sqrt{2-Y}$ roughly tells us that the higher the jump activity is, the lower the optimal threshold has to be if we want to discard the higher noise represented by the small jumps. This fact opens the door to an iterative method to estimate $\sigma^{2}$. We can first estimate $Y$ and $\sigma^{2}$ using, for instance, the method of moments \eqref{eq:MF00}. We can then use the TRQV with the threshold $\wh{\varepsilon}^{\,\star}_{n}=\sqrt{(2-\wh{Y})\wh{\sigma}^{2}h_{n}\ln(1/h_{n})}$.

In this paper, we first extend the result of \cite{FigueroaLopezMancini:2019} to allow for a general tempered stable L\'{e}vy process. Furthermore, we {propose a new approximation for} $\varepsilon^{\star}_{n}$ of the form:
\begin{align}\label{eq:EquepsApprx20}
\wt{\varepsilon}_{n}^{\,\star}:=\sqrt{(2-Y)\sigma^{2}h_{n}\ln\Big(\frac{1}{h_{n}}\Big)+2\sigma^{2}h_{n}\ln\bigg(\frac{(2-\alpha)\sigma}{C}\bigg)},
\end{align}
where $C$ controls the overall intensity of jumps. The approximation \eqref{eq:EquepsApprx20} says that if $C$ is small (relative to $\sigma$) then the threshold can be loosened up (in fact, $\wt{\varepsilon}_{n}^{\,\star}\nearrow\infty$ as $C\searrow 0$ as it should be). In practice $C$ is small compare to $\sigma$ and \eqref{eq:EquepsApprx20} provides a significant correction compare to \eqref{eq:EquepsApprx20a}. We then proceed to {devise} a new method to estimate the volatility, the index of jump activity $Y$, and $C$ by combining a variation of the {approximate semiparametric} method of moments in \cite{Mies:2019}, TRQVs, and the approximate optimal threshold \eqref{eq:EquepsApprx20}. Compared to \cite{Mies:2019} we introduce simpler moment functions $f_{1},\ldots,f_{m}$, and a systematic and objective method to tune the scaling factor $u_{n}$ in \eqref{eq:MF00}. The performance of the proposed procedure is superior to the efficient methods of \cite{JacodTodorov:2014} and \cite{Mies:2019}. Finally, as in \cite{JacodTodorov:2014}, we use a localization technique to estimate the integrated volatility of an It\^{o} semimartingale. Specifically, the idea is to split the time horizon into small blocks where the process is approximately L\'{e}vy and, hence, its volatility level can be estimated using our method. For values of $Y\geq 1.5$, our method outperforms the method proposed by \cite{JacodTodorov:2014}.

The rest of this paper is organized as follows. Section \ref{Sec:Model} provides the framework and assumptions as well as some known preliminary results from the literature. Section \ref{sec:MainThm} obtains the asymptotic behavior of $\bE(b_{1,h_{n}}(\varepsilon))$ and derives \eqref{eq:EquepsApprx20a}. The second-order approximation \eqref{eq:EquepsApprx20} is derived in Section \ref{sec:NumIll} as well as a numerical assessment of the approximations in the case of a CGMY jump component. The new method to estimate the parameters of a tempered stable L\'{e}vy model is presented in Section \ref{Sect:NewEstsigma} together with an analysis of its performance via Monte Carlo simulations. The proofs are deferred to an appendix section.

\section{The Model and Some Preliminary Results}\label{Sec:Model}

Throughout, $\bR_{+}:=[0,\infty)$ and $\bR_{0}:=\bR\backslash\{0\}$, and we let $(\Omega,\sF,\bF,\bP)$ be a complete filtered probability space on which all stochastic processes are defined, where $\bF:=(\sF_{t})_{t\in\bR_{+}}$ satisfies the usual conditions. We consider a L\'{e}vy process $X:=(X_{t})_{t\in\bR_{+}}$ of the form
\begin{align}\label{eq:MnMdlX}
X_{t}=\sigma W_{t}+J_{t},\quad t\in\bR_{+},
\end{align}
where $W:=(W_{t})_{t\in\bR_{+}}$ is a Wiener process and $J:=(J_{t})_{t\in\bR_{+}}$ is an independent pure-jump {tempered stable} L\'{e}vy process with L\'{e}vy triplet $(b,0,\nu)$. The L\'{e}vy measure $\nu$ is assumed to be absolutely continuous with a density $s:\bR_{0}\rightarrow\bR_{+}$ of the form
\begin{align}\label{TmpStbLD}
s(x):=\frac{{\nu(dx)}}{dx}:=\big(C_{+}{\bf 1}_{(0,\infty)}(x)+C_{-}{\bf 1}_{(-\infty,0)}(x)\big)q(x)\,|x|^{-1-Y},\quad x\in\bR_{0}.
\end{align}
Here, $C_{\pm}>0$, $Y\in(1,2)$, and $q:\bR_{0}\rightarrow\bR_{+}$ is a bounded Borel-measurable function. Concretely, we make the following assumptions on $q$.

\begin{assumption}\label{assump:Funtq}
\hfill
\begin{itemize}
\item [(i)] $q(x)\rightarrow 1$, as $x\rightarrow 0$;
\item [(ii)] There exist $\alpha_{\pm}\neq 0$ such that
    \begin{align*}
    \int_{(0,1]}\big|q(x)-1-\alpha_{+}x\big|x^{-Y-1}dx+\int_{[-1,0)}\big|q(x)-1-\alpha_{-}x\big|{\Red |x|^{-Y-1}}dx<\infty;
    \end{align*}
\item [(iii)] $\displaystyle{\limsup_{|x|\rightarrow\infty}\frac{|\ln q(x)|}{|x|}<\infty}$;
\item [(iv)] For any $\varepsilon>0$, $\displaystyle{\inf_{|x|<\varepsilon}q(x)>0}$;
\item [(v)] $\displaystyle{\int_{|x|>1}q(x)^{2}|x|^{-1-Y}dx<\infty}$.
\end{itemize}
\end{assumption}

\begin{remark}
The class of L\'{e}vy processes considered above is sometimes termed \emph{tempered stable} processes (or \emph{tempered-stable-like} processes as in \cite{FigueroaLopezGongHoudre:2016}) and includes a wide range of models appearing in finance. Roughly, the conditions above amount to say that the small jumps of $X$ behave like those of a $Y$-stable L\'{e}vy process. We refer the reader to \cite{FigueroaLopezOlafsson:2019(2)} for further background about this class. The parameter $Y$ is called the \emph{index of jump activity} and coincides with the Blumenthal-Getoor index, which controls the jump activity of $X$ in that $\sum_{s\in(0,t]}|\Delta X_{s}|^{\gamma}<\infty$ for all {$\gamma>Y$} and $t>0$, where $\Delta X_{s}:=X_{s}-X_{s-}$ is the jump of $X$ at time $s$. The range of Y considered here (namely, $Y\in(1,2)$) is the most relevant for financial applications based on several econometric studies of high-frequency financial data (cf. \cite{AitSahaliaJacod:2009} and \cite{FigueroaLopez:2012}) and short-term option pricing data (cf. \cite{FigueroaLopezOlafsson:2019(2)}).
\end{remark}

Using a density transformation technique in \cite[Section 6.33]{Sato:1999}, we can change the probability measure from $\bP$ to another locally absolutely continuous measure $\wt{\bP}$, under which $J$ is a $Y$-stable L\'{e}vy process and $W$ is a standard Brownian motion independent of $J$. Concretely, let
\begin{align*}
\wt{\nu}(dx):=\big(C_{+}{\bf 1}_{(0,\infty)}(x)+C_{-}{\bf 1}_{(-\infty,0)}(x)\big)|x|^{-Y-1}dx,\qquad\wt{b}:=b+\int_{0<|x|\leq 1} x(\tilde{\nu}-\nu)(dx).
\end{align*}
Note that $\wt{\nu}$ is the L\'{e}vy measure of a $Y$-stable L\'{e}vy process and, also,
\begin{align*}
\wt{\nu}(dx)=e^{\varphi(x)}\,\nu(dx),\quad\text{with}\quad\varphi(x):=-\ln q(x).
\end{align*}
Next, define $\wt{\bP}$ such that, for any $t\in\bR_{+}$,
\begin{align}\label{eq:DenTranTildePP}
\ln\left(\frac{d\wt{\bP}\big|_{\sF_{t}}}{d\bP\big|_{\sF_{t}}}\right)=U_{t}:=\lim_{\varepsilon\rightarrow 0}\left(\sum_{s\in(0,t]:|\Delta J_{s}|>\varepsilon}\varphi(\Delta J_{s})+t\int_{|x|>\varepsilon}\big(e^{-\varphi(x)}-1\big)\tilde{\nu}(dx)\right).
\end{align}
By virtue of \cite[Theorem 33.1]{Sato:1999}, a necessary and sufficient condition for the measure transformation from $\bP$ to $\wt{\bP}$ to be well defined is given by
\begin{align*}
\int_{\bR_{0}}\big(e^{\varphi(x)/2}-1\big)^{2}\nu(dx)<\infty,
\end{align*}
which can be shown to follow from Assumption \ref{assump:Funtq}$-$(i) \& (ii) (cf. \cite[Lemma 2.1]{FigueroaLopezOlafsson:2019(2)}). Under $\wt{\bP}$, $J$ is a L\'{e}vy process with L\'{e}vy triplet $(\wt{b},0,\wt{\nu})$, and $W$ is a standard Brownian motion which is independent of $J$. In particular, under $\wt{\bP}$, the centered process $Z:=(Z_{t})_{t\in\bR_{+}}$, given by
\begin{align*}
Z_{t}:=J_{t}-t\wt{\gamma},\quad\wt{\gamma}:=\wt{\bE}(J_{1})=\wt{b}+\int_{|x|>1}x\,\wt{\nu}(dx),
\end{align*}
is a strictly $Y$-stable process with its skewness, scale, and location parameters given by  $(C_{+}-C_{-})/(C_{+}+C_{-})$, $\left\{(C_{+}+C_{-})\Gamma(-Y)|\cos(\pi Y/2)|\right\}^{1/Y}$, and $0$, respectively. Let $p_{Z}$ denote the marginal density of $Z_{1}$ under $\wt{\bP}$. It is well known (cf. \cite[(14.37)]{Sato:1999} and references therein) that
\begin{align*}
p_{Z}(z)\sim C_{\pm}|z|^{-Y-1},\quad\text{as }\,z\rightarrow\pm\infty,\,\,\,\text{respectively},
\end{align*}
so that
\begin{align*}
\wt{\bP}\big(\!\pm\!Z_{1}>z\big)=\frac{{C_{\pm}}}{Y}\,z^{-Y}+O\big(z^{-2Y}\big),\quad z\rightarrow\infty.
\end{align*}

The processes $U:=(U_{t})_{t\in\bR_{+}}$ and $Z$ can be expressed in terms of the jump-measure $N(dt,dx)$ of the process $J$ and its compensator $\wt{N}(dt,dx):=N(dt,dx)-\wt{\nu}(dx)dt$ (under $\wt{\bP}$), as follows:
\begin{align}\label{eq:DecompUpm}
U_{t}&=\wt{U}_{t}+\eta t:=\int_{0}^{t}\int_{\bR_{0}}\varphi(x)\wt{N}(ds,dx)+t\eta,\\
\label{eq:DecompJZpm} J_{t}&=Z_{t}+t\wt{\gamma}:=Z^{+}_{t}+Z^{-}_{t}+t\wt{\gamma},
\end{align}
where
\begin{align*}
Z^{+}_{t}:=\int_{0}^{t}\!\int_{(0,\infty)}x\wt{N}(dt,dx),\quad Z^{-}_{t}:=\int_{0}^{t}\!\int_{(-\infty,0)}x\wt{N}(dt,dx),\quad\eta:=\int_{\bR_{0}}\big(e^{-\varphi(x)}-1+\varphi(x)\big)\wt{\nu}(dx).
\end{align*}
The existence of the integral defining $\eta$ follows from Assumption \ref{assump:Funtq}$-$(i) \& (ii). Clearly, $Z^{+}:=(Z^{+}_{t})_{t\in\bR_{+}}$ and $-Z^{-}:=(-Z^{-}_{t})_{t\in\bR_{+}}$ are independent one-sided $Y$-stable processes with scale, skewness, and location parameters given by $\left(C_{\pm}|\Gamma(-Y)\cos(\pi Y/2)|\right)^{1/Y}$, $1$, and $0$, respectively, so that
\begin{align*}
\wt{\bP}\big(\!\pm\!Z_{1}^{\pm}>z\big)&=\frac{C_{\pm}}{Y}\,z^{-Y}+O\big(z^{-2Y}\big),\quad z\rightarrow\infty,\\
\wt{\bE}\Big(e^{\mp Z_{t}^{\pm}}\Big)&=\exp\bigg(C_{\pm}\Gamma(-Y)\cos\bigg(\frac{\pi Y}{2}\bigg)\text{sgn}(1-Y)t\bigg)<\infty.
\end{align*}
Moreover, it can be shown that (cf. \cite[Lemma 2.1]{FigueroaLopezGongHoudre:2017}) there {\Blue exists a universal constant} $K\in(0,\infty)$, such that for any $z>0$,
\begin{align}\label{eq:1stOrderEstTailUpm}
\wt{\bP}\big(\!\pm\!Z_{1}^{\pm}>z\big)&\leq Kz^{-Y}.
\end{align}
{\Red Combining \eqref{eq:DecompJZpm} and \eqref{eq:1stOrderEstTailUpm}, we deduce that there exists a constant $\wt{K}\in(0,\infty)$ such that, for any $z>0$,
\begin{align}
\label{eq:1stOrderEstTailZ} 
\wt{\bP}\big(\!\pm\!Z_{1}>z\big)&\leq\wt{K}z^{-Y}.
\end{align}
Furthermore, using (14.34) in \cite{Sato:1999} and an argument similar to that in the proof of \cite[Lemma 2.1]{FigueroaLopezGongHoudre:2017}, we can show that:
\begin{align}\label{eq:1stOrderEstDenZ}
p_{Z}(\pm z)&\leq\wt{K}\,z^{-Y-1},\\
\label{eq:2ndOrderEstDenZ} \Big|p_{Z}(\pm z)-C_{\pm} z^{-Y-1}\Big|&\leq\wt{K}\big(z^{-Y-1}\wedge z^{-2Y-1}\big),
\end{align}
where above, without loss of generality, we use the same constant $\widetilde{K}$ as in \eqref{eq:1stOrderEstTailZ}.}

\section{Main Result} \label{sec:MainThm}

The TRQV, defined as
\begin{align}\label{eq:TRV}
\wh{\sigma}^{2}_{n}(\varepsilon)=\frac{1}{T}\sum_{i=1}^{n}\big(\Delta_{i}^{n}X\big)^{2}{\bf 1}_{\{|\Delta_{i}^{n}X|\leq\varepsilon\}},
\end{align}
is one of the most {commonly used estimators} for the integrated volatility of an It\^{o} semimartingale. Above, $\Delta_{i}^{n}X:=X_{t_{i}}-X_{t_{i-1}}$ for $i=1,\ldots,n$, where $X_{t_{0}},X_{t_{1}},\ldots,X_{t_{n}}$ are evenly spaced {observations} of $X$ over a fixed time horizon $[0,T]$, so that $t_{i}=t_{i,n}=ih_{n}$ for $i=0,1,\ldots,n$, with $h_{n}:=T/n$. One of its drawbacks is the necessity of tuning the threshold $\varepsilon$ up, which strongly affects the performance of the estimator. It is shown in \cite{FigueroaLopezMancini:2019} that, for a L\'{e}vy process $X$ with volatility $\sigma>0$, there exists a unique threshold $\varepsilon=\varepsilon^{\star}_{n}$, which minimizes the mean-square error, $\bE((\wh{\sigma}^{2}_{n}(\varepsilon)-\sigma^{2})^{2})$. {Furthermore, the minimizer $\varepsilon^{\star}_{n}$ is such that}
\begin{align}\label{eq:OptTholdgghn}
{\varepsilon^{\star}_{n}\to{}0,}\quad\frac{\varepsilon^{\star}_{n}}{\sqrt{h_{n}}}\rightarrow\infty,\quad\text{as }\,n\rightarrow\infty,
\end{align}
and {solves} the equation
\begin{align}\label{eq:Equeps}
\varepsilon^{2}+2(n-1)\bE\big(b_{1,h_{n}}(\varepsilon)\big)-2T\sigma^{2}=0,
\end{align}
where $b_{1,h_{n}}(\varepsilon):=X_{h_{n}}^{2}{\bf 1}_{\{|X_{h_{n}}|\leq\varepsilon\}}$. Therefore, in order to determine the asymptotic behavior of the optimal threshold $\varepsilon^{\star}_{n}$, we need to study the asymptotic behavior of $\bE(b_{1,h}(\varepsilon))$ as both $h\rightarrow 0+$ and $\varepsilon=\varepsilon(h)\rightarrow 0+$ in such a way that $\varepsilon(h)/\sqrt{h}\to\infty$, as $h\rightarrow 0$. {Our main theoretical result} accomplishes this for the tempered stable L\'{e}vy processes of Section \ref{Sec:Model}, {and its} proof is deferred to Appendix \ref{sec:ProofLemAsymb1eps}.

\begin{theorem}\label{lem:Asymb1eps}
Under Assumption \ref{assump:Funtq}, we have
\begin{align*}
\bE\big(b_{1,h}(\varepsilon)\big)\!=\!\sigma^{2}h\!-\!\frac{\sigma\varepsilon\sqrt{2h}}{\sqrt{\pi}}e^{-\varepsilon^{2}/(2\sigma^{2}h)}\!+\!\frac{C_{+}\!\!+\!C_{-}}{2-Y}h\varepsilon^{2-Y}\!+\!O\Big(he^{-\varepsilon^{2}/(2\sigma^{2}h)}\Big)\!+\!O\big(h\varepsilon^{2-Y/2}\big)\!+\!O\big(h^{2-Y/2}\big),
\end{align*}
as $h\rightarrow 0+$ and $\varepsilon=\varepsilon(h)\rightarrow {0^{+}}$, with $\varepsilon/\sqrt{h}\rightarrow\infty$.
\end{theorem}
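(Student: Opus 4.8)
\emph{Proof plan.} The strategy is to reduce, via the locally equivalent change of measure of Section~\ref{Sec:Model}, to the strictly $Y$-stable case — where the marginal density estimates \eqref{eq:1stOrderEstDenZ}--\eqref{eq:2ndOrderEstDenZ} apply — and then to extract the asymptotics of $\bE(b_{1,h}(\varepsilon))=\int_{-\varepsilon}^{\varepsilon}y^{2}p_{X_h}(y)\,dy$ from a sharp description of the density of $X_h$ near $0$ at the scales $\sqrt h\ll\varepsilon\ll 1$. Using $d\bP|_{\sF_h}/d\wt\bP|_{\sF_h}=e^{-U_h}$ and \eqref{eq:DecompJZpm}, write
\[
\bE\big(b_{1,h}(\varepsilon)\big)=\wt\bE\big(b_{1,h}(\varepsilon)\big)+\wt\bE\big((e^{-U_h}-1)\,b_{1,h}(\varepsilon)\big),
\]
where, under $\wt\bP$, $X_h=\sigma W_h+Z_h+\wt\gamma h$ with $\sigma W_h\sim\cN(0,\sigma^{2}h)$ independent of the strictly $Y$-stable $Z_h\ed h^{1/Y}Z_1$.

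For the first (``stable'') term, $\wt\bE(b_{1,h}(\varepsilon))=\int_{-\varepsilon}^{\varepsilon}y^{2}(\phi_{\sigma^{2}h}\ast p_{Z_h})(y-\wt\gamma h)\,dy$, where $\phi_v$ is the $\cN(0,v)$ density. Rescaling \eqref{eq:1stOrderEstDenZ}--\eqref{eq:2ndOrderEstDenZ} gives $p_{Z_h}(s)\le\wt K h|s|^{-1-Y}$ for all $s$ and $p_{Z_h}(s)=(C_{+}\mathbf 1_{s>0}+C_{-}\mathbf 1_{s<0})h|s|^{-1-Y}+O(h^{2}|s|^{-1-2Y})$ for $|s|\ge h^{1/Y}$. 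Splitting $p_{Z_h}$ into a ``core'' on $\{|s|\le a_h\}$ (for a suitable $h^{1/Y}\ll a_h\ll\sqrt h$), a power-law part on $\{a_h<|s|\le 2\varepsilon\}$, and a far tail on $\{|s|>2\varepsilon\}$ (whose convolution with $\phi_{\sigma^{2}h}$ contributes only $O(h\,e^{-\varepsilon^{2}/(2\sigma^{2}h)})$ on $[-\varepsilon,\varepsilon]$), and using that $\phi_{\sigma^{2}h}$ acts as an approximate identity at scale $a_h$ and as a slowly varying factor at scale $\varepsilon$, one obtains for $|y|\le\varepsilon$
\[
(\phi_{\sigma^{2}h}\ast p_{Z_h})(y)=\phi_{\sigma^{2}h}(y)+\big(C_{+}\mathbf 1_{y>0}+C_{-}\mathbf 1_{y<0}\big)h|y|^{-1-Y}+(\text{error}).
\]
Integrating $y^{2}$ against the first term yields $\int_{-\varepsilon}^{\varepsilon}y^{2}\phi_{\sigma^{2}h}(y)\,dy=\sigma^{2}h-\bE[(\sigma W_h)^{2}\mathbf 1_{|\sigma W_h|>\varepsilon}]=\sigma^{2}h-\frac{\sigma\varepsilon\sqrt{2h}}{\sqrt\pi}e^{-\varepsilon^{2}/(2\sigma^{2}h)}+O(h\,e^{-\varepsilon^{2}/(2\sigma^{2}h)})$, the remainder being $2\sigma^{2}h\,\overline\Phi(\varepsilon/(\sigma\sqrt h))$, controlled by the Gaussian tail bound and $\varepsilon/\sqrt h\to\infty$; the second term gives $h(C_{+}+C_{-})\int_{0}^{\varepsilon}y^{1-Y}\,dy=\frac{C_{+}+C_{-}}{2-Y}h\varepsilon^{2-Y}$. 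The drift shift $\wt\gamma h=O(h)$ is removed by an integration by parts together with $\wt\bE(\sigma W_h+Z_h)=0$ and the symmetry of $\phi_{\sigma^{2}h}$ (the odd Gaussian contribution cancels), at cost $O(h^{2}\varepsilon^{1-Y})$; the remaining contributions (from the second moment of the core, the missing-mass terms $O(h^{2}\varepsilon^{-Y})$, the $O(h^{2})$ from the nonzero mean, and so on) are absorbed into $O(h\varepsilon^{2-Y/2})+O(h^{2-Y/2})$ using $\sqrt h\ll\varepsilon\ll 1$ and $Y<2$.

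The term $\wt\bE((e^{-U_h}-1)b_{1,h}(\varepsilon))$ — the error of replacing the tempered-stable law by the stable one — is the main obstacle: the crude bound $\le\varepsilon^{2}\,\wt\bE|e^{-U_h}-1|=O(\varepsilon^{2}\sqrt h)$ is too coarse, so cancellation must be used. I would localize on the event $A_0$ that $J$ has no jump of size $>1$ on $[0,h]$ ($\wt\bP(A_0^{c})=O(h)$, its complement contributing $O(h\varepsilon^{2})$), on which $U_h=\wt U_h^{\le 1}+O(h)$ with $\wt U_h^{\le 1}:=\int_{0}^{h}\!\int_{|x|\le 1}\varphi\,\wt N$ a square-integrable martingale, $\wt\bE((\wt U_h^{\le 1})^{2})=h\!\int_{|x|\le 1}\varphi^{2}\,\wt\nu=O(h)$ — finite because Assumption~\ref{assump:Funtq}$-$(ii) forces $\varphi(x)=-\ln q(x)=O(x)$ near $0$. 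Writing $e^{-u}=1-u+R(u)$ with $0\le R(u)\le Cu^{2}$ for $|u|\le 1$, the linear term equals $-\wt\bE(\wt U_h^{\le 1}(b_{1,h}(\varepsilon)-\wt\bE b_{1,h}(\varepsilon)))$ (using $\wt\bE\wt U_h^{\le 1}=0$), and Cauchy--Schwarz together with the companion estimate $\wt{\mathrm{Var}}(b_{1,h}(\varepsilon))\le\wt\bE(X_h^{4}\mathbf 1_{|X_h|\le\varepsilon})=O(h^{2}+h\varepsilon^{4-Y})$ (from the crude density bound $(\phi_{\sigma^{2}h}\ast p_{Z_h})(u)\le C[\phi_{\sigma^{2}h}(u)+h|u|^{-1-Y}]$) makes it $O(h^{3/2}+h\varepsilon^{2-Y/2})$ — this is precisely the origin of the error $O(h\varepsilon^{2-Y/2})$. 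The remainder $R(\wt U_h^{\le 1})$ is controlled via a fractional moment, $\wt\bE|\wt U_h^{\le 1}|^{p}=O(h)$ for some $p\in(Y,2)$ (finite, again by Assumption~\ref{assump:Funtq}$-$(ii)), whence $C\varepsilon^{2}\,\wt\bE|\wt U_h^{\le 1}|^{p}=O(h\varepsilon^{2})$, with a separate Markov / exponential-moment estimate on the rare event $\{|\wt U_h^{\le 1}|>1\}$ and the $O(h)$ drift handled as above. Combining the two parts yields the claimed expansion.

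The two genuinely delicate points are: (a) the uniform density expansion across the transition between the Gaussian core (scale $\sqrt h$) and the heavy jump tail, where the exponents of all error terms get pinned down and where it is easy to be lossy; and (b) the remainder step above, where the Poisson-type moments of $\wt U_h^{\le 1}$ scale like $h$ rather than like a Gaussian's $h^{k}$ for the $(2k)$-th moment, so a fractional-moment and truncation argument is required in place of a plain second-moment bound.
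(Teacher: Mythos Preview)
Your route is genuinely different from the paper's. The paper expands $(\sigma W_h+J_h)^2$ first and then, for the $W_h^2$ piece, conditions on $J_h$ and evaluates $\wt\bE[\phi((\varepsilon\pm J_h)/(\sigma\sqrt h))]$, $\wt\bE[J_h\phi(\cdot)]$ and $\wt\bE[\overline\Phi(\cdot)]$ \emph{exactly} via the characteristic function of the stable law, expanding the resulting integrals into series whose terms are Kummer functions with known large-argument asymptotics; this is what pins down the coefficient $\frac{\sigma\varepsilon\sqrt{2h}}{\sqrt\pi}$ and all error exponents simultaneously. The $J_h^2$ piece is handled by direct integration against the stable density estimates (much as you propose), and the $e^{-\wt U_h}-1$ factor is treated termwise --- for the $W_h^2$ piece via the decomposition $\wt U_h=\wt U_h^{\mathrm{BV}}-\alpha_+Z_h^+-\alpha_-Z_h^-$ coming from Assumption~\ref{assump:Funtq}(ii), and for the $J_h^2$ piece via Cauchy--Schwarz against the truncated fourth moment, exactly your idea. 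So your global Cauchy--Schwarz for the tempering error is sound and in fact slightly cleaner than the paper's Step~1.2; note also that the fractional-moment detour is unnecessary, since $R(u)=e^{-u}-1+u\ge0$ with $\wt\bE[R(\wt U_h)]=O(h)$ and $|b_{1,h}|\le\varepsilon^2$ already give $O(h\varepsilon^2)=o(h\varepsilon^{2-Y/2})$.

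The part that does not go through as written is the pointwise density expansion $(\phi_{\sigma^2h}\ast p_{Z_h})(y)\approx\phi_{\sigma^2h}(y)+C_\pm h|y|^{-1-Y}$: the right-hand side blows up at $y=0$, so this can only be meant in an integrated sense, and making that precise is exactly the hard work. Concretely, with a cut $h^{1/Y}\ll a_h\ll\sqrt h$ your ``approximate identity'' for the core is not uniform near $|y|=\varepsilon$ (the shift $s$ enters $\phi_{\sigma^2h}(\varepsilon-s)$ through $e^{s\varepsilon/(\sigma^2h)}$, and $a_h\varepsilon/h$ need not be small), while the middle piece produces a $\sigma^2h\cdot\wt\bP(|Z_h|>a_h)=O(h^2a_h^{-Y})$ term that forces $a_h\gtrsim\sqrt h$; the two constraints collide. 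These boundary contributions are in fact all exponentially small once combined correctly, and a careful treatment of the transition strip $||s|-\varepsilon|\lesssim\sqrt{h\log(1/h)}$ does give $O(h^{2-Y/2})$, but your sketch does not supply this and the splitting as stated would not yield the sharp Gaussian coefficient without it. The paper sidesteps the whole issue by never splitting $p_{Z_h}$: the Fourier computation handles all scales of $J_h$ at once and delivers the needed asymptotics through the Kummer-function expansion.
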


The following result gives the asymptotic behavior of the optimal threshold $\varepsilon_{n}^{\star}$. Its proof is similar to that of \cite[Proposition 2]{FigueroaLopezMancini:2019} and is outline below for completeness and also to motivate some approximation methods proposed below.

\begin{corollary}\label{thm:AsymOptThold}
Under Assumption \ref{assump:Funtq}, the optimal threshold $\varepsilon^{\star}_{n}$ is such that
\begin{align}\label{eq:AsymOptThold}
\varepsilon^{\star}_{n}\sim\sqrt{(2-Y)\sigma^{2}h_{n}\ln\frac{1}{h_{n}}},\quad\text{as }\,n\rightarrow\infty.
\end{align}
{Furthermore, setting {$\overline{C}=(C_{+}+C_{-})/2$}, we have, {as $n\rightarrow\infty$,}}
\begin{align}\label{eq:AsymOptTholdbb}
{\varepsilon^{\star}_{n}\!=\!\sqrt{\sigma^{2}h_{n}\!\left[(2\!-\!Y)\ln\!\frac{1}{h_{n}}\!+\!(Y\!-\!1)\ln\ln\!\frac{1}{h_{n}}\!+\!(Y\!-\!1)\ln\!\big((2\!-\!Y)\sigma^{2}\big)\!+\!2\ln\!\bigg(\!\frac{(2\!-\!Y)\sigma}{\overline{C}\sqrt{2\pi}}\!\bigg)\!+\!o(1)\right]}.}\qquad
\end{align}
\end{corollary}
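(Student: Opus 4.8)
The plan is to start from the implicit characterization of $\varepsilon^{\star}_{n}$ recalled in \eqref{eq:Equeps}--\eqref{eq:OptTholdgghn} (quoted from \cite{FigueroaLopezMancini:2019}, applicable since $X$ is a L\'evy process with $\sigma>0$), substitute the expansion of Theorem \ref{lem:Asymb1eps}, and then solve the resulting implicit equation for $\varepsilon^{\star}_{n}$ asymptotically by a two-pass bootstrap. Writing $n=T/h_{n}$, so that $2(n-1)h_{n}=2T-2h_{n}\to 2T$, and plugging Theorem \ref{lem:Asymb1eps} with $h=h_{n}$ and $\varepsilon=\varepsilon^{\star}_{n}$ into \eqref{eq:Equeps}, the contribution $2(n-1)\sigma^{2}h_{n}=2T\sigma^{2}-2\sigma^{2}h_{n}$ cancels the $-2T\sigma^{2}$ in \eqref{eq:Equeps}, leaving
\begin{align*}
(\varepsilon^{\star}_{n})^{2}-2\sigma^{2}h_{n}-\frac{2\sqrt{2}\,T\sigma\,\varepsilon^{\star}_{n}}{\sqrt{\pi h_{n}}}\,e^{-(\varepsilon^{\star}_{n})^{2}/(2\sigma^{2}h_{n})}\big(1+o(1)\big)+\frac{2T(C_{+}+C_{-})}{2-Y}(\varepsilon^{\star}_{n})^{2-Y}\big(1+o(1)\big)+R_{n}=0,
\end{align*}
where $R_{n}$ collects the three $O(\cdot)$ remainders of Theorem \ref{lem:Asymb1eps} multiplied by $2(n-1)$.

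The first task is to verify that $R_{n}$, along with $(\varepsilon^{\star}_{n})^{2}$ and $2\sigma^{2}h_{n}$, is negligible next to the other two terms. Using $(n-1)h_{n}\le T$ and $\varepsilon^{\star}_{n}/\sqrt{h_{n}}\to\infty$ one gets $2(n-1)\,O(h_{n}e^{-(\varepsilon^{\star}_{n})^{2}/(2\sigma^{2}h_{n})})=o(\varepsilon^{\star}_{n}h_{n}^{-1/2}e^{-(\varepsilon^{\star}_{n})^{2}/(2\sigma^{2}h_{n})})$; since $\varepsilon^{\star}_{n}\to0$, $2(n-1)\,O(h_{n}(\varepsilon^{\star}_{n})^{2-Y/2})=O((\varepsilon^{\star}_{n})^{2-Y/2})=o((\varepsilon^{\star}_{n})^{2-Y})$; and $2(n-1)\,O(h_{n}^{2-Y/2})=O(h_{n}^{1-Y/2})$, which will be seen to be $o((\varepsilon^{\star}_{n})^{2-Y})$ once the first-order rate is in hand. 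Likewise $(\varepsilon^{\star}_{n})^{2}=o((\varepsilon^{\star}_{n})^{2-Y})$ and $h_{n}=o((\varepsilon^{\star}_{n})^{2-Y})$. Hence, to leading order, the equation reduces to a balance between the exponential term and the $(\varepsilon^{\star}_{n})^{2-Y}$ term; solving for the exponential and taking $-\ln$ yields
\begin{align*}
\frac{(\varepsilon^{\star}_{n})^{2}}{2\sigma^{2}h_{n}}=\frac12\ln\frac{1}{h_{n}}+(Y-1)\ln\varepsilon^{\star}_{n}+\ln\!\bigg(\frac{(2-Y)\sigma}{\overline{C}\sqrt{2\pi}}\bigg)+o(1).
\end{align*}

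To get \eqref{eq:AsymOptThold} one bootstraps this identity: positing $(\varepsilon^{\star}_{n})^{2}=c\,\sigma^{2}h_{n}\ln(1/h_{n})\,(1+o(1))$ for some $c>0$ gives $\ln\varepsilon^{\star}_{n}=-\tfrac12\ln(1/h_{n})+O(\ln\ln(1/h_{n}))$, and feeding this in and matching the $\ln(1/h_{n})$-coefficients forces $c/2=(1-(Y-1))/2$, i.e. $c=2-Y$, which is \eqref{eq:AsymOptThold}. (This also gives $(\varepsilon^{\star}_{n})^{2-Y}\asymp h_{n}^{1-Y/2}(\ln(1/h_{n}))^{1-Y/2}$, so $O(h_{n}^{1-Y/2})=o((\varepsilon^{\star}_{n})^{2-Y})$, closing the last gap above.) For the refined expansion \eqref{eq:AsymOptTholdbb} one runs a second pass: the first-order rate gives
\begin{align*}
2\ln\varepsilon^{\star}_{n}=\ln\big((2-Y)\sigma^{2}\big)-\ln\frac{1}{h_{n}}+\ln\ln\frac{1}{h_{n}}+o(1),
\end{align*}
and substituting this into the displayed identity for $(\varepsilon^{\star}_{n})^{2}/(2\sigma^{2}h_{n})$, collecting the $\ln(1/h_{n})$, $\ln\ln(1/h_{n})$ and constant parts, and multiplying through by $2\sigma^{2}h_{n}$ reproduces exactly \eqref{eq:AsymOptTholdbb}, all residual contributions (the $o(1)$ above, $R_{n}/(\sigma^{2}h_{n}\ln(1/h_{n}))$, and the $(\varepsilon^{\star}_{n})^{2}$ and $2\sigma^{2}h_{n}$ pieces after the same normalization) being absorbed into the final $o(1)$.

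The main obstacle is the second bootstrap pass: extracting the $\ln\ln(1/h_{n})$ term and a genuine $o(1)$ in \eqref{eq:AsymOptTholdbb} requires controlling $\ln\varepsilon^{\star}_{n}$ up to $o(1)$, which in turn needs the multiplicative $1+o(1)$ refinement of the first-order rate rather than just $\varepsilon^{\star}_{n}\asymp\sqrt{h_{n}\ln(1/h_{n})}$; that is precisely why two passes are needed. Rigor then reduces to the (routine but careful) bookkeeping sketched above --- checking that each remainder, divided by $\sigma^{2}h_{n}\ln(1/h_{n})$, tends to $0$ --- exactly along the lines of the proof of \cite[Proposition 2]{FigueroaLopezMancini:2019}.
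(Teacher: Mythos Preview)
Your proposal is correct and follows essentially the same route as the paper's proof: substitute the expansion of Theorem \ref{lem:Asymb1eps} into \eqref{eq:Equeps}, identify the leading balance between the exponential term and the $(\varepsilon_{n}^{\star})^{2-Y}$ term, take logarithms, and bootstrap twice. The paper normalizes $T=1$ and writes the intermediate identity in terms of the variable $(\varepsilon_{n}^{\star})^{2}/(\sigma^{2}h_{n})$ rather than $\varepsilon_{n}^{\star}$ itself (see their equation \eqref{eq:EquAsymOpteps}), but after rearrangement your displayed identity for $(\varepsilon_{n}^{\star})^{2}/(2\sigma^{2}h_{n})$ coincides with theirs, and the two-pass refinement is identical. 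One small remark: your ``closing the last gap'' appeal to the first-order rate to justify $O(h_{n}^{1-Y/2})=o((\varepsilon_{n}^{\star})^{2-Y})$ is unnecessary, since $h_{n}^{1-Y/2}=(\sqrt{h_{n}})^{2-Y}$ and $\varepsilon_{n}^{\star}/\sqrt{h_{n}}\to\infty$ from \eqref{eq:OptTholdgghn} already gives this directly, avoiding any circularity.
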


\begin{proof}
For simplicity, we take $T=1$ so that $h_{n}=1/n$. {With $\overline{C}=(C_{+}+C_{-})/2$ and using} the asymptotic behavior of $\bE(b_{1,h_{n}}(\varepsilon_{n}^{\star}))$ described in {Theorem} \ref{lem:Asymb1eps}, we can write \eqref{eq:Equeps} as
\begin{align*}
(\varepsilon_{n}^{\star})^{2}+2(n-1)\bigg(\sigma^{2}h_{n}-\frac{\sqrt{2}\sigma}{\sqrt{\pi}}\varepsilon_{n}^{\star}\sqrt{h_{n}}e^{-(\varepsilon_{n}^{*})^{2}/(2\sigma^{2}h_{n})}+\frac{{2\overline{C}}}{2-Y}h_{n}(\varepsilon_{n}^{\star})^{2-Y}+\text{h.o.t.}\bigg)\!-2nh_{n}\sigma^{2}=0,
\end{align*}
where h.o.t. means ``higher-order terms" {as $n\rightarrow\infty$}. In view of \eqref{eq:OptTholdgghn} and since $Y\in(1,2)$, we have
\begin{align*}
\frac{{2\overline{C}}}{2-Y}(\varepsilon_{n}^{\star})^{2-Y}-\frac{\sqrt{2}\,\sigma}{\sqrt{\pi}}\frac{\varepsilon_{n}^{\star}}{\sqrt{h_{n}}}\,e^{-(\varepsilon_{n}^{*})^{2}/(2\sigma^{2}h_{n})}+o\bigg(\frac{\varepsilon_{n}^{\star}}{\sqrt{h_{n}}}\,e^{-(\varepsilon_{n}^{*})^{2}/(2\sigma^{2}h_{n})}\bigg)+o\big((\varepsilon_{n}^{*})^{2-Y}\big)=0.
\end{align*}
Dividing by $\varepsilon_{n}^{*}$, rearranging the terms, and taking logarithms of both sides, we deduce that
\begin{align*}
(1-Y)\ln\varepsilon_{n}^{*}+o(1)= -\frac{(\varepsilon_{n}^{\star})^{2}}{2\sigma^{2}h_{n}}-\frac{1}{2}\ln h_{n}+\ln\bigg(\frac{\sqrt{2}\,\sigma(2-Y)}{{2\overline{C}}\sqrt{\pi}}\bigg)+o(1),
\end{align*}
{which can be written as
\begin{align}\label{eq:EquAsymOpteps}
\frac{(\varepsilon_{n}^{\star})^{2}}{\sigma^{2}h_{n}}+(1-Y)\ln\bigg(\frac{(\varepsilon_{n}^{\star})^{2}}{\sigma^{2}h_{n}}\bigg)+(1-Y)\ln\big(\sigma^{2}\big)+(2-Y)\ln h_{n}-2\ln\bigg(\frac{\sigma(2-Y)}{\overline{C}\sqrt{2\pi}}\bigg)=o(1).
\end{align}
Dividing by $(\varepsilon_{n}^{*})^{2}/(\sigma^{2}h_{n})$ and using \eqref{eq:OptTholdgghn}, we obtain the first result \eqref{eq:AsymOptThold}. For {the} second asymptotics, note that \eqref{eq:AsymOptThold} implies that
\begin{align*}
\ln\bigg(\frac{(\varepsilon_{n}^{\star})^{2}}{\sigma^{2}h_{n}}\bigg)=\ln\left((2-Y)\ln\frac{1}{h_{n}}\right)+o(1).
\end{align*}	
Finally, plugging the above in \eqref{eq:EquAsymOpteps} and solving for $\varepsilon_{n}^{\star}$ gives the desired asymptotics.}
\end{proof}

The proportionality constant $\sqrt{2-Y}$ of the previous result is intuitive and roughly tells us that the higher the jump activity is, the lower the optimal threshold has to be if we want to discard the higher noise represented by the jumps and to catch information about the Brownian component.

\section{Other Approximations and Illustration for a CGMY Model}\label{sec:NumIll}

In this section, we introduce other approximations to the optimal threshold derived from the formulas in Theorem \ref{lem:Asymb1eps} and the proof of Corollary \ref{thm:AsymOptThold}. We then illustrate their performance in the case of a L\'{e}vy process with a CGMY jump component $J$ (cf. \cite{CarrGemanMadanYor:2002}). The CGMY model is considered a prototypical jump process of infinite activity in finance. In the notation of the L\'{e}vy density \eqref{TmpStbLD}, a CGMY model is given by
\begin{align*}
q(x)=e^{-Mx}{\bf 1}_{(0,\infty)}(x)+e^{Gx}{\bf 1}_{(-\infty,0)}(x)\quad\text{and}\quad C_{+}=C_{-}=C.
\end{align*}
Thus, the conditions of Assumption \ref{assump:Funtq} are satisfied with $\alpha_{+}=-M$ and $\alpha_{-}=G$.
We adopt the parameter setting
\begin{align}\label{USPH0}
C=0.028,\quad G=2.318,\quad M=4.025,\quad Y=1.35.
\end{align}
These values are similar to those used in \cite{FigueroaLopezOlafsson:2019(2)}\footnote{\cite{FigueroaLopezOlafsson:2019(2)} considers the asymmetric case $\nu(dx)=C(x/|x|)\bar{q}(x)|x|^{-1-Y}\,dx$ with $C(1)=0.015$ and $C(-1)=0.041$. Here, we take $C=(C(1)+C(-1))/2$ in order to simplify the simulation of the model. Our values of $G$, $M$, and $Y$ are the same as in \cite{FigueroaLopezOlafsson:2019(2)}.}, who themselves took them from an empirical study in \cite{Kawai:2010}. We take $T=1$ year and $n=252(6.5)(60)$, which corresponds to a frequency of $1$ minute (assuming $252$ trading days and $6.5$ trading hours per day).

To compute $\bE(b_{1,h}(\varepsilon))$, we use Monte Carlo and the change of probability measure \eqref{eq:DenTranTildePP}. Concretely, under $\wt{\bP}$, we have the following representation:
\begin{align*}
\bE\big(b_{1,h}(\varepsilon)\big)&=\wt{\bE}\Big(e^{-U_{h}}\big(\sigma W_{h}+J_{h}\big)^{2}\,{\bf 1}_{\{|\sigma W_{h}+J_{h}|\leq\varepsilon\}}\Big)\\
&=\wt{\bE}\Big(e^{-MZ_{h}^{+}+GZ_{h}^{-}-\eta h}\big(\sigma W_{h}+Z_{h}^{+}+Z_{h}^{-}+\wt{\gamma}h\big)^{2}{\bf 1}_{\{|\sigma W_{h}+Z_{h}^{+}+Z_{h}^{-}+\wt{\gamma}h|\leq\varepsilon\}}\Big),
\end{align*}
where $Z_{h}^{+}$ and $-Z_{h}^{-}$ are independent one-sided $Y$-stable random variables with common scale, skewness, and location parameters given by $C|\Gamma(-Y)\cos(\pi Y/2)|h^{1/Y}$, $1$, and $0$, respectively. Such a distribution can be simulated efficiently\footnote{In our code, we use the R package {\tt stabledist} to generate them.}.

We consider two different approximations of the equation \eqref{eq:Equeps} defining the optimal threshold $\varepsilon^{\star}_{n}$. For the first approximation, we replace {$\bE(b_{1}(\varepsilon,h_{n}))$ (where $h_{n}=1/n$)} in \eqref{eq:Equeps} with its leading order terms as given by Theorem \ref{lem:Asymb1eps}, namely,
\begin{align}\label{eq:EquepsApprx1}
\varepsilon^{2}+2(n-1)\bigg({-\frac{\sqrt{2}\sigma}{\sqrt{\pi}}}\varepsilon\sqrt{h_{n}}e^{-\varepsilon^{2}/(2\sigma^{2}h_{n})}+\frac{2C}{2-Y}h_{n}\varepsilon^{2-Y}\bigg)-{2\sigma^{2}h_{n}}=0.
\end{align}
For the second approximation, we take a simplified version of \eqref{eq:AsymOptTholdbb}, only keeping those terms that are found to be {significant:}
\begin{align}\label{eq:EquepsApprx2}
\wt{\varepsilon}_{n}^{\,\star}:=\sqrt{(2-Y)\sigma^{2}h_{n}\ln\Big(\frac{1}{h_{n}}\Big)+2\sigma^{2}h_{n}\ln\bigg(\frac{(2-Y)\sigma}{C}\bigg)}.
\end{align}
{Interestingly, as $C\rightarrow 0$, we} have $\wt{\varepsilon}_{n}^{\,\star}\rightarrow\infty$, which makes sense. The approximation \eqref{eq:EquepsApprx2} says that if $C$ is small (relative to $\sigma$) then the threshold can be loosened up.

Figure \ref{Fig:ApproxNum1} shows the graphs of the left-hand expressions of \eqref{eq:Equeps} (solid blue) and the approximation \eqref{eq:EquepsApprx1} (dashed red) against $\varepsilon$ for three different values of $\sigma$: $0.1$, $0.2$, and $0.4$. The solid blue vertical line is the ``true" optimum threshold $\varepsilon=\varepsilon^{\star}_{n}$, the dotted brown vertical line shows $\varepsilon=\wt{\varepsilon}_{n}^{\,\star}$ with $\wt{\varepsilon}_{n}^{\,\star}$ given as in approximation (\ref{eq:EquepsApprx2}), and the dotted/dashed vertical green line is the approximation $\varepsilon=\varepsilon_{n}:=\sqrt{(2-Y)\sigma^{2}h_{n}\ln(1/h_{n})}$ derived in {\eqref{eq:AsymOptThold} of} Corollary \ref{thm:AsymOptThold}. We also show the vertical line passing at the root of \eqref{eq:EquepsApprx1} (vertical dashed red). It is evident that for the considered values of $Y$ and $\sigma$, the root of \eqref{eq:EquepsApprx1} and $\wt{\varepsilon}_{n}^{\,\star}$ are reasonably good approximations of $\varepsilon^{\star}_{n}$. However, we cannot say the same about $\varepsilon_{n}=\sqrt{(2-Y)\sigma^{2}h_{n}\ln(1/h_{n})}$, which is a good approximation of $\varepsilon^{\star}_{n}$ only for small values of $\sigma$ and, otherwise, it underestimates $\varepsilon^{\star}_{n}$.

\begin{figure}[h!]
\vspace{-0.5cm}
\begin{center}
\hspace{-0.5cm}
\includegraphics[width=6cm,height=6cm]{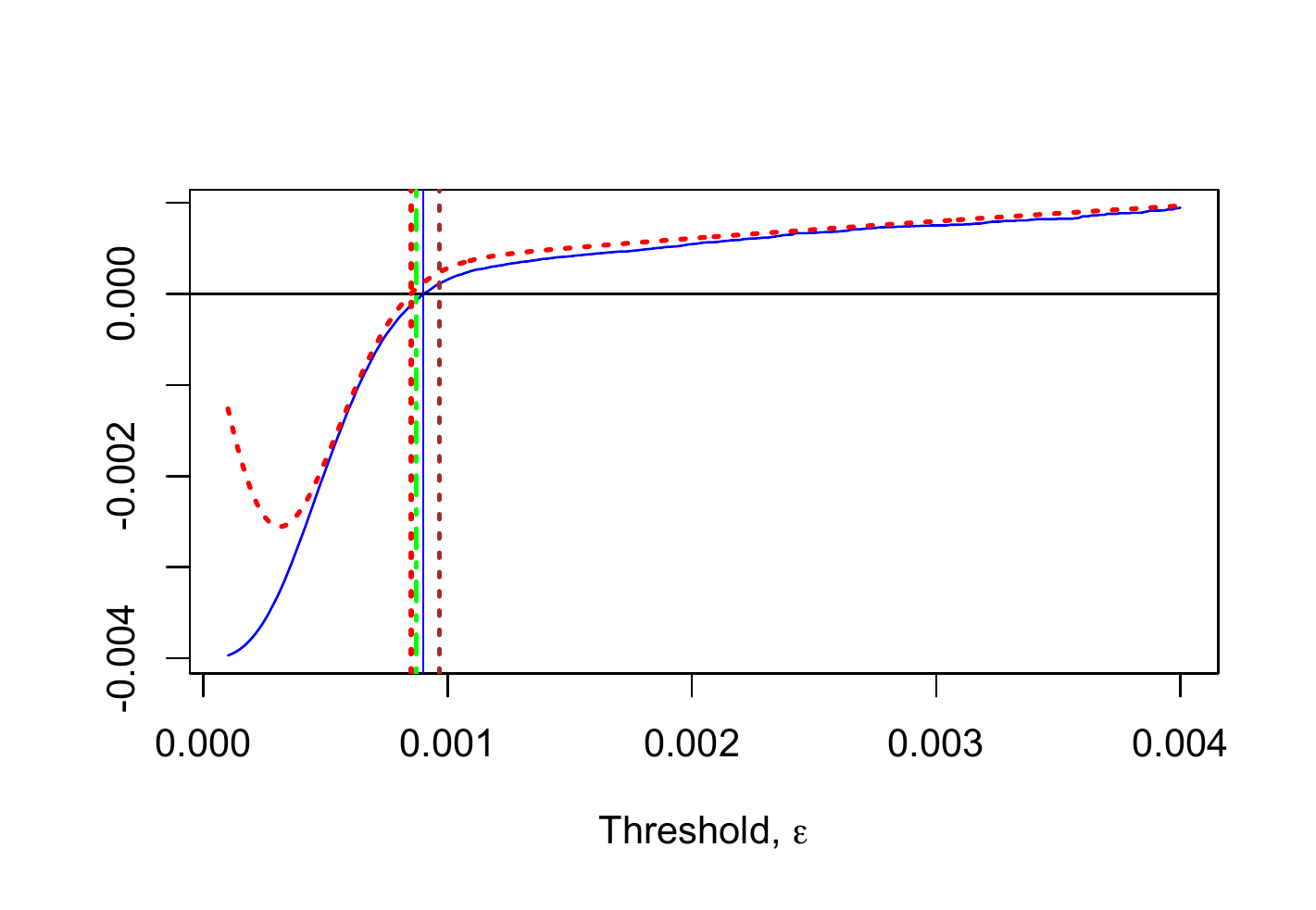}
\hspace{-0.8cm}
\includegraphics[width=6cm,height=6cm]{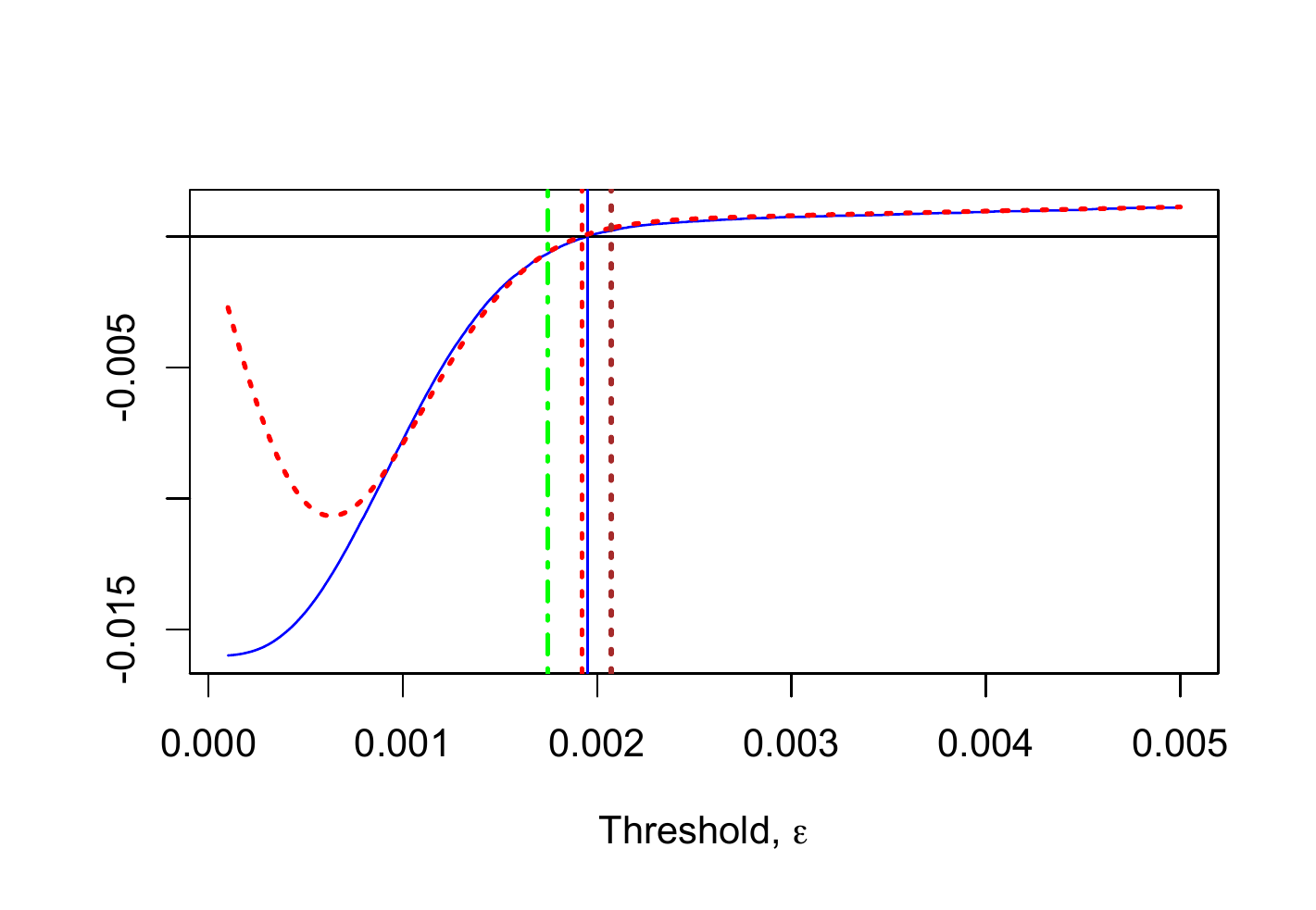}
\hspace{-0.8cm}
\includegraphics[width=6cm,height=6cm]{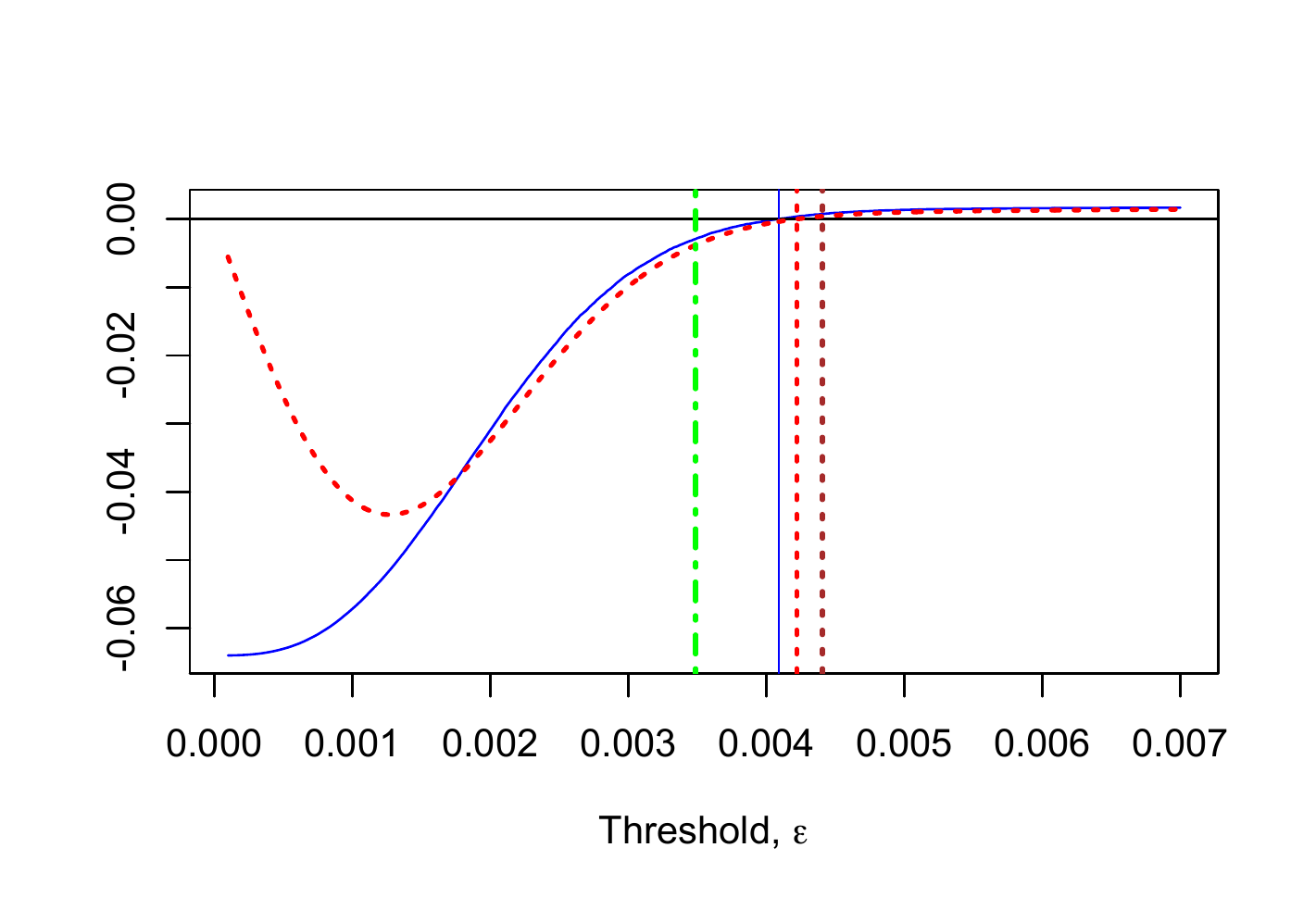}
\end{center}
\vspace{-0.5cm}
\caption{\small Graphs of the respective left-hand expressions of \protect\eqref{eq:Equeps} (solid blue) and \protect\eqref{eq:EquepsApprx1} (dashed red) against $\varepsilon$ for $\sigma=0.1$ (left panel), $\sigma=0.2$ (center panel), and $\sigma=0.4$ (right panel), respectively. We also show the vertical lines $\varepsilon=\varepsilon^{\star}_{n}$ (solid blue), $\varepsilon=\text{the root of \protect\eqref{eq:EquepsApprx1}}$ (dashed red), $\varepsilon=\wt{\varepsilon}^{\,\star}_{n}$ (dotted brown), and $\varepsilon=\sqrt{(2-Y)\sigma^{2}h_{n}\ln(1/h_{n})}$ (dotted/dashed green). The parameters for the CGMY model are set as $C=0.028$, $G=2.318$, $M=4.025$, and $Y=1.35$.}\label{Fig:ApproxNum1}
\end{figure}

Next, we consider the value of $Y=1.5$, while all the other CGMY parameter values remain unchanged. Figure \ref{Fig:ApproxNum2} below shows the graphs of the left-hand expressions of \eqref{eq:Equeps} (solid blue) and  \eqref{eq:EquepsApprx1} (dashed red), against $\varepsilon$ for three different values of $\sigma$: $0.1$, $0.2$, and $0.4$. The Equation \eqref{eq:EquepsApprx1} derived from Theorem \ref{lem:Asymb1eps} is a relatively accurate approximation of \eqref{eq:Equeps}, especially for larger values of $\sigma$. As before, the approximation \eqref{eq:AsymOptThold} established in Corollary \ref{thm:AsymOptThold} is accurate for small and medium values of $\sigma$ but not for larger values. The approximation \eqref{eq:EquepsApprx2} is reasonably accurate for all considered values of $\sigma$.

\begin{figure}[h!]
\vspace{-0.5cm}
\begin{center}
\hspace{-0.5cm}	
\includegraphics[width=6.0cm,height=6cm]{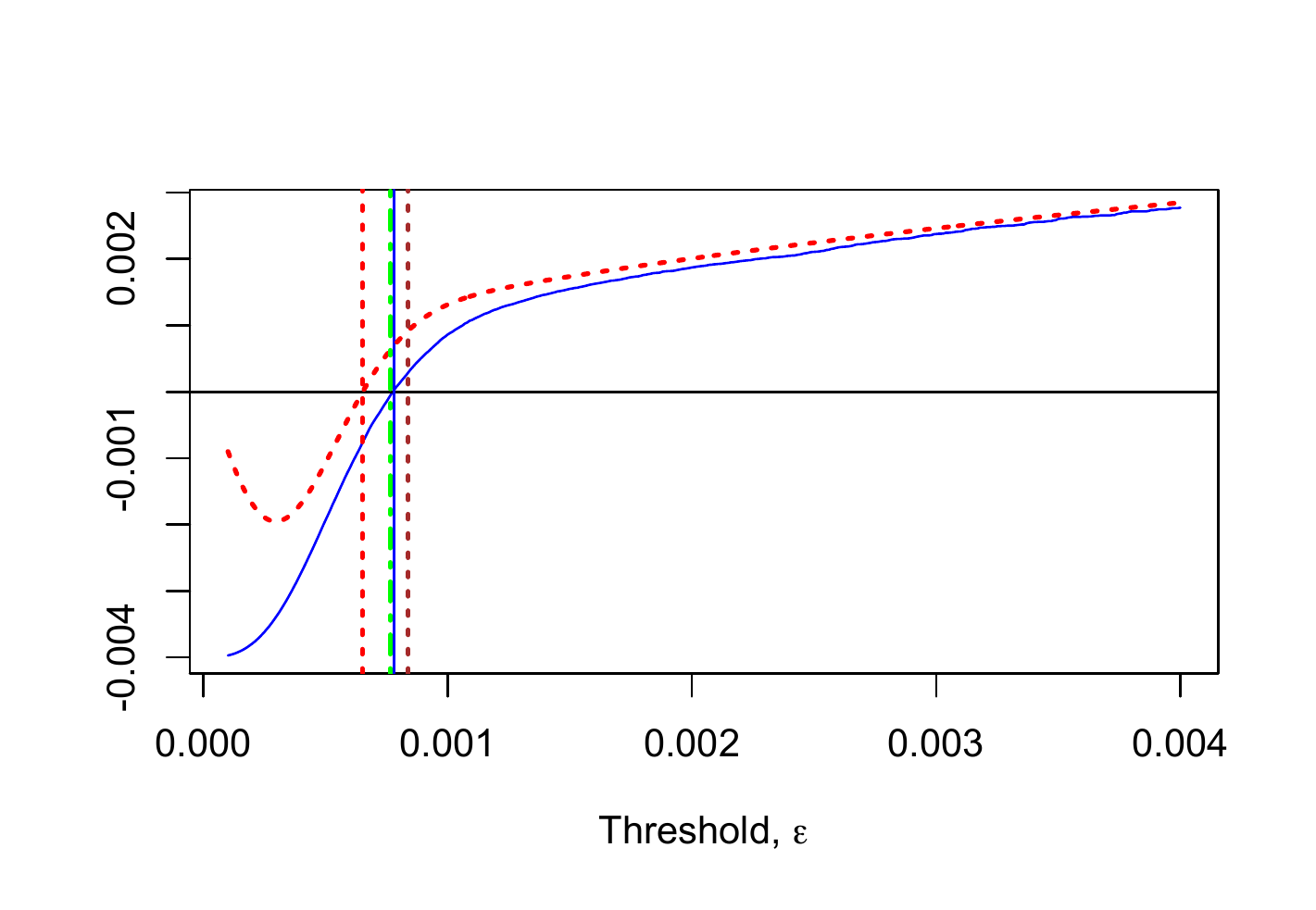}
\hspace{-0.8cm}
\includegraphics[width=6.0cm,height=6cm]{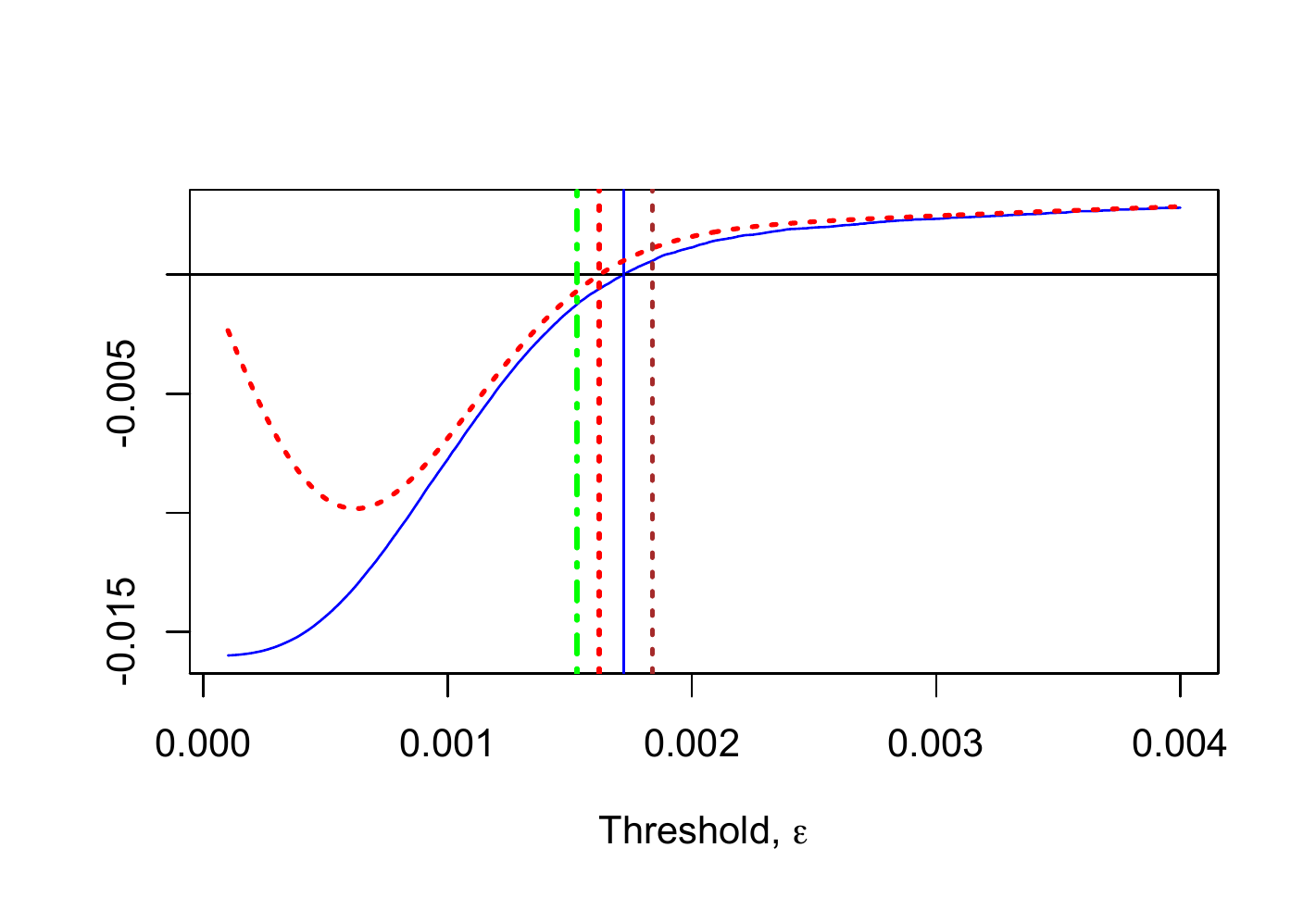}
\hspace{-0.8cm}
\includegraphics[width=6.0cm,height=6cm]{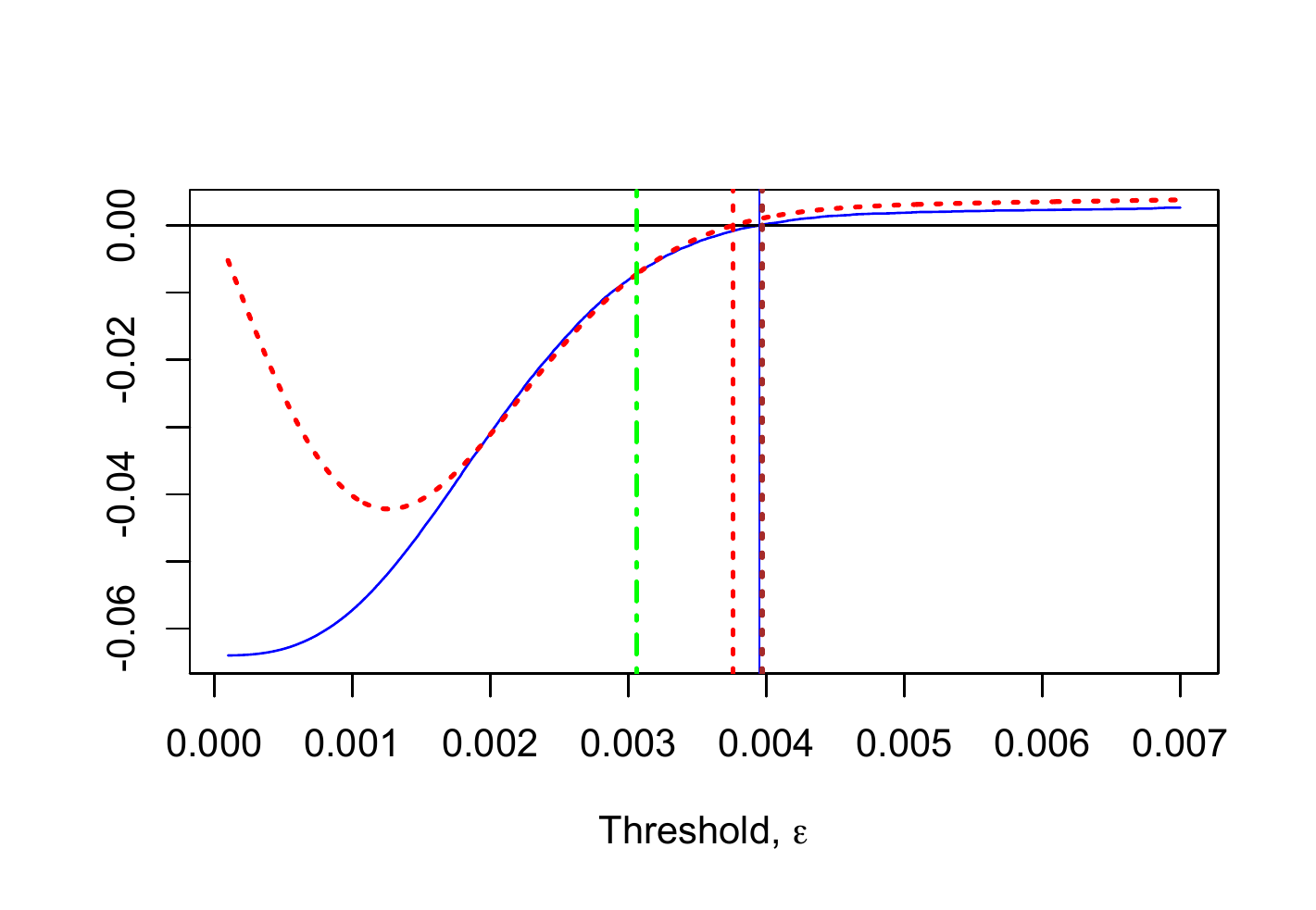}
\end{center}
\vspace{-0.5cm}
\caption{\small Graphs of the respective left-hand expressions of \protect\eqref{eq:Equeps} (solid blue) and \protect\eqref{eq:EquepsApprx1} (dashed red) against $\varepsilon$ for $\sigma=0.1$ (left panel), $\sigma=0.2$ (center panel), and $\sigma=0.4$ (right panel), respectively. We also show the vertical lines $\varepsilon=\varepsilon^{\star}_{n}$ (solid blue), $\varepsilon=\text{the root of \protect\eqref{eq:EquepsApprx1}}$ (dashed red), $\varepsilon=\wt{\varepsilon}^{\,\star}_{n}$ (dotted brown), and $\varepsilon=\sqrt{(2-Y)\sigma^{2}h_{n}\ln(1/h_{n})}$ (dotted/dashed green). The parameters for the CGMY model are set as $C=0.028$, $G=2.318$, $M=4.025$, and $Y=1.5$.}\label{Fig:ApproxNum2}
\end{figure}

Finally, we consider the value of $Y=1.7$. All the other CGMY parameter values remain the same. The approximations are shown in Figure \ref{Fig:ApproxNum3}. We deduce that for such a large value of $Y$, the approximation \eqref{eq:EquepsApprx1} derived from {Theorem} \ref{lem:Asymb1eps} is not accurate anymore, though it improves as $\sigma$ gets larger. On the other hand, {the other suggested approximation \eqref{eq:EquepsApprx2} is still relatively} accurate to approximate the optimal threshold {$\varepsilon_{n}^{\star}$} (the root of \eqref{eq:Equeps}). We again have that for small {and medium} values of $\sigma$, the approximation \eqref{eq:AsymOptThold} is good, which is not the case for large values of $\sigma$.

\begin{figure}
\vspace{-0.5cm}
\begin{center}
\hspace{-0.5cm}	
\includegraphics[width=6.0cm,height=6cm]{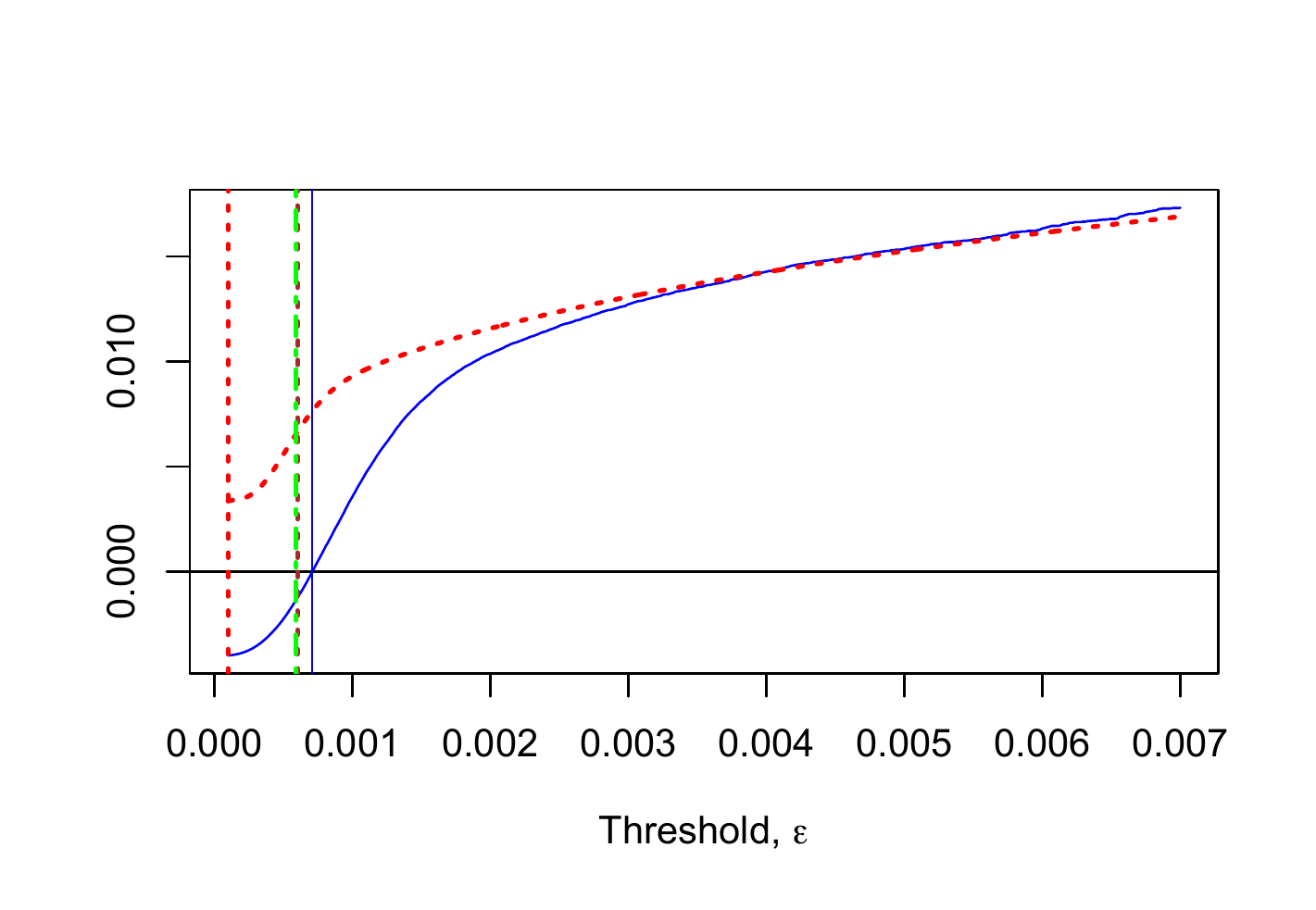}
\hspace{-0.8cm}
\includegraphics[width=6.0cm,height=6cm]{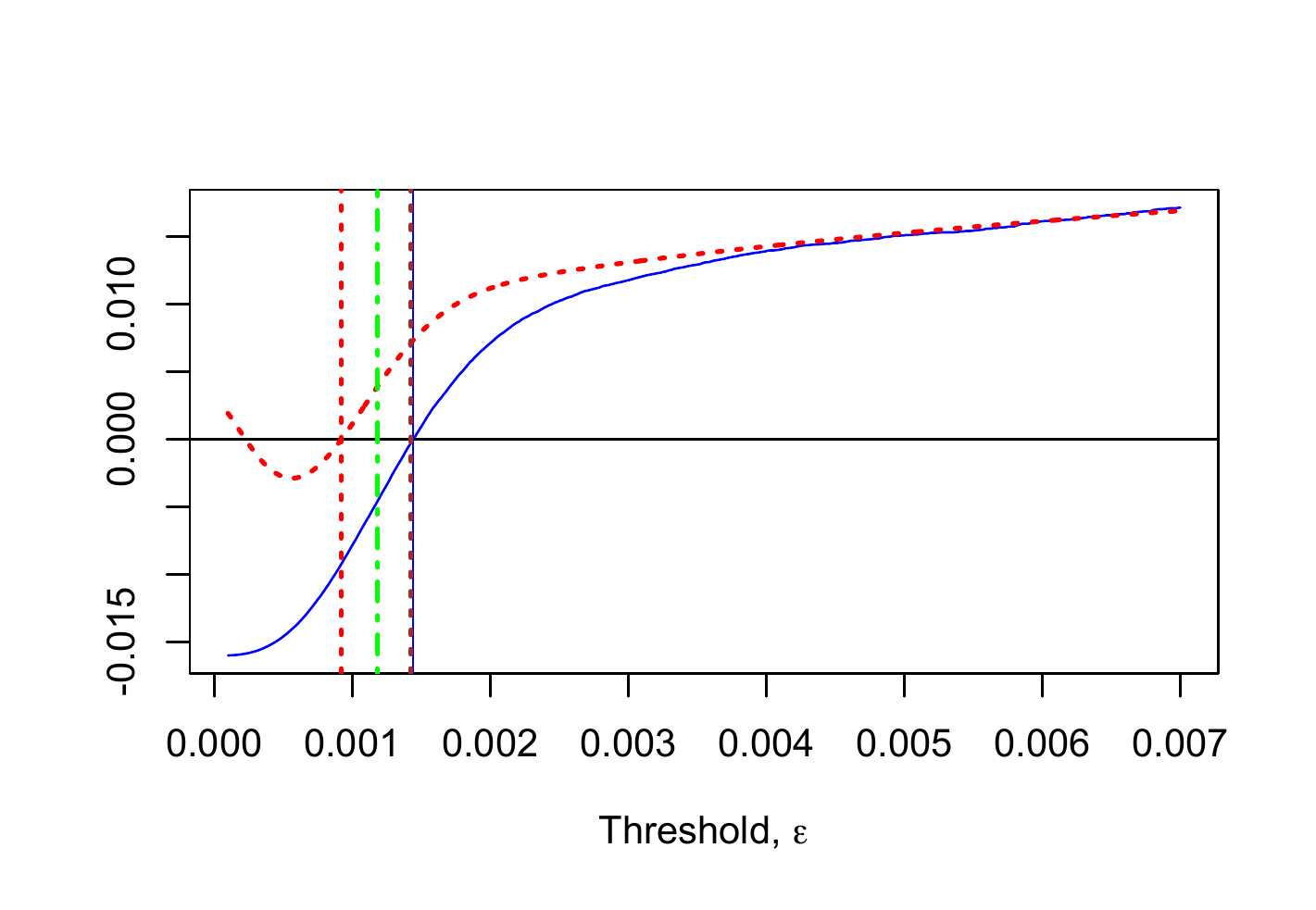}
\hspace{-0.8cm}
\includegraphics[width=6.0cm,height=6cm]{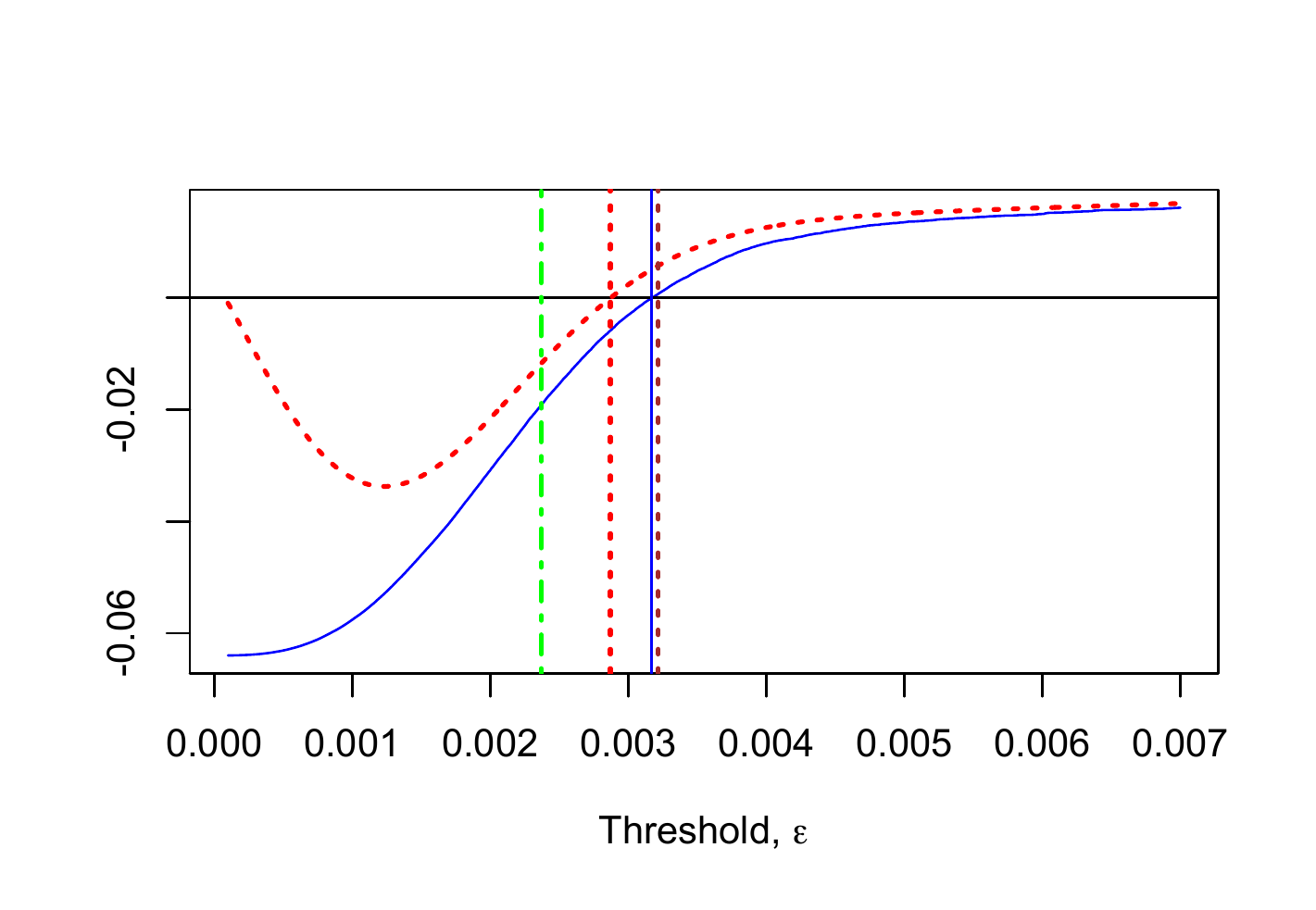}
\end{center}
\vspace{-0.6cm}
\caption{\small Graphs of the respective left-hand expressions of \protect\eqref{eq:Equeps} (solid blue) and \protect\eqref{eq:EquepsApprx1} (dashed red) against $\varepsilon$ for $\sigma=0.1$ (left panel), $\sigma=0.2$ (center panel), and $\sigma=0.4$ (right panel), respectively. We also show the vertical lines $\varepsilon=\varepsilon^{\star}_{n}$ (solid blue), $\varepsilon=\text{the root of \protect\eqref{eq:EquepsApprx1}}$ (dashed red), $\varepsilon=\wt{\varepsilon}^{\,\star}_{n}$ (dotted brown), and $\varepsilon=\sqrt{(2-Y)\sigma^{2}h_{n}\ln(1/h_{n})}$ (dotted/dashed green). The parameters of the CGMY model are set as $C=0.028$, $G=2.318$, $M=4.025$, and $Y=1.7$.}\label{Fig:ApproxNum3}
\end{figure}

To summarize, while for values of $Y\leq 1.5$, the approximation (\ref{eq:EquepsApprx1}) may be the most accurate, this is not the case anymore for larger values of $Y$. On the other hand, the approximation \eqref{eq:EquepsApprx2} is reasonably good for a large range of values of $Y$. Due to this reason, in our simulations of Section \ref{Sect:NewEstsigma}, we use {\eqref{eq:EquepsApprx2} to assess the finite sample performance of the proposed estimation method below.

\section{A New Method To Estimate The Volatility}\label{Sect:NewEstsigma}

In this section, we propose a new method for estimating the volatility $\sigma^{2}$ and other parameters of a tempered-stable L\'{e}vy process using the TRQV \eqref{eq:TRV} and the approximations of the optimal threshold derived in Section \ref{sec:MainThm}. Then, we illustrate the method in the case of a CGMY L\'{e}vy process. Finally, using a localization technique, we adapt our method to estimate the integrated variance under a Heston stochastic volatility model with CGMY jumps and compare it to the method proposed by \cite{JacodTodorov:2014}, which is known to be efficient when $Y\leq 1.5$.

\subsection{Estimation of Stable-Like L\'{e}vy Measures}\label{Sect51}

As shown by \eqref{eq:Equeps} and the asymptotic expansion of Theorem \ref{lem:Asymb1eps}, the optimal threshold $\varepsilon^{\star}_{n}$ depends on the volatility, and vice versa. It is then natural to consider an iterative method to estimate $\varepsilon^{\star}_{n}$. But before this, we need to estimate $C_{\pm}$ and $Y$. Several methods have been proposed in the literature for this purpose (see, e.g., \cite{AitSahaliaJacod:2009}, \cite{Bull:2016}, and \cite{Reiss:2013}). Mies \cite{Mies:2019} recently proposed an efficient method using the method of moments. In this part, we adapt and modify this method and apply it in combination with the approximations of Theorem \ref{lem:Asymb1eps} to estimate the optimal threshold $\varepsilon^{\star}_{n}$ of the TRQV and subsequently the other parameters $\sigma$, $Y$, and $C_{\pm}$.

Consider a L\'{e}vy process $X:=(X_{t})_{t\in\bR_{+}}$ with characteristic triplet $(\mu,\sigma^{2},\nu)$. The approach of \cite{Mies:2019} builds on the assumption that $\nu$ can be well approximated by the superposition of stable L\'{e}vy measures in the sense that
\begin{align}\label{eq:assumpm}
\big|\nu\big([x,\infty)\big)-\wt{\nu}\big([x,\infty)\big)\big|&\leq L|x|^{-\rho},\quad x\in(0,1],\\
\label{eq:AssumpMinusMies} \big|\nu\big((-\infty,x]\big)-\wt{\nu}\big((-\infty, x]\big)\big|&\leq L|x|^{-\rho},\quad x\in[-1,0),
\end{align}
for some $L,\rho\in(0,\infty)$, where $\wt{\nu}$ is given by
\begin{align}\label{eq:approxm}
\wt{\nu}(dz)=\sum_{m=1}^{{N}}\frac{\alpha_{m}}{|z|^{1+\alpha_{m}}}\big(r_{m}^{+}{\bf 1}_{\{x>0\}}+r_{m}^{-}{\bf 1}_{\{x<0\}}\big)\,dz,
\end{align}
for some $N\in\bN$, $\boldsymbol{\alpha}=(\alpha_{1},\ldots,\alpha_{N})\in(0,2)^{N}$, and $\boldsymbol{r}=(r_{1}^{+},r_{1}^{-},\ldots,r_{N}^{+},r_{N}^{-})\in\bR_{+}^{2N}$ such that
\begin{align*}
\alpha_{1}>\alpha_{2}>\cdots>\alpha_{N}>\frac{\alpha_{0}}{2},\quad\alpha_{N}>\rho,\quad r_{m}^{+}+r_{m}^{-}>0,\quad m=1,\ldots,N,
\end{align*}
for some $\alpha_{0}\in(0,\infty)$. We want to estimate $\boldsymbol{\theta}:=(\sigma^{2},\boldsymbol{r},\boldsymbol{\alpha})$ given $n$ observations, $X_{t_{1}},X_{t_{2}},\ldots,X_{t_{n}}$, of the process $X$ at known times $0=t_{0}<t_{1}<\cdots<t_{n}=T$. As before, we assume the sampling times are evenly spaced and we done the time step between observations as $h_{n}:=T/n$. Conditions \eqref{eq:assumpm}, \eqref{eq:AssumpMinusMies}, and \eqref{eq:approxm} essentially say that we can approximate $X$ by a fully specified L\'{e}vy process $\wt{Z}:=(\wt{Z}_{t})_{t\in\bR_{+}}$ with characteristic triplet $(0,\sigma^{2},\wt{\nu})$ and, hence, with the decomposition
\begin{align*}
\wt{Z}_{t}=\sigma W_{t}+\sum_{m=1}^{N}S_{t}^{m},\quad t\in\bR_{+},
\end{align*}
where $W:=(W_{t})_{t\in\bR_{+}}$ is a standard Brownian motion and $S^{m}:=(S_{t}^{m})_{t\in\bR_{+}}$, $m=1,\ldots,N$, are independent $\alpha_{m}$-stable processes, independent of $W$, each with L\'{e}vy density $\alpha_{m}|z|^{-1-\alpha_{m}}(r_{m}^{+}{\bf 1}_{\{x>0\}}+r_{m}^{-}{\bf 1}_{\{x<0\}})$, respectively.

Mies \cite{Mies:2019} proposed to estimate the parameters, $\boldsymbol{\theta}=(\sigma^{2},\boldsymbol{r},\boldsymbol{\alpha})$, of the approximating process $\wt{Z}$ using the method of moments. We now proceed to briefly review her method. The first step is to choose $3N+1$ moment functions $\boldsymbol{f}=(f_{1},\ldots,f_{3N+1})^{\text{T}}$, one for each parameters of $\wt{Z}$, and a suitable scaling factor $u_{n}\propto 1/\sqrt{h_{n}\ln(1/h_{n})}$, where ``$\propto$" hereafter means ``proportional to". Next, define the MME $\wh{\boldsymbol{\theta}}_{n}$ to be a solution of the following equation
\begin{align}\label{eq:EquMME}
\boldsymbol{F}_{n}(\boldsymbol{\theta}):=\frac{1}{n}\sum_{i=1}^{n}\boldsymbol{f}\big(u_{n}\Delta_{i}^{n}X\big)-\bE_{\boldsymbol{\theta}}\Big(\boldsymbol{f}\big(u_{n}\wt{Z}_{h_{n}}\big)\Big)=\boldsymbol{0},
\end{align}
where $\boldsymbol{0}=(0,\ldots,0)^{\text{T}}\in\bR^{3N+1}$ and $\bE_{\boldsymbol{\theta}}(\boldsymbol{f}(u_{n}\wt{Z}_{h_{n}}))$ denotes the expectation such that $\wt{Z}_{h_{n}}$ is determined by the parameter vector $\boldsymbol{\theta}$. Since $\wt{Z}_{h_{n}}$ is fully specified and, thus, its characteristic function is available, $\bE_{\boldsymbol{\theta}}(\boldsymbol{f}(u_{n}\wt{Z}_{h_{n}}))$ can be computed by, e.g., Fourier methods.

Our idea is to combine a version of Mies' method with our results in Section \ref{sec:MainThm} to improve our estimation of $\sigma^{2}$ and $Y$. Concretely, we propose to first find the roots of $\boldsymbol{F}_{n}(\boldsymbol{\theta})$ and plug them into a suitable approximation of Equation \eqref{eq:Equeps} to obtain an estimate of the optimal threshold $\varepsilon^{\star}_{n}$. This can in turn be used to estimate the volatility via thresholding. To solve the $3N+1$ equations \eqref{eq:EquMME}, we propose to solve an optimization problem with objective function of the form
\begin{align}\label{eq:ObjFunt}
V_{n}(\boldsymbol{\theta};\boldsymbol{f}):=\boldsymbol{F}_{n}^{\text{T}}(\boldsymbol{\theta})\mathsf{\Lambda}_{n}^{-1}(\boldsymbol{\theta})\mathsf{\Lambda}_{n}^{-1}(\boldsymbol{\theta})\boldsymbol{F}_{n}(\boldsymbol{\theta}),
\end{align}
where
\begin{align*}
\mathsf{\Lambda}_{n}(\boldsymbol{\theta}):=\text{diag}\big(h_{n}u_{n}^{2},h_{n}u_{n}^{\alpha_{1}},h_{n}u_{n}^{\alpha_{1}},h_{n}u_{n}^{\alpha_{1}},\ldots,h_{n}u_{n}^{\alpha_{N}},h_{n}u_{n}^{\alpha_{N}},h_{n}u_{n}^{\alpha_{N}}\big).
\end{align*}
This particular weights are motivated by the scaling of the Central Limit Theorem for $\boldsymbol{F}_{n}(\boldsymbol{\theta})$ established in \cite[Lemma 5.4]{Mies:2019}.

For simplicity, suppose we only want to estimate $\alpha=\alpha_{1}$, $r_{1}^{\pm}$, and $\sigma^{2}$ (the method can easily be adapted to estimate more parameters of $\wt{\nu}$). We then {\Blue propose} the following {procedure}:

\begin{enumerate}
\item Start with some initial values $\boldsymbol{\theta}_{0}:=(\sigma_{0}^{2},\boldsymbol{r}_{0},\boldsymbol{\alpha}_{0})$ and a suitable scaling factor $u_{n}$ (to be specified later on).
\item Find the roots of $\boldsymbol{F}_{n}(\boldsymbol{\theta})$, which we call $\wh{\boldsymbol{\theta}}_{n,1}:=(\wh{\sigma}_{n,1}^{2},\wh{\boldsymbol{r}}_{n,1},\wh{\boldsymbol{\alpha}}_{n,1})$, by minimizing the objective function $V_{n}(\boldsymbol{\theta};\boldsymbol{f})$ in \eqref{eq:ObjFunt}.
\item \label{item:GenIterStep3}
    Using $\wh{\boldsymbol{\theta}}_{n,1}$, we solve a suitable approximation of \eqref{eq:Equeps} (e.g., \eqref{eq:EquepsApprx1} or \eqref{eq:EquepsApprx2}) to get an estimation of the optimal threshold $\varepsilon^{\star}_{n}$, denoted by $\wh{\varepsilon}_{n,1}$. This estimate is then used to compute an estimate of $\sigma^{2}$ as
    \begin{align*}
    \wh{\sigma}^{2}_{n,2}:=\frac{1}{T}\sum_{i=1}^{n}\big(\Delta_{i}^{n}X\big)^{2}{\bf 1}_{\{|\Delta_{i}^{n}X|\leq\wh{\varepsilon}_{n,1}\}}.
    \end{align*}
\item \label{item:GenIterStep4}
    Fix $\wh{\sigma}^{2}_{n,2}$ and use $3N$ moment functions $\boldsymbol{g}:=(g_{1},\ldots,g_{3N})^{\text{T}}$ to find
    $(\wh{\boldsymbol{r}}_{n,2},\wh{\boldsymbol{\alpha}}_{n,2})$ by solving
    \begin{align}\label{eq:FunctGn}
    \boldsymbol{G}_{n}\big(\boldsymbol{r},\boldsymbol{\alpha};\wh{\sigma}^{2}_{n,2}\big):=\frac{1}{n}\sum_{i=1}^{n}\boldsymbol{g}\big(u_{n}\Delta_{i}^{n}X\big)-\bE_{(\wh{\sigma}^{2}_{n,2},\boldsymbol{r},\boldsymbol{\alpha})}\Big(\boldsymbol{g}\big(u_{n}\wt{Z}_{h_{n}}\big)\Big)=\boldsymbol{0},
    \end{align}
    or minimizing
    \begin{align}\label{eq:ObjFuntGn}
    V_{n}(\boldsymbol{r},\boldsymbol{\alpha};\wh{\sigma}^{2}_{n,2},\boldsymbol{g}):=\boldsymbol{G}_{n}^{\text{T}}\big(\boldsymbol{r},\boldsymbol{\alpha};\wh{\sigma}^{2}_{n,2}\big)\boldsymbol{G}_{n}\big(\boldsymbol{r},\boldsymbol{\alpha};\wh{\sigma}^{2}_{n,2}\big).
    \end{align}
\item \label{item:GenIterStep5}
    Using $(\wh{\sigma}^{2}_{n,2},\wh{\boldsymbol{r}}_{n,2},\wh{\boldsymbol{\alpha}}_{n,2})$ and solving the same approximating equation as in Step \ref{item:GenIterStep3}, we obtain a new estimation of $\varepsilon^{\star}_{n}$, denoted by $\wh{\varepsilon}_{n}^{\,\star}$, and update $\wh{\sigma}^{2}_{n,2}$ by
    \begin{align*}
    \big(\wh{\sigma}^{\star}_{n}\big)^{2}:=\frac{1}{T}\sum_{i=1}^{n}\big(\Delta_{i}^{n}X\big)^{2}{\bf 1}_{\{|\Delta_{i}^{n}X|\leq\wh{\varepsilon}_{n}^{\,\star}\}}.
    \end{align*}
\end{enumerate}

\begin{remark}
We could stop right after Step \ref{item:GenIterStep3} and make $\wh{\sigma}^{2}_{n,2}$ our final estimate of the volatility. However, our simulation results show that Step \ref{item:GenIterStep4} significantly improves our estimates of $(\boldsymbol{r},\boldsymbol{\alpha})$ when $Y\leq 1.5$. We could also follow the Steps 1$-$5 and repeat Steps \ref{item:GenIterStep4} and \ref{item:GenIterStep5} iteratively by letting $\wh{\sigma}^{2}_{n,2}=(\wh{\sigma}^{\star}_{n})^{2}$ until the sequence of estimates $(\wh{\sigma}^{\star}_{n})^{2}$ stabilizes. This approach, however, tends to increase the sample error of the estimators. {\Blue In the steps 2 and 4 above, there is no guarantee that the roots therein exist in a finite sample setting. This is another reason to use the minimum (or local minimum) of $V_{n}(\boldsymbol{\theta};\boldsymbol{f})$ and $ \boldsymbol{G}_{n}\big(\boldsymbol{r},\boldsymbol{\alpha};\wh{\sigma}^{2}_{n,2}\big)$ instead of solving \eqref{eq:EquMME} and \eqref{eq:FunctGn}, respectively}.
\end{remark}

\subsection{Estimation of a L\'{e}vy Process with CGMY Jump Component}\label{constCGMY}

In this subsection, we apply the method introduced in the previous subsection to the case of a CGMY jump component and compare it to the estimators of Mies \cite{Mies:2019} and Jacod and Todorov \cite{JacodTodorov:2014}. Specifically, we work with simulated data from the model \eqref{eq:MnMdlX} where $J$, is a pure-jump CGMY L\'{e}vy process, independent of the Brownian motion $W$, with L\'{e}vy measure
\begin{align*}
\nu_{CGMY}(dx):=\nu_{CGMY}(x)dx:=\frac{C}{|x|^{1+Y}}\Big(e^{-Mx}{\bf 1}_{\{x>0\}}+e^{Gx}{\bf 1}_{\{x<0\}}\Big)dx.
\end{align*}
We use the same values of $C$, $G$, and $M$ as in \eqref{USPH0}, but with different values of $\sigma$ and $Y$. We consider observations of a $5$ minutes frequency over a one-year ($252$ days) time horizon with a trading time of $6.5$ hours per day (so that {\Blue $n=252\times 6.5\times 12=19656$}). It should be clarified that we are indeed in the same setting as that of Subsection \ref{Sect51} since, as $x\rightarrow 0$, $\nu_{CGMY}(x)=C|x|^{-1-Y}+O(|x|^{-Y})$. This suggests us to take $N=1$ in \eqref{eq:approxm} and to use a $Y$-stable process to approximate the CGMY process because only the parameters $\sigma^{2}$, $C$, and $Y$ are of primary interest. Then Assumptions \eqref{eq:assumpm}$-$\eqref{eq:approxm} are satisfied with $\rho=Y$ and $\wt{Z}_{t}=\sigma W_{t}+S_{t}$, $t\in\bR_{+}$, where $(S_{t})_{t\in\bR_{+}}$ is a $Y$-stable process with L\'{e}vy measure $\wt{\nu}(dx):=C|x|^{-1-Y}dx$. The parameters of the approximating model are $\boldsymbol{\theta}=(\sigma^{2},C,Y)$.

Next, we choose the $3$ moment functions $\boldsymbol{f}=(f_{1},f_{2},f_{3})^{\text{T}}$ as
\begin{align}\label{eq:3MomFunctsCGMY}
f_{1}(x):=e^{-|x|},\quad f_{2}(x):=e^{-\sqrt{|x|}},\quad f_{3}(x):=x^{2}{\bf 1}_{\{|x|<1\}},\quad x\in\bR,
\end{align}
and a suitable scaling factor $u_{n}$ to be specified below. These functions are simpler than the ones proposed in \cite[Section 4]{Mies:2019} and were chosen because of their superior performance. Even though the moment functions \eqref{eq:3MomFunctsCGMY} do not meet the strict constraints imposed in \cite{Mies:2019} (see Assumptions (F1)$-$(F2) therein), we believe that most of the assumptions therein are not needed for the validity of the asymptotic theory in \cite{Mies:2019}. This will be investigated in a future work together with an objective and systematic method to calibrate the moment functions.

To determine a suitable scaling factor $u_{n}$, we will connect it to the threshold parameter $\varepsilon$ of the TRQV estimator \eqref{eq:TRV}. The key observation is to analyze the moment equation corresponding to the function $f_{3}$, namely,
\begin{align*}
\frac{1}{n}\sum_{i=1}^{n}f_{3}\big(u_{n}\Delta_{i}^{n}X\big)-\bE_{\boldsymbol{\theta}}\Big(f_{3}\big(u_{n}\wt{Z}_{h_{n}}\big)\Big)=0,
\end{align*}
which, after some trivial simplifications, can be written as
\begin{align*}
\frac{1}{n}\sum_{i=1}^{n}\big(\Delta_{i}^{n}X\big)^{2}{\bf 1}_{\{|\Delta_{i}^{n}X|\leq 1/u_{n}\}}-\bE_{\boldsymbol{\theta}}\Big(\wt{Z}_{h_{n}}^{2}{\bf 1}_{\{|\wt{Z}_{h_{n}}|\leq 1/u_{n}\}}\Big)=0.
\end{align*}
This suggests that $1/u_{n}$ has a similar role to that of the threshold $\varepsilon$ in the TRQV estimator; namely, the choice of $u_{n}$ should ensure that $\sum_{i=1}^{n}f_{3}(u_{n}\Delta_{i}^{n}X)$ is dominated by the Brownian component or, equivalently, to eliminate the increments in which the jump component $J$ of $X$ dominates the Brownian component. Hence, in what follows, we will fix $u_{n}$ as $1/\varepsilon_{n}$, where $\varepsilon_{n}=\sqrt{2{\sigma}_{0}^{2}h_{n}\ln(1/h_{n})}$ and $\sigma_{0}^{2}$ is a suitable initial estimate of $\sigma^{2}$. We consider the following initial values for $\sigma^{2}$:
\begin{align}\label{eq:IniSqusigma}
\wh{\sigma}_{n,01}^{2}\!\!:=\!\frac{1}{T}\!\sum_{i=1}^{n}\!\big(\Delta_{i}^{n}X\big)^{2}{\bf 1}_{\{|\Delta_{i}^{n}X|\leq\sqrt{h_{n}\!\ln(1/h_{n})}\}},\quad\wh{\sigma}_{n,02}^{2}\!:=\!\frac{1}{T}\!\sum_{i=1}^{n}\!\big(\Delta_{i}^{n}X\big)^{2}{\bf 1}_{\{|\Delta_{i}^{n}X|\leq\sqrt{2\wh{\sigma}_{n,00}^{2}h_{n}\!\ln(1/h_{n})}\}},\quad\,\,\,
\end{align}
where
\begin{align*}
\wh{\sigma}_{n,00}^{2}:=\frac{1}{T}\sum_{i=1}^{n}\big(\Delta_{i}^{n}X\big)^{2}{\bf 1}_{\{|\Delta_{i}^{n}X|\leq\sqrt{2\wh{\sigma}_{n,\text{RV}}^{2}h_{n}\ln(1/h_{n})}\}},\quad\wh{\sigma}_{n,\text{RV}}^{2}:=\frac{1}{T}\sum_{i=1}^{n}\big(\Delta_{i}^{n}X\big)^{2}.
\end{align*}
Broadly, we recommend to use the loose estimator $\wh{\sigma}_{n,01}^{2}$ as our initial value $\sigma_{0}^{2}$ if the volatility is ``large" (say, 0.4 or larger), and, otherwise, use the tighter estimator $\wh{\sigma}_{n,02}^{2}$.

For the moment functions $\boldsymbol{g}:=(g_{1},g_{2})^{\text{T}}$ in Step \ref{item:GenIterStep4} of the algorithm in Subsection \ref{Sect51} (the ones used to correct estimates of $\boldsymbol{r}$ and $\boldsymbol{\alpha}$ while fixing that of $\sigma^{2}$), we choose
\begin{align}\label{eq:2MomFunctsCGMY}
g_{1}(x):=\big(1-|x|\big){\bf 1}_{\{|x|<1\}},\quad g_{2}(x):=\big(1-x^{2}\big){\bf 1}_{\{|x|<1\}},\quad x\in\bR.
\end{align}
Finally, we use the approximation \eqref{eq:EquepsApprx2} in Steps \ref{item:GenIterStep3} and \ref{item:GenIterStep5} of the algorithm outlined in Subsection \ref{Sect51}. For clarity and easy reference, we outline below the precise estimation procedure for the case of the CGMY model.
\begin{enumerate}
\item Start with some initial guesses $(\sigma_{0}^{2},C_{0},Y_{0})$. Here we take $C_{0}=0.1$, $Y_{0}=1.3$, and $\sigma_{0}^{2}=\wh\sigma_{n,01}^{2}$ when $\sigma=0.4$ or $\sigma_{0}^{2}=\wh{\sigma}_{n,02}^{2}$ when $\sigma=0.2$, as defined in \eqref{eq:IniSqusigma}. Given $\sigma_{0}^{2}$, we fix the scaling factor $u_{n}=1/\sqrt{2\sigma_{0}^{2}h_{n}\ln(1/h_{n})}$.
\item Find the roots of $\boldsymbol{F}_{n}(\boldsymbol{\theta})$ with the moment functions $\boldsymbol{f}$ in \eqref{eq:3MomFunctsCGMY}, which we call $(\wh{\sigma}_{n,1}^{2},\wh{C}_{n,1},\wh{Y}_{n,1})$, by minimizing the objective function $V_{n}(\boldsymbol{\theta};\boldsymbol{f})$ in \eqref{eq:ObjFunt}.
\item \label{item:IterCGMYStep3}
    Using $(\wh{\sigma}_{n,1}^{2},\wh{C}_{n,1},\wh{Y}_{n,1})$, we apply the second-order approximation \eqref{eq:EquepsApprx2} to get an estimate of $\varepsilon_{n}^{\star}$, denoted by $\wh{\varepsilon}_{n,1}$, and compute its corresponding TRQV estimator $\wh{\sigma}^{2}_{n,2}$.
\item Fix $\wh{\sigma}_{n,2}^{2}$ and then use the moment functions $\boldsymbol{g}$ in \eqref{eq:2MomFunctsCGMY} with $u_{n}=1/\sqrt{2\wh{\sigma}^{2}_{n,2}h_{n}\ln(1/h_{n})}$ to get the estimates $(\wh{C}_{n,2},\wh{Y}_{n,2})$ by solving the roots of $\boldsymbol{G}_{n}(C,Y;\wh{\sigma}_{n,2}^{2})$ in \eqref{eq:FunctGn} or minimizing $V_{n}(C,Y;\wh{\sigma}_{n,2}^{2},\boldsymbol{g})$ in \eqref{eq:ObjFuntGn}\footnote{In Steps 2 and 4, we choose the minimization method. We use the R function {\tt nloptr} from the package {\tt nloptr} with the algorithm NLOPT\!\_\!\_LD\!\_\!\_LBFGS.}
\item \label{item:IterCGMYStep5}
    Using $(\wh{\sigma}_{n,2}^{2},\wh{C}_{n,2},\wh{Y}_{n,2})$, we again apply \eqref{eq:EquepsApprx2} to get a new estimate of $\varepsilon^{\star}_{n}$, denoted by $\wh{\varepsilon}_{n}^{\,\star}$. This threshold is plugged into the TRQV estimator to compute a final estimation of $\sigma^{2}$, denoted by $(\wh{\sigma}^{\star}_{n})^{2}$.
\end{enumerate}

We compare the simulated performance of our estimator $(\wh{\sigma}_{n}^{\star})^{2}$ to the estimator $\wh{\sigma}_{n,1}^{2}$ (which could be considered the plain estimator proposed by \cite{Mies:2019}) and the estimator $\wh{\sigma}_{n,\text{JT}}^{2}$ in \cite{JacodTodorov:2014}. In the latter one, we use the equation (5.3) therein with $\zeta=1.5$ and {\Blue $k_{n}=252\times 6.5\times 12=19656$}, which is reasonable since the volatility is constant and there is no need to localize the estimator (so we only need one block). We take the scaling factor $u_{n}=(\ln(1/h_{n}))^{-1/30}$, as proposed in the simulation portion of \cite{Mies:2019}, and $\bar{u}_{n}=(8/3)u_{n}$ for the term $S_{T}^{n}$ of equation (5.3) in \cite{JacodTodorov:2014}. \cite{JacodTodorov:2014} suggests to use $u_{n}=(\ln(1/h_{n}))^{-1/30}/\sqrt{BV}$ and $\bar{u}_{n}=0.3\,u_{n}$, where $BV$ is the bipower variation, which we also tried in our simulation, but obtained worst results. In fact, we tried different parameters settings for $\zeta$, $k_{n}$, $u_{n}$, and $\bar{u}_{n}$, and select the values with the best performance. In each simulation, we divided the one-year data into $12$ months and compute the estimate of $\sigma^{2}$ for each month, and then take the average of these $12$ monthly estimators as $\wh{\sigma}_{n,\text{JT}}^{2}$.

The results are summarized in Tables \ref{table:C28Y17sigma202}$-$\ref{table:C28Y135sigma401} for different parameter settings. The tables report the sample means, standard deviations (SDs), sample mean and SD of relative errors, and MSEs for different parameter settings based on $2000$ simulations. We also report the {TRQV} estimator, denoted by $(\wt{\sigma}_{n}^{\star})^{2}$, using {the threshold $\wt{\varepsilon}_{n}^{\,\star}$ given in \eqref{eq:EquepsApprx2}} with the true values of $\sigma^{2}$, $C$, and $Y$. Finally, we also {report the TRQV} estimator, denoted by $(\sigma^{\star}_{n})^{2}$, corresponding to the true {optimal threshold} $\varepsilon^{\star}_{n}$ obtained by solving \eqref{eq:Equeps} after finding $\bE(b_{1}(\varepsilon))$ via a large scale Monte Carlo experiment.

Tables \ref{table:C28Y17sigma202}$-$\ref{table:C28Y135sigma202} show that, when $\sigma=0.2$, the MSEs of $\sigma_{0}^{2}=\wh{\sigma}_{n,02}^{2}$, $\wh{\sigma}_{n,1}^{2}$, $\wh{\sigma}_{n,2}^{2}$, and $(\wh{\sigma}_{n}^{\star})^{2}$ are getting smaller in each step. The MSE of $(\wh{\sigma}_{n}^{\star})^{2}$ is about $81.8\%$, $71.8\%$, and $56.8\%$ lower than the MSE of $\wh{\sigma}_{n,1}^{2}$, for $Y=1.7,\,1.5,\,1.35$, respectively, while this is $98.5\%$, $23.4\%$, and $21.7\%$ lower than the MSE of $\wh{\sigma}_{n,\text{JT}}^{2}$, for $Y=1.7,\,1.5,\,1.35$, respectively. Similarly, as shown in Tables \ref{table:C28Y17sigma401}$-$\ref{table:C28Y135sigma401}, when $\sigma=0.4$, the MSEs of $\sigma_{0}^{2}=\wh{\sigma}_{n,02}^{2}$, $\wh{\sigma}_{n,1}^{2}$, $\wh{\sigma}_{n,2}^{2}$, and $(\wh{\sigma}_{n}^{\star})^{2}$ are also getting smaller in each step. The MSE of $(\wh{\sigma}_{n}^{\star})^{2}$ is about $35.3\%$, $47.3\%$, and $56.8\%$ lower than the MSE of $\wh{\sigma}_{n,1}^{2}$, for $Y=1.7,\,1.5,\,1.35$, respectively. The MSE of $(\wh{\sigma}_{n}^{\star})^{2}$ are $96.5\%$, $3.5\%$, and $58.5\%$ lower than the MSE of $\wh{\sigma}_{n,\text{JT}}^{2}$, for $Y=1.7,\,1.5,\,1.35$, respectively. So the iterative method has a good performance and significantly improves the MSEs of the estimators $\wh{\sigma}_{n,1}^{2}$ and $\wh{\sigma}_{n,\text{JT}}^{2}$. Regarding the estimates of $Y$ and $C$, we notice that the second step estimates $\wh{Y}_{n,2}$ and $\wh{C}_{n,2}$ (obtained from fixing $\wh{\sigma}^{2}_{n,2}$ and then applying \eqref{eq:FunctGn}) are significantly better than the first step estimates $\wh{Y}_{n,1}$ and $\wh{C}_{n,1}$ when $Y\leq 1.5$. When $Y=1.7$, there is no significant improvement.

\medskip
\begin{table}[ht!]
\centering
\begin{tabular}{cccccc}
\hline
& \begin{tabular}{c} Sample Mean \end{tabular} & \begin{tabular}{c} Sample SD \end{tabular} & \begin{tabular}{c} Mean of \\ Relative Error \end{tabular} & \begin{tabular}{c} SD of \\ Relative Error \end{tabular} & \begin{tabular}{c} MSE \end{tabular} \\
\hline
$\sigma_{0}^{2}$ & 0.083128 & 0.001101 & 1.078190 & 0.027526 & 1.8612E-03 \\
$\wh{\sigma}_{n,1}^{2}$ & 0.045788 & 0.001450 & 0.144696 & 0.036257 & 3.5602E-05 \\
$\wh{C}_{n,1}$ & 0.038116 & 0.008910 & 0.361281 & 0.318223 & 1.8172E-04 \\
$\wh{Y}_{n,1}$ & 1.649710 & 0.028049 & -0.029582 & 0.016499 & 3.3158E-03 \\
$\wh{\varepsilon}_{n,1}$ & 0.003361 & 0.000105 & 0.053750 & 0.032827 & 4.0365E-08 \\
$\wh{\sigma}_{n,2}^{2}$ & 0.043465 & 0.002063 & 0.086634 & 0.051572 & {1.6264E-05} \\
$\wh{C}_{n,2}$ & 0.039492 & 0.009402 & 0.410411 & 0.335795 & 2.2046E-04 \\
$\wh{Y}_{n,2}$ & 1.649803 & 0.030043 & -0.029528 & 0.017672 & 3.4224E-03 \\
$\wh{\varepsilon}_{n}^{\,\star}$ & 0.003219 & 0.000122 & 0.009143 & 0.038203 & 1.5703E-08 \\
$(\wh{\sigma}_{n}^{\star})^{2}$ & 0.040622 & 0.002467 & 0.015561 & 0.061681 & 6.4747E-06 \\
$\wh{\sigma}_{n,\text{JT}}^{2}$ & 0.060648 & 0.003161 & 0.516191 & 0.079013 & 4.3631E-04 \\
$(\wt{\sigma}_{n}^{\star})^{2}$ & 0.037780 & 0.000326 & -0.055490 & 0.008160 & 5.0331E-06 \\
$(\sigma^{\star}_{n})^{2}$ & 0.040061 & 0.000350 & 0.001528 & 0.008747 & 1.2615E-07 \\
\hline
\end{tabular}
\caption{\small Estimation based on simulated $5$-minute observations of $2000$ paths over a one-year time horizon. The parameters are $C=0.028$, $Y=1.7$, and $\sigma=0.2$. We take $\sigma_{0}^{2}=\wh{\sigma}_{n,02}^{2}$. In this case, we compute $\varepsilon^{\star}_{n}=0.00319$ and $\wt{\varepsilon}_{n}^{\,\star}=0.003080$.} \label{table:C28Y17sigma202}
\end{table}

\begin{table}[ht]
\centering
\begin{tabular}{cccccc}
\hline
& \begin{tabular}{c} Sample Mean\end{tabular} & \begin{tabular}{c} Sample SD \end{tabular} & \begin{tabular}{c} Mean of \\ Relative Error \end{tabular} & \begin{tabular}{c} SD of \\ Relative Error \end{tabular} & \begin{tabular}{c} MSE \end{tabular} \\
\hline
$\sigma_{0}^{2}$ & 0.048487 & 0.000569 & 0.212166 & 0.014213 & 7.2347E-05 \\
$\wh{\sigma}_{n,1}^{2}$ & 0.034272 & 0.001892 & -0.143197 & 0.047310 & 3.6390E-05 \\
$\wh{C}_{n,1}$ & 0.009674 & 0.004040 & -0.654512 & 0.144291 & 3.5218E-04 \\
$\wh{Y}_{n,1}$ & 1.718718 & 0.052867 & 0.145812 & 0.035245 & 5.0632E-02 \\
$\wh{\varepsilon}_{n,1}$ & 0.003300 & 0.000184 & -0.122242 & 0.048974 & 2.4517E-07 \\
$\wh{\sigma}_{n,2}^{2}$ & 0.036122 & 0.002019 & -0.096961 & 0.050470 & 1.9118E-05 \\
$\wh{C}_{n,2}$ & 0.018535 & 0.009119 & -0.338043 & 0.325676 & 1.7275E-04 \\
$\wh{Y}_{n,2}$ & 1.618217 & 0.071105 & 0.078811 & 0.047403 & 1.9031E-02 \\
$\wh{\varepsilon}_{n}^{\,\star}$ & 0.003470 & 0.000206 & -0.077027 & 0.054674 & 1.2614E-07 \\
$(\wh{\sigma}_{n}^{\star})^{2}$ & 0.037833 & 0.001989 & -0.054178 & 0.049722 & 8.6521E-06 \\
$\wh{\sigma}_{n,\text{JT}}^{2}$ & 0.043206 & 0.001006 & 0.080159 & 0.025152 & 1.1293E-05 \\
$(\wt{\sigma}_{n}^{\star})^{2}$ & 0.041920 & 0.000393 & 0.048008 & 0.009837 & 3.8424E-06 \\
$(\sigma^{\star}_{n})^{2}$ & 0.040460 & 0.000371 & 0.011491 & 0.009284 & 3.4918E-07 \\
\hline
\end{tabular}
\caption{\small Estimation based on simulated $5$-minute observations of $2000$ paths over a one-year time horizon. The parameters are $C=0.028$, $Y=1.5$, and $\sigma=0.2$. We take $\sigma_{0}^{2}=\wh{\sigma}_{n,02}^{2}$. In this case, we compute $\varepsilon^{\star}_{n}=0.00376$ and $\wt{\varepsilon}_{n}^{\,\star}=0.003974$.} \label{table:C28Y15sigma202}
\end{table}

\begin{table}[ht]
\centering
\begin{tabular}{cccccc}
\hline
& \begin{tabular}{c} Sample Mean \end{tabular} & \begin{tabular}{c} Sample SD \end{tabular} & \begin{tabular}{c} Mean of \\ Relative Error \end{tabular} & \begin{tabular}{c} SD of \\ Relative Error \end{tabular} & \begin{tabular}{c} MSE \end{tabular} \\
\hline
$\sigma_{0}^{2}$ & 0.042889 & 0.000460 & 0.072221 & 0.011491 & 8.5567E-06 \\
$\wh{\sigma}_{n,1}^{2}$ & 0.039098 & 0.001497 & -0.022549 & 0.037431 & 3.0552E-06 \\
$\wh{C}_{n,1}$ & 0.006717 & 0.004401 & -0.760116 & 0.157188 & 4.7235E-04 \\
$\wh{Y}_{n,1}$ & 1.620586 & 0.059843 & 0.200434 & 0.044328 & 7.6798E-02 \\
$\wh{\varepsilon}_{n,1}$ & 0.004171 & 0.000238 & -0.013883 & 0.056241 & 6.0045E-08 \\
$\wh{\sigma}_{n,2}^{2}$ & 0.039907 & 0.000981 & -0.002313 & 0.024514 & 9.7007E-07 \\
$\wh{C}_{n,2}$ & 0.049001 & 0.031537 & 0.750027 & 1.126331 & 1.4356E-03 \\
$\wh{Y}_{n,2}$ & 1.285035 & 0.105981 & -0.048123 & 0.078504 & 1.5452E-02 \\
$\wh{\varepsilon}_{n}^{\,\star}$ & 0.004374 & 0.000181 & 0.034146 & 0.042847 & 5.3710E-08 \\
$(\wh{\sigma}_{n}^{\star})^{2}$ & 0.040596 & 0.000718 & 0.014911 & 0.017949 & 8.7122E-07 \\
$\wh{\sigma}_{n,\text{JT}}^{2}$ & 0.040842 & 0.000634 & 0.021061 & 0.015858 & 1.1121E-06 \\
$(\wt{\sigma}_{n}^{\star})^{2}$ & 0.040789 & 0.000389 & 0.019736 & 0.009719 & 7.7435E-07 \\
$(\sigma^{\star}_{n})^{2}$ & 0.040235 & 0.000378 & 0.005870 & 0.009440 & 1.9770E-07 \\
\hline
\end{tabular}
\caption{\small Estimation based on simulated $5$-minute observations of $2000$ paths over a one-year time horizon. The parameters are $C=0.028$, $Y=1.35$, and $\sigma=0.2$. We take $\sigma_{0}^{2}=\wh{\sigma}_{n,02}^{2}$. In this case, we compute $\varepsilon^{\star}_{n}=0.00423$ and $\wt{\varepsilon}_{n}^{\,\star}=0.004421$.} \label{table:C28Y135sigma202}
\end{table}

\begin{table}[ht!]
\vspace{0.5cm}
\centering
\begin{tabular}{cccccc}
\hline
& \begin{tabular}{c} Sample Mean \end{tabular} & \begin{tabular}{c} Sample SD \end{tabular} & \begin{tabular}{c} Mean of \\ Relative Error \end{tabular} & \begin{tabular}{c} SD of \\ Relative Error \end{tabular} & \begin{tabular}{c} MSE \end{tabular} \\
\hline
$\sigma_{0}^{2}$ & 0.233102 & 0.003657 & 0.456887 & 0.022855 & 5.3573E-03 \\
$\wh{\sigma}_{n,1}^{2}$ & 0.164513 & 0.003573 & 0.028205 & 0.022332 & 3.3133E-05 \\
$\wh{C}_{n,1}$ & 0.044023 & 0.010945 & 0.572264 & 0.390893 & 3.7654E-04 \\
$\wh{Y}_{n,1}$ & 1.641484 & 0.029400 & -0.034421 & 0.017294 & 4.2884E-03 \\
$\wh{\varepsilon}_{n,1}$ & 0.007081 & 0.000133 & 0.004358 & 0.018910 & 1.8717E-08 \\
$\wh{\sigma}_{n,2}^{2}$ & 0.162042 & 0.003665 & 0.012764 & 0.022907 & 1.7603E-05 \\
$\wh{C}_{n,2}$ & 0.051271 & 0.009833 & 0.831099 & 0.351195 & 6.3823E-04 \\
$\wh{Y}_{n,2}$ & 1.619172 & 0.022776 & -0.047546 & 0.013397 & 7.0519E-03 \\
$\wh{\varepsilon}_{n}^{\,\star}$ & 0.007028 & 0.000167 & -0.003155 & 0.023635 & 2.8259E-08 \\
$(\wh{\sigma}_{n}^{\star})^{2}$ & 0.160897 & 0.004536 & 0.005605 & 0.028350 & 2.1380E-05 \\
$\wh{\sigma}_{n,\text{JT}}^{2}$ & 0.177376 & 0.017487 & 0.108603 & 0.109294 & 6.0774E-04 \\
$(\wt{\sigma}_{n}^{\star})^{2}$ & 0.159377 & 0.001432 & -0.003894 & 0.008947 & 2.4375E-06 \\
$(\sigma^{\star}_{n})^{2}$ & 0.161471 & 0.001461 & 0.009193 & 0.009133 & 4.2992E-06 \\
\hline
\end{tabular}
\caption{\small Estimation based on simulated $5$-minute observations of $2000$ paths over a one-year time horizon. The parameters are $C=0.028$, $Y=1.7$, and $\sigma=0.4$. We take $\sigma_{0}^{2}=\wh{\sigma}_{n,01}^{2}$. In this case, we compute $\varepsilon^{\star}_{n}=0.00705$ and $\wt{\varepsilon}_{n}^{\,\star}=0.006954$.} \label{table:C28Y17sigma401}
\end{table}

\begin{table}[ht!]
\centering
\begin{tabular}{cccccc}
\hline
& \begin{tabular}{c} Sample Mean \end{tabular} & \begin{tabular}{c} Sample SD \end{tabular} & \begin{tabular}{c} Mean of \\ Relative Error \end{tabular} & \begin{tabular}{c} SD of \\ Relative Error \end{tabular} & \begin{tabular}{c} MSE \end{tabular} \\
\hline
$\sigma_{0}^{2}$ & 0.184729 & 0.002771 & 0.154558 & 0.017316 & 6.1921E-04 \\
$\wh{\sigma}_{n,1}^{2}$ & 0.150140 & 0.005615 & -0.061628 & 0.035094 & 1.2876E-04 \\
$\wh{C}_{n,1}$ & 0.010208 & 0.004122 & -0.635430 & 0.147228 & 3.3355E-04 \\
$\wh{Y}_{n,1}$ & 1.750480 & 0.033385 & 0.166986 & 0.022257 & 6.3855E-02 \\
$\wh{\varepsilon}_{n,1}$ & 0.007357 & 0.000442 & -0.124190 & 0.052650 & 1.2838E-06 \\
$\wh{\sigma}_{n,2}^{2}$ & 0.150856 & 0.004893 & -0.057151 & 0.030584 & 1.0756E-04 \\
$\wh{C}_{n,2}$ & 0.021081 & 0.004636 & -0.247106 & 0.165562 & 6.9362E-05 \\
$\wh{Y}_{n,2}$ & 1.635476 & 0.038174 & 0.090317 & 0.025449 & 1.9811E-02 \\
$\wh{\varepsilon}_{n}^{\,\star}$ & 0.007556 & 0.000369 & -0.100445 & 0.043904 & 8.4791E-07 \\
$(\wh{\sigma}_{n}^{\star})^{2}$ & 0.153194 & 0.004645 & -0.042537 & 0.029031 & 6.7897E-05 \\
$\wh{\sigma}_{n,\text{JT}}^{2}$ & 0.161593 & 0.008238 & 0.009958 & 0.051485 & 7.0396E-05 \\
$(\wt{\sigma}_{n}^{\star})^{2}$ & 0.161458 & 0.001550 & 0.009112 & 0.009684 & 4.5267E-06 \\
$(\sigma^{\star}_{n})^{2}$ & 0.161165 & 0.001548 & 0.007280 & 0.009678 & 3.7547E-06 \\
\hline
\end{tabular}
\caption{\small Estimation based on simulated $5$-minute observations of $2000$ paths over a one-year time horizon. The parameters are $C=0.028$, $Y=1.5$, and $\sigma=0.4$. We take $\sigma_{0}^{2}=\wh{\sigma}_{n,01}^{2}$. In this case, we compute $\varepsilon^{\star}_{n}=0.0084$ and $\wt{\varepsilon}_{n}^{\,\star}=0.008444$.} \label{table:C28Y15sigma401}
\end{table}

\begin{table}[ht!]
\vspace{0.5cm}
\centering
\begin{tabular}{cccccc}
\hline
& \begin{tabular}{c} Sample Mean \end{tabular} & \begin{tabular}{c} Sample SD \end{tabular} & \begin{tabular}{c} Mean of \\ Relative Error \end{tabular} & \begin{tabular}{c} SD of \\ Relative Error \end{tabular} & \begin{tabular}{c} MSE \end{tabular} \\
\hline
$\sigma_{0}^{2}$ & 0.172473 & 0.002347 & 0.077956 & 0.014667 & 1.6108E-04 \\
$\wh{\sigma}_{n,1}^{2}$ & 0.154598 & 0.007496 & -0.033763 & 0.046851 & 8.5376E-05 \\
$\wh{C}_{n,1}$ & 0.003444 & 0.002236 & -0.877006 & 0.079863 & 6.0801E-04 \\
$\wh{Y}_{n,1}$ & 1.762248 & 0.118304 & 0.305369 & 0.087633 & 1.8394E-01 \\
$\wh{\varepsilon}_{n,1}$ & 0.008141 & 0.000684 & -0.120849 & 0.073887 & 1.7204E-06 \\
$\wh{\sigma}_{n,2}^{2}$ & 0.154037 & 0.004898 & -0.037268 & 0.030615 & 5.9550E-05 \\
$\wh{C}_{n,2}$ & 0.016458 & 0.010854 & -0.412220 & 0.387637 & 2.5103E-04 \\
$\wh{Y}_{n,2}$ & 1.523080 & 0.171764 & 0.128208 & 0.127233 & 5.9460E-02 \\
$\wh{\varepsilon}_{n}^{\,\star}$ & 0.007751 & 0.002270 & -0.162936 & 0.245181 & 7.4310E-06 \\
$(\wh{\sigma}_{n}^{\star})^{2}$ & 0.156800 & 0.005165 & -0.020002 & 0.032280 & 3.6916E-05 \\
$\wh{\sigma}_{n,\text{JT}}^{2}$ & 0.158787 & 0.009349 & -0.007584 & 0.058430 & 8.8874E-05 \\
$(\wt{\sigma}_{n}^{\star})^{2}$ & 0.160665 & 0.001567 & 0.004155 & 0.009797 & 2.8988E-06 \\
$(\sigma^{\star}_{n})^{2}$ & 0.160439 & 0.001562 & 0.002746 & 0.009763 & 2.6332E-06 \\
\hline
\end{tabular}
\caption{\small Estimation performance based on simulated $5$-minute observations of $2000$ paths over a one-year time horizon. The parameters are $C=0.028$, $Y=1.35$, and $\sigma=0.4$. We take $\sigma_{0}^{2}=\wh{\sigma}_{n,01}^{2}$. In this case, we compute $\varepsilon^{\star}_{n}=0.00926$ and $\wt{\varepsilon}_{n}^{\,\star}=0.009338$.} \label{table:C28Y135sigma401}
\end{table}

\subsection{Integrated Variance Estimation for a Stochastic Volatility Model with Jumps}

In this subsection, we apply the method in the previous subsection to estimate the integrated variance under a stochastic volatility model with a CGMY jump component. We also examine the finite sample performance of the resulting estimator and compare it with the estimator of Jacod and Todorov \cite{JacodTodorov:2014}. The basic idea is to split the time-period $[0,T]$ into smaller subintervals so that $\sigma$ would be approximately constant in each subinterval and, hence, $X$ is approximately L\'{e}vy within that interval. We then apply the method developed in Subsection \ref{Sect51} to each subinterval to estimate the volatility level in each subinterval and finally aggregate the resulting estimates to estimate the integrated volatility.

Specifically, we consider the following Heston model:
\begin{align*}
X_{t}=1+\int_{0}^{t}\sqrt{V_{s}}\,dW_{s}+J_{t},\quad V_{t}=\theta+\int_{0}^{t}\kappa\big(\theta-V_{s}\big)\,ds+\xi\int_{0}^{t}\sqrt{V_{s}}\,dB_{s},\quad t\in\bR_{+},
\end{align*}
where $(W_{t})_{t\in\bR_{+}}$ and $(B_{t})_{t\in\bR_{+}}$ are two independent standard Brownian motions and $(J_{t})_{t\in\bR_{+}}$ is a CGMY L\'{e}vy process independent of $(W_{t})_{t\in\bR_{+}}$ and $(B_{t})_{t\in\bR_{+}}$. The parameters of the volatility specification are set as
\begin{align*}
\kappa=5,\quad\xi=0.5,\quad\theta=0.16.
\end{align*}
The values of $\kappa$ and $\xi$ above are borrowed from \cite{ZhangMyklandAitSahalia:2005}. In the simulation, we experiment with values of $Y=1.5$ and $Y=1.7$, and compute the estimated integrated volatility for one day under two different estimators.

We consider $5$-second observations over a one-year ($252$ days) time horizon with $6.5$ trading hours per day. We set $k_{n}=160$, which corresponds to $30$ blocks per day. As mentioned above, we treat the stochastic volatility as a constant volatility in each block, so that we can estimate the integrated volatility for each block by computing our estimator $(\wh{\sigma}_{n}^{\star})^{2}$ with all the estimation parameters specified as in Subsection \ref{constCGMY}. We then add the integrated volatilities for the $30$ blocks to compute our daily estimator of the integrated volatility $\int_{t}^{t+1/252}V_{s}ds$ for that day. For the estimator of {\cite{JacodTodorov:2014}}, we use both equations (4.2) and (5.3) therein with $k_{n}=160$ (number of observation in each block), $\xi=1.5$, and $u_{n}=0.05(-\ln h_{n})^{-1/30}/\sqrt{BV}$, where $BV$ is the bipower variation of the previous day. To assess the accuracy of the different methods, we compute the Median Absolute Deviation (MAD) around the true value, $\int_{t}^{t+1/252}V_{s}ds$, over $200$ simulation paths.

Figure \ref{Fig:path53} shows the estimated integrated volatility for each day computed by our MME (solid black line) and the JT estimator in \cite{JacodTodorov:2014} (dotted blue line) (see Eq.~(5.3) therein), and the true daily integrated volatility (dashed red) for $1$ simulation path. When $Y=1.7$, the JT estimator tends to jitter around the true value, while our MME exhibits better performance. This behavior is further corroborated by Table \ref{tab7}, which shows the MADs of both our MME and JT estimator for $6$ different days based on $200$ simulated paths. However, when $Y=1.5$, the JT estimator outperforms our MME, as shown in the right panel of Figure \ref{Fig:path53} and Table \ref{tab7}. Now, it is important to point out that the daily estimate (5.3) of \cite{JacodTodorov:2014} is based on more data than that used in our estimates. Indeed, the estimator (5.3) in \cite{JacodTodorov:2014} employs a debiasing procedure, whose debiasing term consists of two components: one that can be computed using the data in each day and another term, depending only on $Y$, that is computed using the data during the whole time horizon (in this case, one year worth of data). To explore the performance of the estimator using only contemporary data, we also analyze the performance of the estimator (4.2) in \cite{JacodTodorov:2014}, which can be computed using only the data collected in each day. When $Y=1.7$ (left panel of Figure \ref{Fig:path42}), the daily estimated integrated volatility (4.2) in \cite{JacodTodorov:2014} overestimates the true integrated volatility, and produces extremely large or small estimates at some points. We also observe this behavior when $Y=1.5$ (right panel of Figure \ref{Fig:path42}), but the estimate is much more stable and outperforms our MME most of the time, as shown in Table \ref{tab7}. To summarize, when $Y>1.5$ is large, our approach performs fairly well for the stochastic volatility model.

\begin{figure}[ht!]
\begin{center}
\hspace{-0.3cm}
\includegraphics[width=8cm,height=6cm]{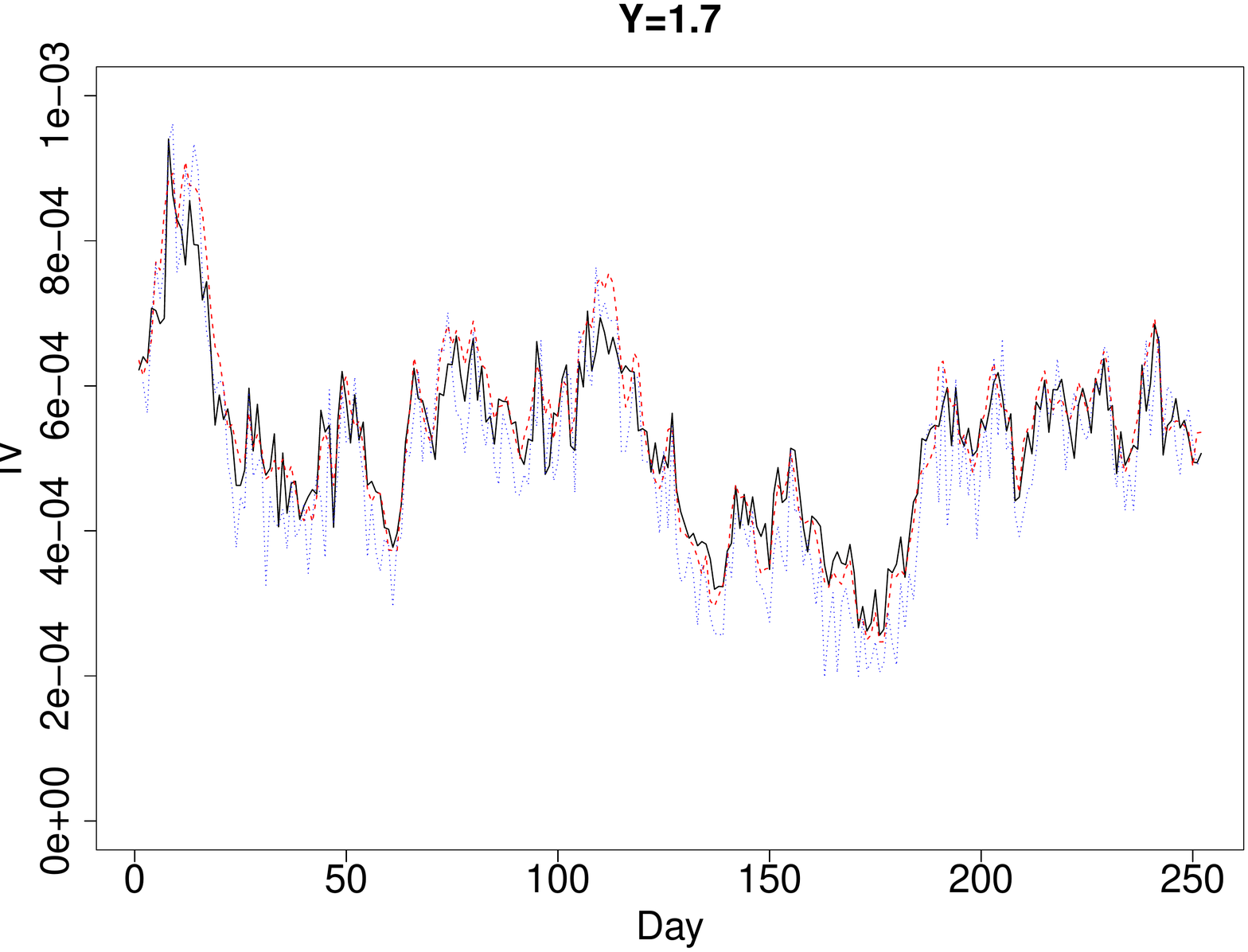}
\hspace{-0.4cm}
\includegraphics[width=8cm,height=6cm]{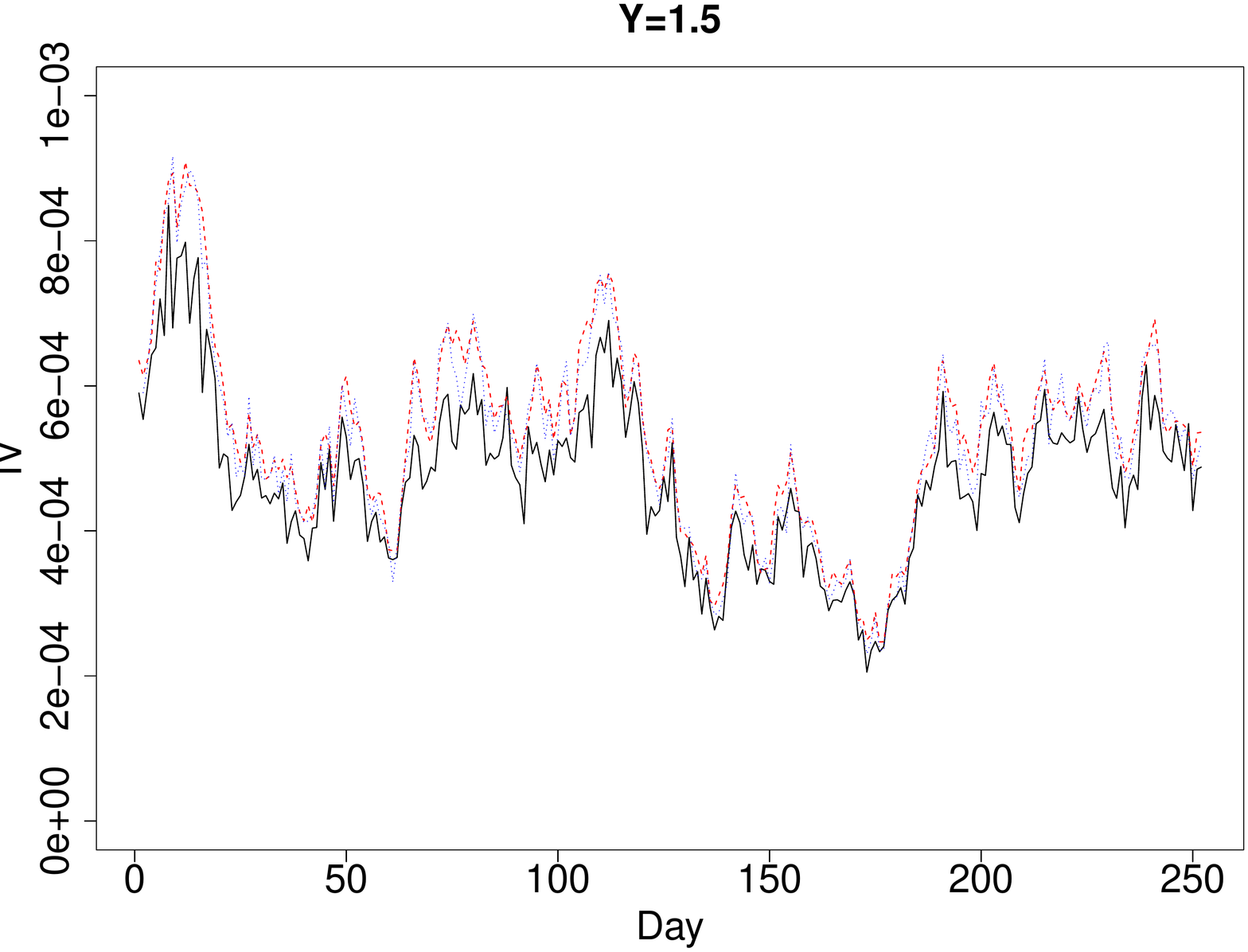}
\end{center}
\vspace{-0.3cm}
\caption{\small Graphs of the daily integrated volatility estimates for $Y=1.7$ (left panel) and $Y=1.5$ (right panel), respectively. The dashed red line is the true daily integrated volatility, while the solid black (respectively, dotted blue) line shows the daily estimates using our MME (respectively, (5.3) in \cite{JacodTodorov:2014}).} \label{Fig:path53}
\end{figure}

\begin{figure}[ht!]
\begin{center}
\hspace{-0.3cm}
\includegraphics[width=8cm,height=6cm]{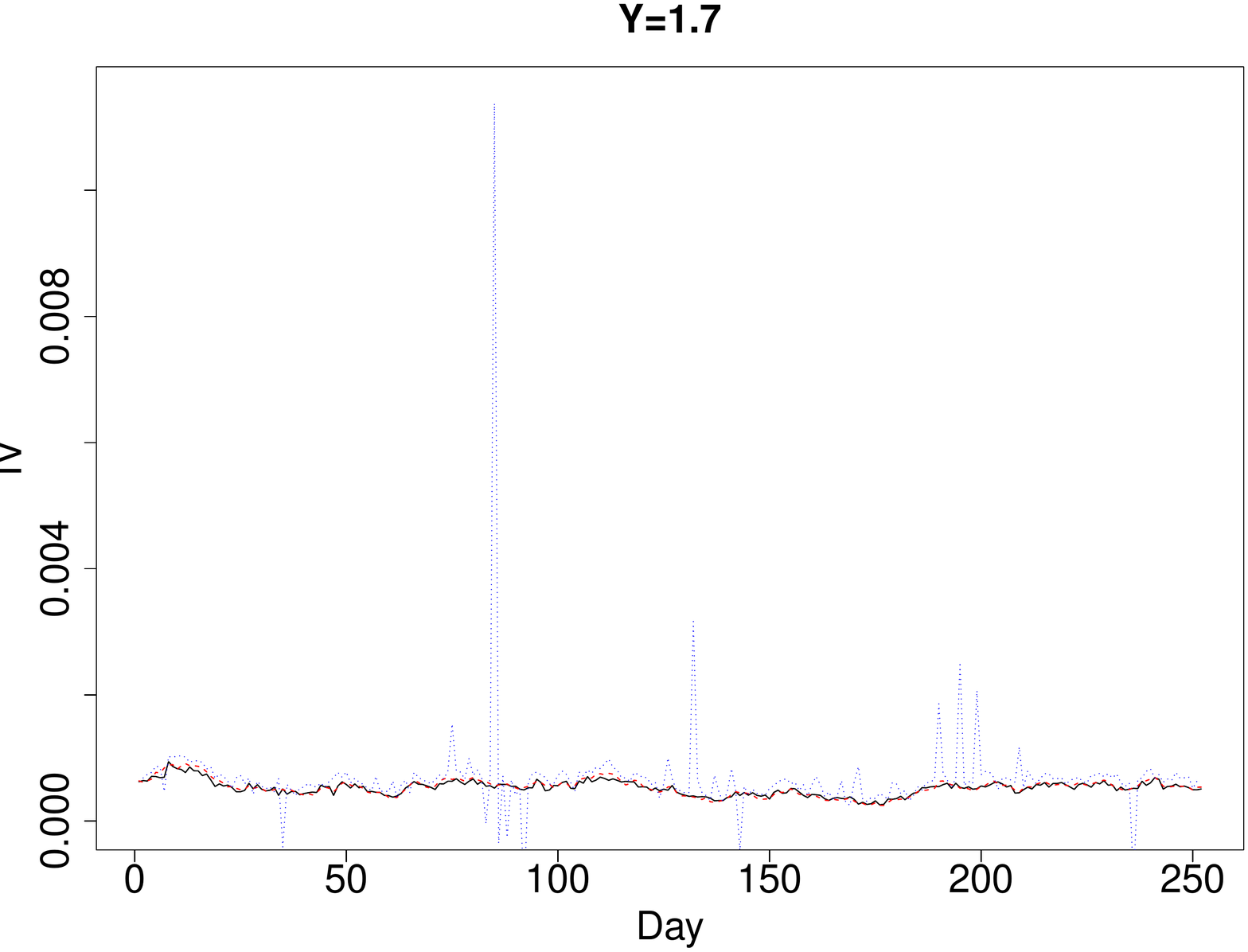}
\hspace{-0.4cm}
\includegraphics[width=8cm,height=6cm]{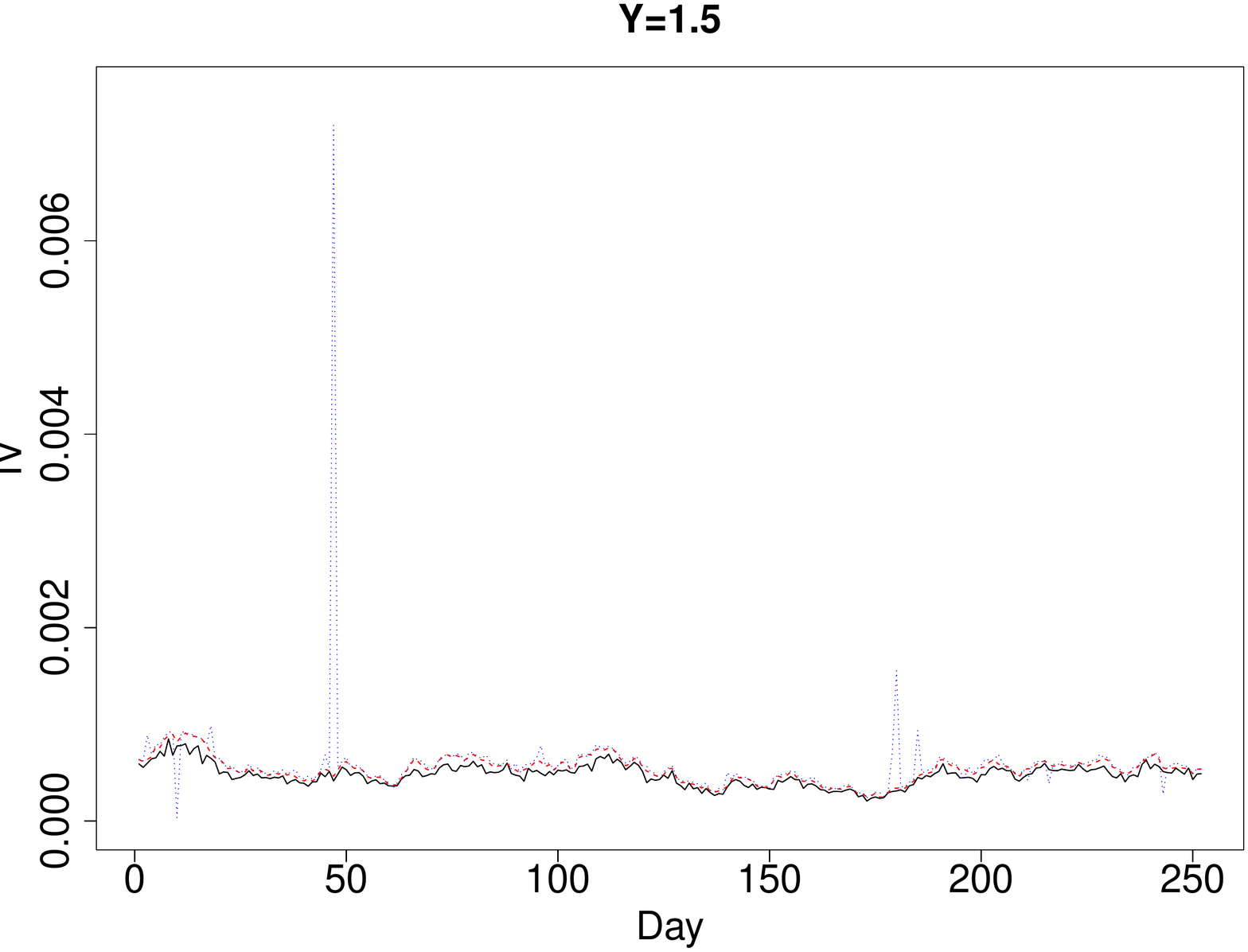}
\end{center}
\vspace{-0.3cm}
\caption{\small Graphs of the daily integrated volatility estimates for $Y=1.7$ (left panel) and $Y=1.5$ (right panel), respectively. The dashed red line is the true daily integrated volatility, while the solid black (respectively, dotted blue) line shows the daily estimates using our MME (respectively, (4.2) in \cite{JacodTodorov:2014}).} \label{Fig:path42}
\end{figure}

\begin{table}[ht!]
\vspace{0.5cm}
\centering
\begin{tabular}{c|ccccccc}
\hline
$Y$ & Method & Day 2 & Day 52 & Day 102 & Day 152 & Day 202 & Day 252 \\
\hline
\multirow{3}{*}{$1.7$} & MAD.MME & 2.4498E-05 & 3.7819E-05 & 3.4568E-05 & 3.7059E-05 & 3.7501E-05 & 2.7117E-05 \\
& MAD.JT (4.2) & 1.0087E-04 & 9.6535E-05 & 1.1127E-04 & 1.0820E-04 & 1.0253E-04 & 1.0623E-04 \\
& MAD.JT (5.3) & 5.0516E-05 & 4.6288E-05 & 4.7421E-05 & 4.9100E-05 & 5.2230E-05 & 4.3879E-05 \\
\hline
\multirow{3}{*}{$1.5$} & MAD.MME & 6.8551E-05 & 6.6529E-05 & 5.6600E-05 & 6.4519E-05 & 6.3395E-05 & 6.2592E-05 \\
& MAD.JT (4.2) & 1.9978E-05 & 1.9429E-05 & 1.8714E-05 & 1.7270E-05 & 2.0778E-05 & 2.2255E-05 \\
& MAD.JT (5.3) & 1.3410E-05 & 1.1542E-05 & 1.0880E-05 & 9.5053E-06 & 1.6174E-05 & 1.4615E-05 \\
\hline
\end{tabular}
\caption{\small The MADs for our MME and the estimators in \cite{JacodTodorov:2014} for $6$ arbitrary sampled days. The results are based on simulated $5$-second observations of $200$ paths over a one-year time horizon with $Y=1.7$ and $Y=1.5$, respectively.}\label{tab7}
\end{table}

\section*{Acknowledgements}
{\Red The authors are grateful to the Editor and two anonymous referees for their multiple suggestions that help to improve the original manuscript.}

\appendix

\section{Proof of Theorem \ref{lem:Asymb1eps}}\label{sec:ProofLemAsymb1eps}

We start with the decomposition:
\begin{align}
\bE\big(b_{1}(\varepsilon)\big)&=\bE\Big(\big(\sigma W_{h}+J_{h}\big)^{2}\,{\bf 1}_{\{|\sigma W_{h}+J_{h}|\leq\varepsilon\}}\Big)\\
\label{eq:Decompb1eps} &=\sigma^{2}\,\bE\Big(W_{h}^{2}\,{\bf 1}_{\{|\sigma W_{h}+J_{h}|\leq\varepsilon\}}\Big)\!+\!\bE\Big(J_{h}^{2}\,{\bf 1}_{\{|\sigma W_{h}+J_{h}|\leq\varepsilon\}}\Big)\!+\!2\sigma\,\bE\Big(W_{h}J_{h}{\bf 1}_{\{|\sigma W_{h}+J_{h}|\leq\varepsilon\}}\Big).\quad
\end{align}
The asymptotic behavior of each of the terms above is obtained in three steps.

\bigskip
\noindent
\textbf{Step 1.} We first analyze the behavior of the first term in \eqref{eq:Decompb1eps} as $h\rightarrow 0$. By \eqref{eq:DenTranTildePP} and \eqref{eq:DecompUpm}, we first decompose it as
\begin{align}
\bE\Big(W_{h}^{2}\,{\bf 1}_{\{|\sigma W_{h}+J_{h}|\leq\varepsilon\}}\Big)&=h-\bE\Big(W_{h}^{2}\,{\bf 1}_{\{|\sigma W_{h}+J_{h}|>\varepsilon\}}\Big)=h-\wt{\bE}\Big(e^{-\wt{U}_{h}-\eta h}\,W_{h}^{2}\,{\bf 1}_{\{|\sigma W_{h}+J_{h}|>\varepsilon\}}\Big)\\
&=h-he^{-\eta h}\,\wt{\bE}\Big(W_{1}^{2}\,{\bf 1}_{\{|\sigma\sqrt{h}W_{1}+J_{h}|>\varepsilon\}}\Big)\\
&\quad -he^{-\eta h}\,\wt{\bE}\Big(\Big(e^{-\wt{U}_{h}}-1\Big)W_{1}^{2}\,{\bf 1}_{\{|\sigma\sqrt{h}W_{1}+J_{h}|>\varepsilon\}}\Big)\\
\label{eq:Decompb1eps1} &=:h-he^{-\eta h}I_{1}(h)-he^{-\eta h}I_{2}(h).
\end{align}
In what follows, we will analyze the asymptotic behavior of $I_{1}(h)$ and $I_{2}(h)$, as $h\rightarrow 0$, respectively, in two sub-steps.

\medskip
\noindent
\textbf{Step 1.1.} Clearly, by \eqref{eq:DecompJZpm} and the symmetry of $W_{1}$,
\begin{align}\label{eq:Decompb1eps12}
I_{1}(h)=\wt{\bE}\Big(W_{1}^{2}\,{\bf 1}_{\{\sigma\sqrt{h}W_{1}+J_{h}>\varepsilon\}}\Big)+\wt{\bE}\Big(W_{1}^{2}\,{\bf 1}_{\{\sigma\sqrt{h}W_{1}-J_{h}>\varepsilon\}}\Big)=:I_{1}^{+}(h)+I_{1}^{-}(h).
\end{align}
Denote by
\begin{align*}
\phi(x):=\frac{1}{\sqrt{2\pi}}e^{-x^{2}/2},\quad\overline{\Phi}(x):=\int_{x}^{\infty}\phi(x)\,dx,\quad x\in\bR.
\end{align*}
By conditioning on $J_{h}$ and using the fact that $\wt{\bE}(W_{1}^{2}{\bf 1}_{\{W_{1}>x\}})=x\phi(x)+\overline{\Phi}(x)$, for all $x\in\bR$, we have that
\begin{align}\label{eq:Decompb1eps12pm}
I_{1}^{\pm}(h)=\wt{\bE}\left(\bigg(\frac{\varepsilon}{\sigma\sqrt{h}}\mp\frac{J_{h}}{\sigma\sqrt{h}}\bigg)\phi\bigg(\frac{\varepsilon}{\sigma\sqrt{h}}\mp\frac{J_{h}}{\sigma\sqrt{h}}\bigg)+\overline{\Phi}\bigg(\frac{\varepsilon}{\sigma\sqrt{h}}\mp\frac{J_{h}}{\sigma\sqrt{h}}\bigg)\right).
\end{align}

Let $p_{J,h}^{\pm}$ be the density of $\pm J_{h}$ under $\wt{\bP}$, and recall the Fourier transform and its inverse transform defined by
\begin{align*}
(\cF g)(z):=\frac{1}{\sqrt{2\pi}}\int_{\bR}g(x)e^{-izx}\,dx,\quad\big(\cF^{-1}g\big)(x):=\frac{1}{\sqrt{2\pi}}\int_{\bR}g(z)e^{izx}\,dz.
\end{align*}
In what follows, we set
\begin{align}
\psi(x)&:=\bigg(\cF^{-1}\phi\bigg(\frac{\cdot}{\sigma\sqrt{h}}-\frac{\varepsilon}{\sigma\sqrt{h}}\bigg)\bigg)(x)=\frac{1}{\sqrt{2\pi}}\int_{\bR}\phi\bigg(\frac{z}{\sigma\sqrt{h}}-\frac{\varepsilon}{\sigma\sqrt{h}}\bigg)e^{izx}\,dz\\
\label{eq:Defpsi} &\,=\frac{\sigma\sqrt{h}\,e^{i\varepsilon x}}{\sqrt{2\pi}}\int_{\bR}\phi(\omega)e^{i\sigma\sqrt{h}\omega x}\,d\omega=\frac{\sigma\sqrt{h}}{\sqrt{2\pi}}\exp\bigg(i\varepsilon x-\frac{1}{2}\sigma^{2}x^{2}h\bigg).
\end{align}
Then, we deduce that
\begin{align*}
\wt{\bE}\left(\phi\bigg(\frac{\varepsilon}{\sigma\sqrt{h}}\pm\frac{J_{h}}{\sigma\sqrt{h}}\bigg)\right)=\int_{\bR}\big(\cF\psi\big)(z)p_{J,h}^{\mp}(z)\,dz=\int_{\bR}\psi(u)\big(\cF p_{J,h}^{\mp}\big)(u)\,du,
\end{align*}
where
\begin{align}
\big(\cF p_{J,h}^{\pm}\big)(u)&=\frac{e^{\mp iu\wt{\gamma}h}}{\sqrt{2\pi}}\exp\!\left(\!-(C_{+}\!+C_{-})\,\bigg|\!\cos\bigg(\!\frac{\pi Y}{2}\!\bigg)\!\bigg|\Gamma(-Y)|u|^{Y}h\bigg(\!1-i\frac{C_{+}\!-C_{-}}{C_{+}\!+C_{-}}\tan\!\bigg(\!\frac{\pi Y}{2}\!\bigg)\text{sgn}(u)\!\bigg)\!\right)\\
\label{eq:FourierDenJhpm} &=:\frac{1}{\sqrt{2\pi}}\exp\Big(c_{1}|u|^{Y}h+ic_{2}|u|^{Y}h\,\text{sgn}(u)\mp iu\wt{\gamma}h\Big),
\end{align}
with {$c_{1}:=(C_{+}\!+C_{-})\,\cos(\pi Y/2)\Gamma(-Y)$ and $c_{2}:=(C_{-}-C_{+})\sin(\pi Y/2)\Gamma(-Y)$. Hence, we have}
\begin{align}\label{eq:Expphi}
\wt{\bE}\left(\phi\bigg(\frac{\varepsilon}{\sigma\sqrt{h}}\pm\frac{J_{h}}{\sigma\sqrt{h}}\bigg)\right)&=\frac{\sigma\sqrt{h}}{2\pi}\!\!\int_{\bR}\exp\!\bigg(\!c_{1}|u|^{Y}h+ic_{2}|u|^{Y}h\,\text{sgn}(u)-\frac{1}{2}\sigma^{2}u^{2}h+iu\big(\varepsilon\!\pm\!\wt{\gamma}h\big)\!\bigg)du\\
&=\frac{\sigma\sqrt{h}}{\pi}\int_{0}^{\infty}\exp\bigg(c_{1}u^{Y}h-\frac{1}{2}\sigma^{2}u^{2}h\bigg)\cos\Big(c_{2}u^{Y}h+u\big(\varepsilon\pm\wt{\gamma}h\big)\Big)du\\
&=\frac{1}{\pi}\int_{0}^{\infty}\exp\!\bigg(c_{1}\cdot\frac{\omega^{Y}h^{1-Y/2}}{\sigma^{Y}}-\frac{\omega^{2}}{2}\bigg)\cos\!\bigg(c_{2}\cdot\frac{\omega^{Y}h^{1-Y/2}}{\sigma^{Y}}+\omega\cdot\frac{\varepsilon\!\pm\!\wt{\gamma}h}{\sigma\sqrt{h}}\bigg)d\omega\\
&=\frac{1}{\pi}\int_{0}^{\infty}\!\exp\!\bigg(c_{1}\!\cdot\!\frac{\omega^{Y}h^{1-Y/2}}{\sigma^{Y}}\!-\!\frac{\omega^{2}}{2}\bigg)\cos\!\bigg(c_{2}\!\cdot\!\frac{\omega^{Y}h^{1-Y/2}}{\sigma^{Y}}\bigg)\cos\!\bigg(\omega\!\cdot\!\frac{\varepsilon\!\pm\!\wt{\gamma}h}{\sigma\sqrt{h}}\bigg)d\omega\\
&\quad -\!\frac{1}{\pi}\!\int_{0}^{\infty}\!\exp\!\bigg(\!c_{1}\!\cdot\!\frac{\omega^{Y}h^{1-Y/2}}{\sigma^{Y}}\!-\!\frac{\omega^{2}}{2}\bigg)\sin\!\bigg(\!c_{2}\!\cdot\!\frac{\omega^{Y}h^{1-Y/2}}{\sigma^{Y}}\bigg)\sin\!\bigg(\!\omega\!\cdot\!\frac{\varepsilon\!\pm\!\wt{\gamma}h}{\sigma\sqrt{h}}\bigg)d\omega.
\end{align}
By expanding the Taylor series for $\exp(c_{1}\sigma^{-Y}|\omega|^{Y}h^{1-Y/2})$, as well as for $\cos(c_{2}\sigma^{-Y}\omega^{Y}h^{1-Y/2})$ and $\sin(c_{2}\sigma^{-Y}\omega^{Y}h^{1-Y/2})$, we deduce that
\begin{align}
&\wt{\bE}\left(\phi\bigg(\frac{\varepsilon}{\sigma\sqrt{h}}\pm\frac{J_{h}}{\sigma\sqrt{h}}\bigg)\right)\\
&\,\,\,\,=\frac{1}{\pi}\!\int_{0}^{\infty}\!\cos\!\bigg(\omega\!\cdot\!\frac{\varepsilon\pm\wt{\gamma}h}{\sigma\sqrt{h}}\bigg)e^{-\omega^{2}/2}d\omega\!+\!\frac{1}{\pi}\!\!\sum_{(k,j)\in\bZ^{2}_{+}:\,(k,j)\neq (0,0)}\!\!(-1)^{j}a_{k,2j,0}^{\pm}(h)\!-\!\frac{1}{\pi}\sum_{k,j=0}^{\infty}(-1)^{j}d_{k,2j+1,0}^{\pm}(h)\\
\label{eq:ExpandExpphi} &\,\,\,\,=\frac{1}{\sqrt{2\pi}}\exp\bigg(\!\!-\!\frac{(\varepsilon\pm\wt{\gamma}h)^{2}}{2\sigma^{2}h}\bigg)+\frac{1}{\pi}\!\sum_{(k,j)\in\bZ^{2}_{+}:\,(k,j)\neq (0,0)}\!(-1)^{j}a_{k,2j,0}^{\pm}(h)-\frac{1}{\pi}\sum_{k,j=0}^{\infty}(-1)^{j}d_{k,2j+1,0}^{\pm}(h),
\end{align}
where, for $m,n\in\bZ_{+}$ and $r\in\bR_{+}$,
\begin{align}\label{eq:KummerPowerCos}
a_{m,n,r}^{\pm}(h)&:=\frac{c_{1}^{m}\,c_{2}^{n}\,h^{(m+n)(1-Y/2)}}{m!\,n!\,\sigma^{(m+n)Y}}\int_{0}^{\infty}\omega^{(m+n)Y+r}\cos\bigg(\omega\cdot\frac{\varepsilon\pm\wt{\gamma}h}{\sigma\sqrt{h}}\bigg)e^{-\omega^{2}/2}\,d\omega,\\
\label{eq:KummerPowerSin} d_{m,n,r}^{\pm}(h)&:=\frac{c_{1}^{m}\,c_{2}^{n}\,h^{(m+n)(1-Y/2)}}{m!\,n!\,\sigma^{(m+n)Y}}\int_{0}^{\infty}\omega^{(m+n)Y+r}\sin\bigg(\omega\cdot\frac{\varepsilon\pm\wt{\gamma}h}{\sigma\sqrt{h}}\bigg)e^{-\omega^{2}/2}\,d\omega.
\end{align}
By applying the formula for the integrals of $\omega^{kY}\cos(\beta\omega)$ and $\omega^{kY}\sin(\beta\omega)$ with respect to $e^{-\omega^{2}/2}$ on $\bR_{+}$, as well as the asymptotics for the Kummer's function $M(a,b,z)$, as $h\rightarrow 0$, we deduce that
\begin{align*}
a_{m,n,r}^{\pm}(h)&=\frac{c_{1}^{m}\,c_{2}^{n}\,h^{(m+n)(1-Y/2)}}{m!\,n!\,\sigma^{(m+n)Y}}\cdot 2^{((m+n)Y+r-1)/2}\,\Gamma\bigg(\frac{(m+n)Y+r+1}{2}\bigg)\\
&\quad\,\cdot M\bigg(\frac{(m+n)Y+r+1}{2},\frac{1}{2},-\frac{\big(\varepsilon\pm\wt{\gamma}h\big)^{2}}{2\sigma^{2}h}\bigg)\\
&\sim\frac{c_{1}^{m}\,c_{2}^{n}\,h^{(m+n)(1-Y/2)}}{m!\,n!\,\sigma^{(m+n)Y}}\cdot 2^{((m+n)Y+r-1)/2}\,\Gamma\bigg(\frac{(m+n)Y+r+1}{2}\bigg)\\
&\quad\,\cdot\left(\frac{\Gamma(1/2)}{\Gamma\big(\!-((m+n)Y+r)/2\big)}\bigg(\frac{\varepsilon^{2}}{2\sigma^{2}h}\bigg)^{-((m+n)Y+r+1)/2}\right.\\
&\qquad\,\,\left.+\frac{\Gamma(1/2)\,e^{-\varepsilon^{2}/(2\sigma^{2}h)}}{\Gamma\big(((m+n)Y+r+1)/2\big)}\bigg(\frac{\varepsilon^{2}}{2\sigma^{2}h}\bigg)^{((m+n)Y+r)/2}\right),
\end{align*}
and that
\begin{align*}
d_{m,n,r}^{\pm}(h)&=\frac{c_{1}^{m}\,c_{2}^{n}\,h^{(m+n)(1-Y/2)}}{m!\,n!\,\sigma^{(m+n)Y}}\cdot\frac{\varepsilon\pm\wt{\gamma}h}{\sigma\sqrt{h}}\cdot 2^{((m+n)Y+r)/2}\,\Gamma\bigg(\frac{(m+n)Y+r}{2}+1\bigg)\\
&\quad\,\cdot M\bigg(\frac{(m+n)Y+r}{2}+1,\frac{3}{2},-\frac{\big(\varepsilon\pm\wt{\gamma}h\big)^{2}}{2\sigma^{2}h}\bigg)\\
&\sim\frac{c_{1}^{m}\,c_{2}^{n}\,h^{(m+n)(1-Y/2)}}{m!\,n!\,\sigma^{(m+n)Y}}\cdot\frac{\varepsilon}{\sigma\sqrt{h}}\cdot 2^{((m+n)Y+r)/2}\,\Gamma\bigg(\frac{(m+n)Y+r}{2}+1\bigg)\\
&\quad\,\cdot\left(\frac{\Gamma(3/2)}{\Gamma\big((1-(m+n)Y-r)/2\big)}\cdot\bigg(\frac{\varepsilon^{2}}{2\sigma^{2}h}\bigg)^{-((m+n)Y+r)/2-1}\right.\\
&\qquad\quad\left.+\frac{\Gamma(3/2)\,e^{-\varepsilon^{2}/(2\sigma^{2}h)}}{\Gamma\big((m+n)Y+r)/2+1\big)}\cdot\bigg(\frac{\varepsilon^{2}}{2\sigma^{2}h}\bigg)^{((m+n)Y+r-1)/2}\right).
\end{align*}
In the asymptotic formulas for the Kummers function in the expression of $a_{n,m,r}^{\pm}(h)$ above, the first term vanishes if $\Gamma(-((m+n)Y+r)/2)$ are infinity. This happens when $-((m+n)Y+r)/2$ is a nonpositive integer. Similarly, in the asymptotic formulas for the Kummers function in $d_{m,n,r}^{\pm}(h)$, the first term vanishes if $(1-(m+n)Y-r)/2$ is a nonpositive integer. Hence, for $m,n\in\bZ_{+}$ and $r\in\bR_{+}$, as $h\rightarrow 0$,
\begin{align}\label{eq:LimitKummerPowerCos}
a_{m,n,r}^{\pm}(h)&=O\bigg(\frac{h^{m+n+(r+1)/2}}{\varepsilon^{(m+n)Y+r+1}}\bigg)\!+O\bigg(\frac{\varepsilon^{(m+n)Y+r}e^{-\varepsilon^{2}/(2\sigma^{2}h)}}{h^{(m+n)(Y-1)+r/2}}\bigg)\!=O\bigg(\frac{h^{m+n+(r+1)/2}}{\varepsilon^{(m+n)Y+r+1}}\bigg),\\
\label{eq:LimitKummerPowerSin} d_{m,n,r}^{\pm}(h)&=O\bigg(\frac{h^{m+n+(r+1)/2}}{\varepsilon^{(m+n)Y+r+1}}\bigg)\!+O\bigg(\frac{\varepsilon^{(m+n)Y+r}e^{-\varepsilon^{2}/(2\sigma^{2}h)}}{h^{(m+n)(Y-1)+r/2}}\bigg)\!=O\bigg(\frac{h^{m+n+(r+1)/2}}{\varepsilon^{(m+n)Y+r+1}}\bigg).
\end{align}
Therefore, by combining \eqref{eq:ExpandExpphi}, \eqref{eq:LimitKummerPowerCos}, and \eqref{eq:LimitKummerPowerSin}, we obtain that
\begin{align}\label{eq:LimitExpphi}
\wt{\bE}\left(\phi\bigg(\frac{\varepsilon}{\sigma\sqrt{h}}\pm\frac{J_{h}}{\sigma\sqrt{h}}\bigg)\right)=\frac{e^{-\varepsilon^{2}/(2\sigma^{2}h)}}{\sqrt{2\pi}}+O\big(h^{3/2}\varepsilon^{-1-Y}\big),\quad\text{as }\,h\rightarrow 0.
\end{align}

Next, we note that
\begin{align*}
\wt{\bE}\left(\mp J_{h}\,\phi\bigg(\frac{\varepsilon}{\sigma\sqrt{h}}\pm\frac{J_{h}}{\sigma\sqrt{h}}\bigg)\right)=\int_{\bR}(\cF\psi)(z)zp_{J,h}^{\mp}(z)\,dz=\int_{\bR}\psi(u)\cF\big(zp_{J,h}^{\mp}(z)\big)(u)\,du,
\end{align*}
where by \eqref{eq:FourierDenJhpm},
\begin{align*}
\cF\big(zp_{J,h}^{\pm}(z)\big)(u)=i\frac{d}{du}\big(\cF p_{J,h}^{\pm}\big)(u)&=\frac{i}{\sqrt{2\pi}}\exp\Big(c_{1}|u|^{Y}h+ic_{2}|u|^{Y}h\,\text{sgn}(u)\mp iu\wt{\gamma}h\Big)\\
&\quad\,\cdot\Big(c_{1}Y|u|^{Y-1}\,\text{sgn}(u)h+ic_{2}Y|u|^{Y-1}h\mp i\wt{\gamma}h\Big).
\end{align*}
Together with \eqref{eq:Defpsi}, we obtain that
\begin{align}
&\wt{\bE}\left(\mp J_{h}\,\phi\bigg(\frac{\varepsilon}{\sigma\sqrt{h}}\pm\frac{J_{h}}{\sigma\sqrt{h}}\bigg)\right)\\
&\quad =\frac{ic_{1}\sigma Yh^{3/2}}{2\pi}\int_{\bR}\,\text{\sgn}(u)|u|^{Y-1}\exp\bigg(c_{1}|u|^{Y}h+ic_{2}|u|^{Y}h\,\text{sgn}(u)-\frac{\sigma^{2}u^{2}h}{2}+iu\big(\varepsilon\pm\wt{\gamma}h\big)\bigg)du\\
&\qquad -\frac{c_{2}\sigma Yh^{3/2}}{2\pi}\int_{\bR}\,|u|^{Y-1}\exp\bigg(c_{1}|u|^{Y}h+ic_{2}|u|^{Y}h\,\text{sgn}(u)-\frac{\sigma^{2}u^{2}h}{2}+iu\big(\varepsilon\pm\wt{\gamma}h\big)\bigg)du\\
\label{eq:DecompExpJhphi} &\qquad\mp\frac{\wt{\gamma}\sigma h^{3/2}}{2\pi}\int_{\bR}\,\exp\bigg(c_{1}|u|^{Y}h+ic_{2}|u|^{Y}h\,\text{sgn}(u)-\frac{\sigma^{2}u^{2}h}{2}+iu\big(\varepsilon\pm\wt{\gamma}h\big)\bigg)du
\end{align}
The asymptotics of the last term above follows from \eqref{eq:Expphi} and \eqref{eq:LimitExpphi}, namely
\begin{align}
&\frac{\wt{\gamma}\sigma h^{3/2}}{2\pi}\int_{\bR}\,\exp\bigg(c_{1}|u|^{Y}h+ic_{2}|u|^{Y}h\,\text{sgn}(u)-\frac{\sigma^{2}u^{2}h}{2}+iu\big(\varepsilon\pm\wt{\gamma}h\big)\bigg)du\\
\label{eq:LimitExpJhphi3} &\quad =O\Big(h\,e^{-\varepsilon^{2}/(2\sigma^{2}h)}\Big)\!+\!O\big(h^{5/2}\varepsilon^{-1-Y}\big),\quad\text{as }\,h\rightarrow 0.
\end{align}
For the first term in \eqref{eq:DecompExpJhphi}, by expanding the Taylor series for $\exp(c_{1}\sigma^{-Y}|\omega|^{Y}h^{1-Y/2})$, as well as for $\cos(c_{2}\sigma^{-Y}\omega^{Y}h^{1-Y/2})$ and $\sin(c_{2}\sigma^{-Y}\omega^{Y}h^{1-Y/2})$ below, and using \eqref{eq:KummerPowerCos} and \eqref{eq:KummerPowerSin}, we have
\begin{align}
&\frac{ic_{1}\sigma Yh^{3/2}}{2\pi}\int_{\bR}\,\text{\sgn}(u)|u|^{Y-1}\exp\bigg(c_{1}|u|^{Y}h+ic_{2}|u|^{Y}h\,\text{sgn}(u)-\frac{\sigma^{2}u^{2}h}{2}+iu\big(\varepsilon\pm\wt{\gamma}h\big)\bigg)du\\
&\quad = -\frac{c_{1}\sigma Yh^{3/2}}{\pi}\int_{0}^{\infty}u^{Y-1}\exp\bigg(c_{1}u^{Y}h-\frac{\sigma^{2}u^{2}h}{2}\bigg)\sin\Big(c_{2}u^{Y}h+u\big(\varepsilon\pm\wt{\gamma}h\big)\Big)du\\
&\quad = -\frac{c_{1}Yh^{(3-Y)/2}}{\sigma^{Y-1}\pi}\int_{0}^{\infty}\omega^{Y-1}\exp\bigg(\frac{c_{1}\,\omega^{Y}h^{1-Y/2}}{\sigma^{Y}}-\frac{\omega^{2}}{2}\bigg)\sin\bigg(\frac{c_{2}\,\omega^{Y}h^{1-Y/2}}{\sigma^{Y}}+\omega\cdot\frac{\varepsilon\pm\wt{\gamma}h}{\sigma\sqrt{h}}\bigg)d\omega\\
&\quad = -\frac{c_{1}Yh^{(3-Y)/2}}{\sigma^{Y-1}\pi}\int_{0}^{\infty}\omega^{Y-1}\exp\bigg(\frac{c_{1}\,\omega^{Y}h^{1-Y/2}}{\sigma^{Y}}-\frac{\omega^{2}}{2}\bigg)\sin\bigg(\frac{c_{2}\,\omega^{Y}h^{1-Y/2}}{\sigma^{Y}}\bigg)\cos\bigg(\omega\cdot\frac{\varepsilon\pm\wt{\gamma}h}{\sigma\sqrt{h}}\bigg)d\omega\\
&\qquad -\frac{c_{1}Yh^{(3-Y)/2}}{\sigma^{Y-1}\pi}\int_{0}^{\infty}\omega^{Y-1}\exp\bigg(\frac{c_{1}\,\omega^{Y}h^{1-Y/2}}{\sigma^{Y}}-\frac{\omega^{2}}{2}\bigg)\cos\bigg(\frac{c_{2}\,\omega^{Y}h^{1-Y/2}}{\sigma^{Y}}\bigg)\sin\bigg(\omega\cdot\frac{\varepsilon\pm\wt{\gamma}h}{\sigma\sqrt{h}}\bigg)d\omega\\
\label{eq:LimitExpJhphi1} &\quad = -\frac{c_{1}Yh^{(3-Y)/2}}{\sigma^{Y-1}\pi}\left(\sum_{k,j=0}^{\infty}(-1)^{j}a_{k,2j+1,Y-1}^{\pm}(h)+\sum_{k,j=0}^{\infty}(-1)^{j}d_{k,2j,Y-1}^{\pm}(h)\right)=O\big(h^{3/2}\varepsilon^{-Y}\big),
\end{align}
as $h\rightarrow 0$, where we have used the asymptotic formulas \eqref{eq:LimitKummerPowerCos} and \eqref{eq:LimitKummerPowerSin} in the last equality. Finally, for the second term in \eqref{eq:DecompExpJhphi}, again by expanding the Taylor series for $\exp(c_{1}\sigma^{-Y}|\omega|^{Y}h^{1-Y/2})$, $\cos(c_{2}\sigma^{-Y}\omega^{Y}h^{1-Y/2})$, and $\sin(c_{2}\sigma^{-Y}\omega^{Y}h^{1-Y/2})$ below, and using \eqref{eq:KummerPowerCos}, \eqref{eq:KummerPowerSin}, \eqref{eq:LimitKummerPowerCos}, and \eqref{eq:LimitKummerPowerSin}, we deduce that
\begin{align}
&\frac{c_{2}\sigma Yh^{3/2}}{2\pi}\int_{\bR}\,|u|^{Y-1}\exp\bigg(c_{1}|u|^{Y}h+ic_{2}|u|^{Y}h\,\text{sgn}(u)-\frac{\sigma^{2}u^{2}h}{2}+iu\big(\varepsilon\pm\wt{\gamma}h\big)\bigg)du\\
&\quad =\frac{c_{2}\sigma Yh^{3/2}}{\pi}\int_{0}^{\infty}u^{Y-1}\exp\bigg(c_{1}u^{Y}h-\frac{\sigma^{2}u^{2}h}{2}\bigg)\cos\Big(c_{2}u^{Y}h+u\big(\varepsilon\pm\wt{\gamma}h\big)\Big)du\\
&\quad =\frac{c_{2}Yh^{(3-Y)/2}}{\pi\sigma^{Y-1}}\int_{0}^{\infty}\omega^{Y-1}\exp\bigg(\frac{c_{1}\,\omega^{Y}h^{1-Y/2}}{\sigma^{Y}}-\frac{\omega^{2}}{2}\bigg)\cos\bigg(\frac{c_{2}\,\omega^{Y}h^{1-Y/2}}{\sigma^{Y}}+\omega\cdot\frac{\varepsilon\pm\wt{\gamma}h}{\sigma\sqrt{h}}\bigg)d\omega\\
&\quad =\frac{c_{2}Yh^{(3-Y)/2}}{\pi\sigma^{Y-1}}\int_{0}^{\infty}\omega^{Y-1}\exp\bigg(\frac{c_{1}\,\omega^{Y}h^{1-Y/2}}{\sigma^{Y}}-\frac{\omega^{2}}{2}\bigg)\cos\bigg(\frac{c_{2}\,\omega^{Y}h^{1-Y/2}}{\sigma^{Y}}\bigg)\cos\bigg(\omega\cdot\frac{\varepsilon\pm\wt{\gamma}h}{\sigma\sqrt{h}}\bigg)d\omega\\
&\qquad -\frac{c_{2}Yh^{(3-Y)/2}}{\pi\sigma^{Y-1}}\int_{0}^{\infty}\omega^{Y-1}\exp\bigg(\frac{c_{1}\,\omega^{Y}h^{1-Y/2}}{\sigma^{Y}}-\frac{\omega^{2}}{2}\bigg)\sin\bigg(\frac{c_{2}\,\omega^{Y}h^{1-Y/2}}{\sigma^{Y}}\bigg)\sin\bigg(\omega\cdot\frac{\varepsilon\pm\wt{\gamma}h}{\sigma\sqrt{h}}\bigg)d\omega\\
\label{eq:LimitExpJhphi2} &\quad =\frac{c_{2}Yh^{(3-Y)/2}}{\pi\sigma^{Y-1}}\left(\sum_{k,j=0}^{\infty}(-1)^{j}a_{k,2j,Y-1}^{\pm}(h)-\sum_{k,j=0}^{\infty}(-1)^{j}d_{k,2j+1,Y-1}^{\pm}(h)\right)=O\big(h^{3/2}\varepsilon^{-Y}\big),
\end{align}
as $h\rightarrow 0$. Therefore, by combining \eqref{eq:DecompExpJhphi}, \eqref{eq:LimitExpJhphi3}, \eqref{eq:LimitExpJhphi1}, and \eqref{eq:LimitExpJhphi2}, we obtain that
\begin{align}\label{eq:LimitExpJhphi}
\wt{\bE}\left(\mp J_{h}\phi\bigg(\frac{\varepsilon}{\sigma\sqrt{h}}\pm\frac{J_{h}}{\sigma\sqrt{h}}\bigg)\right)=O\Big(he^{-\varepsilon^{2}/(2\sigma^{2}h)}\Big)+O\big(h^{3/2}\varepsilon^{-Y}\big),\quad\text{as }\,h\rightarrow 0.
\end{align}

It remains to analyze the asymptotic behavior of $\bE(\overline{\Phi}((\varepsilon\pm J_{h})/(\sigma\sqrt{h})))$. We first note that there exists a universal constant $K>0$, such that
\begin{align}
&\wt{\bE}\left(\overline{\Phi}\bigg(\frac{\varepsilon}{\sigma\sqrt{h}}\pm\frac{J_{h}}{\sigma\sqrt{h}}\bigg){\bf 1}_{\{\varepsilon\pm J_{h}\geq 0\}}\right)\leq K\,\wt{\bE}\left(\phi\bigg(\frac{\varepsilon}{\sigma\sqrt{h}}\pm\frac{J_{h}}{\sigma\sqrt{h}}\bigg){\bf 1}_{\{\varepsilon\pm J_{h}\geq 0\}}\right)\\
\label{eq:LimitExpPhiPlus} &\quad =O\left(\wt{\bE}\left(\phi\bigg(\frac{\varepsilon}{\sigma\sqrt{h}}\pm\frac{J_{h}}{\sigma\sqrt{h}}\bigg)\right)\right)=O\Big(e^{-\varepsilon^{2}/(2\sigma^{2}h)}\Big)+O\big(h^{3/2}\varepsilon^{-1-Y}\big),\quad\text{as }\,h\rightarrow 0,
\end{align}
where the last inequality above follows from \eqref{eq:LimitExpphi}. Moreover, by \eqref{eq:1stOrderEstTailZ}, as $h\rightarrow 0$,
\begin{align}
&\wt{\bE}\!\left(\overline{\Phi}\bigg(\frac{\varepsilon}{\sigma\sqrt{h}}\pm\frac{J_{h}}{\sigma\sqrt{h}}\bigg){\bf 1}_{\{\varepsilon\pm J_{h}\leq 0\}}\!\right)=\int_{\bR}\phi(u)\,\wt{\bP}\Big(\varepsilon\pm J_{h}\leq\sigma\sqrt{h}u,\,\varepsilon\pm J_{h}\leq 0\Big)du\\
&\quad =\int_{0}^{\infty}\phi(u)\,\wt{\bP}\Big(\pm Z_{h}\leq -\varepsilon\mp\wt{\gamma}h\Big)du+\int_{-\infty}^{0}\phi(u)\,\wt{\bP}\Big(\pm Z_{h}\leq\sigma\sqrt{h}u-\varepsilon\mp\wt{\gamma}h\Big)du\\
&\quad =\frac{1}{2}\,\wt{\bP}\bigg(\!\pm Z_{1}\leq\frac{-\varepsilon\mp\wt{\gamma}h}{h^{1/Y}}\bigg)+\int_{-\infty}^{0}\phi(u)\,\wt{\bP}\bigg(\pm Z_{1}\leq\frac{\sigma\sqrt{h}u-\varepsilon\mp\wt{\gamma}h}{h^{1/Y}}\bigg)du\\
\label{eq:LimitExpPhiMinus} &\quad\leq\frac{1}{2}\,\wt{\bP}\bigg(\!\pm Z_{1}\leq\frac{-\varepsilon\mp\wt{\gamma}h}{h^{1/Y}}\bigg)+\frac{\wt{K}h}{\varepsilon^{Y}}\int_{-\infty}^{0}\phi(u)\bigg(1-\frac{\sigma\sqrt{h}\pm\wt{\gamma}h}{\varepsilon}u\bigg)^{-Y}du=O\big(h\varepsilon^{-Y}\big).
\end{align}
Therefore, by combining \eqref{eq:LimitExpPhiPlus} and \eqref{eq:LimitExpPhiMinus}, we obtain that
\begin{align}\label{eq:LimitExpPhi}
\wt{\bE}\left(\overline{\Phi}\bigg(\frac{\varepsilon}{\sigma\sqrt{h}}\pm\frac{J_{h}}{\sigma\sqrt{h}}\bigg)\right)=O\Big(e^{-\varepsilon^{2}/(2\sigma^{2}h)}\Big)+O\big(h\varepsilon^{-Y}\big),\quad\text{as }\,h\rightarrow 0.
\end{align}
Finally, by combining \eqref{eq:Decompb1eps12}, \eqref{eq:Decompb1eps12pm}, \eqref{eq:LimitExpphi}, \eqref{eq:LimitExpJhphi}, and \eqref{eq:LimitExpPhi}, we conclude that
\begin{align}\label{eq:Limitb1eps12}
I_{1}(h)={\frac{\sqrt{2}}{\sigma\sqrt{\pi}}\cdot\frac{\varepsilon}{\sqrt{h}}\,e^{-\varepsilon^{2}/(2\sigma^{2}h)}}+O\Big(e^{-\varepsilon^{2}/(2\sigma^{2}h)}\Big)+O\big(h\varepsilon^{-Y}\big),\quad\text{as }\,h\rightarrow 0.
\end{align}

\smallskip
\noindent
\textbf{Step 1.2.} We now study the asymptotic behavior of $I_{2}(h)$, defined in \eqref{eq:Decompb1eps1}, as $h\rightarrow 0$. Let us first consider the following decomposition
\begin{align}
I_{2}(h)&=\wt{\bE}\Big(\Big(e^{-\wt{U}_{h}}-1+\wt{U}_{h}\Big)W_{1}^{2}\,{\bf 1}_{\{|\sigma\sqrt{h}W_{1}+J_{h}|>\varepsilon\}}\Big)-
\wt{\bE}\Big(\wt{U}_{h}W_{1}^{2}\,{\bf 1}_{\{|\sigma\sqrt{h}W_{1}+J_{h}|>\varepsilon\}}\Big)\\
\label{eq:Decompb1eps13} &=:I_{21}(h)-I_{22}(h).
\end{align}
The first term $I_{21}(h)$ can be bounded as follows: as $h\rightarrow 0$,
\begin{align}\label{eq:Limitb1eps131}
0\leq I_{21}(h)\leq\wt{\bE}\Big(e^{-\wt{U}_{h}}\!-1+\wt{U}_{h}\Big)=\exp\bigg(h\int_{\bR_{0}}\!\Big(e^{-\phi(x)}\!-1+\phi(x)\Big)\wt{\nu}(dx)\bigg)-1=O(h).
\end{align}
To deal with $I_{22}$, for any $t\in\bR_{+}$, we further decompose $\wt{U}_{t}$ as
\begin{align*}
\wt{U}_{t}=\int_{0}^{t}\int_{\bR_{0}}\big(\varphi(x)+\alpha_{\text{sgn}(x)}x\big)\wt{N}(ds,dx)-\int_{0}^{t}\int_{\bR_{0}}\alpha_{\text{sgn}(x)}x\wt{N}(ds,dx)=:\wt{U}^{\text{BV}}_{t}-\alpha_{+}{Z}_{t}^{+}-\alpha_{-}{Z}^{-}_{t},
\end{align*}
where the first integral is well-defined in light of Assumption \ref{assump:Funtq}-(i) \& (ii), so that
\begin{align*}
I_{22}(h)&=\wt{\bE}\Big(\wt{U}^{\text{BV}}_{h}W_{1}^{2}\,{\bf 1}_{\{|\sigma\sqrt{h}W_{1}+J_{h}|>\varepsilon\}}\Big)-\alpha_{+}\wt{\bE}\Big(Z^{+}_{h}W_{1}^{2}\,{\bf 1}_{\{|\sigma\sqrt{h}W_{1}+J_{h}|>\varepsilon\}}\Big)\\
&\quad -\alpha_{-}\wt{\bE}\Big(Z^{-}_{h}W_{1}^{2}\,{\bf 1}_{\{|\sigma\sqrt{h}W_{1}+J_{h}|>\varepsilon\}}\Big)=:I_{22}^{\text{BV}}(h)-\alpha_{+}I_{22}^{+}(h)-\alpha_{-}I_{22}^{-}(h).
\end{align*}
For the first term $I_{22}^{\text{BV}}(h)$, note that
\begin{align*}
\big|I_{22}^{\text{BV}}(h)\big|\leq\wt{\bE}\Big(\big|\wt{U}^{\text{BV}}_{h}\big|\Big)\leq 2h\int_{\bR_{0}}\big|\varphi(x)+\alpha_{\text{sgn}(x)}x\big|\,\wt{\nu}(dx),
\end{align*}
where the last integral is finite since in a neighborhood of the origin,
\begin{align*}
\big|\varphi(x)+\alpha_{\text{sgn}(x)}x\big|=\big|-\ln q(x)+\alpha_{\text{sgn}(x)}x\big|=O\Big(\big|1-q(x)+\alpha_{\text{sgn}(x)}x\big|\Big),
\end{align*}
which is integrable with respect to $\wt{\nu}(dx)$ in view of Assumption \ref{assump:Funtq}-(ii). As for the terms $I_{22}^{\pm}$, due to the self-similarity of $Z_{t}^{\pm}$ and the fact that $\varepsilon h^{-1/Y}\rightarrow\infty$ (since $Y\in(1,2)$), the monotone convergence theorem implies that $I_{22}^{\pm}(h)=o(h^{1/Y})$, as $h\rightarrow 0$. Hence, we obtain that
\begin{align}\label{eq:Limitb1eps132}
I_{22}(h)=o\big(h^{1/Y}\big),\quad\text{as }\,h\rightarrow 0.
\end{align}
By combining {\eqref{eq:Decompb1eps13}, \eqref{eq:Limitb1eps131}, and \eqref{eq:Limitb1eps132}}, we conclude that
\begin{align}\label{eq:Limitb1eps13}
I_{2}(h)=o\big(h^{1/Y}\big),\quad\text{as }\,h\rightarrow 0.
\end{align}
Finally, from \eqref{eq:Decompb1eps1}, \eqref{eq:Limitb1eps12}, and \eqref{eq:Limitb1eps13}, we obtain that
\begin{align}\label{eq:Limitb1eps1}
\bE\Big(\!W_{h}^{2}{\bf 1}_{\{|\sigma W_{h}+J_{h}|\leq\varepsilon\}}\!\Big)\!=\!h\!-\!\frac{\sqrt{2h}\,\varepsilon}{\sigma\sqrt{\pi}}e^{-\varepsilon^{2}/(2\sigma^{2}h)}\!+\!O\Big(\!he^{-\varepsilon^{2}/(2\sigma^{2}h)}\!\Big)\!+\!O\big(h^{2}\varepsilon^{-Y}\big)\!+{\!o\big(h^{1+1/Y}\big)},\quad
\end{align}
as $h\rightarrow 0$, which completes the analysis in Step 1.

\bigskip
\noindent
\textbf{Step 2.} In this step, we will study the asymptotic behavior of the second term in \eqref{eq:Decompb1eps}, as $h\rightarrow 0$. By \eqref{eq:DenTranTildePP}, \eqref{eq:DecompUpm}, and \eqref{eq:DecompJZpm}, we first have
\begin{align}
\bE\Big(J_{h}^{2}\,{\bf 1}_{\{|\sigma W_{h}+J_{h}|\leq\varepsilon\}}\Big)&=\wt{\bE}\Big(e^{-\wt{U}_{h}-\eta h}\,J_{h}^{2}\,{\bf 1}_{\{|\sigma W_{h}+J_{h}|\leq\varepsilon\}}\Big)\\
&=e^{-\eta h}\,\wt{\bE}\Big(e^{-\wt{U}_{h}}Z_{h}^{2}{\bf 1}_{\{|W_{h}+Z_{h}+\wt{\gamma}h|\leq\varepsilon\}}\Big)+2\wt{\gamma}he^{-\eta h}\,\wt{\bE}\Big(e^{-\wt{U}_{h}}Z_{h}{\bf 1}_{\{|W_{h}+Z_{h}+\wt{\gamma}h|\leq\varepsilon\}}\Big)\\
&\quad +\wt{\gamma}^{2}h^{2}e^{-\eta h}\,\wt{\bE}\Big(e^{-\wt{U}_{h}}\,{\bf 1}_{\{|W_{h}+Z_{h}+\wt{\gamma}h|\leq\varepsilon\}}\Big)\\
\label{eq:Decompb1eps2} &=:e^{-\eta h}I_{3}(h)+2\wt{\gamma}he^{-\eta h}I_{4}(h)+\wt{\gamma}^{2}h^{2}e^{-\eta h}\,\wt{\bE}\Big(e^{-\wt{U}_{h}}\,{\bf 1}_{\{|W_{h}+Z_{h}+\wt{\gamma}h|\leq\varepsilon\}}\Big).
\end{align}
Clearly,
\begin{align}\label{eq:Limitb1eps23}
\wt{\gamma}^{2}h^{2}e^{-\eta h}\,\wt{\bE}\Big(e^{-\wt{U}_{h}}\,{\bf 1}_{\{|W_{h}+Z_{h}+\wt{\gamma}h|\leq\varepsilon\}}\Big)=O\big(h^{2}\big),\quad\text{as }\,h\rightarrow 0.
\end{align}
It remains to analyze the asymptotic behavior of the first two terms in \eqref{eq:Decompb1eps2}.

\medskip
\noindent
\textbf{Step 2.1.} We begin with the analysis of $I_{3}(h)$. Clearly,
\begin{align}
I_{3}(h)&=\wt{\bE}\Big(Z_{h}^{2}\,{\bf 1}_{\{|\sigma W_{h}+Z_{h}+\wt{\gamma}h|\leq\varepsilon\}}\Big)+\wt{\bE}\Big(\Big(e^{-\wt{U}_{h}}-1\Big)Z_{h}^{2}\,{\bf 1}_{\{|\sigma W_{h}+Z_{h}+\wt{\gamma}h|\leq\varepsilon\}}\Big)\\
&=h^{2/Y}\wt{\bE}\Big(Z_{1}^{2}\,{\bf 1}_{\{|\sigma\sqrt{h}W_{1}+h^{1/Y}\!Z_{1}+\wt{\gamma}h|\leq\varepsilon\}}\Big)+\wt{\bE}\Big(\Big(e^{-\wt{U}_{h}}-1\Big)Z_{h}^{2}\,{\bf 1}_{\{|\sigma W_{h}+Z_{h}+\wt{\gamma}h|\leq\varepsilon\}}\Big)\\
\label{eq:Decompb1eps21} &=:h^{2/Y}I_{31}(h)+I_{32}(h).
\end{align}
By the symmetry of $W_{1}$, we note that
\begin{align*}
\wt{\bE}\Big(Z_{1}^{2}\,{\bf 1}_{\{-\varepsilon\leq\sigma\sqrt{h}W_{1}+h^{1/Y}\!Z_{1}+\wt{\gamma}h\leq 0\}}\Big)=\wt{\bE}\Big(Z_{1}^{2}\,{\bf 1}_{\{0\leq\sigma\sqrt{h}W_{1}-h^{1/Y}\!Z_{1}-\wt{\gamma}h\leq\varepsilon\}}\Big).
\end{align*}
In what follows, we let $h>0$ small enough so that $\varepsilon-|\wt{\gamma}|h>0$.

To study the asymptotic behavior of $I_{31}(h)$, as $h\rightarrow 0$, let us first consider
\begin{align}
E_{1}^{\pm}(h)&:=\wt{\bE}\Big(Z_{1}^{2}\,{\bf 1}_{\{0\leq\sigma\sqrt{h}W_{1}\pm h^{1/Y}Z_{1}\pm\wt{\gamma}h\leq\varepsilon,\,W_{1}\geq 0,\,\pm Z_{1}\geq 0\}}\Big)\!=\!\!\int_{0}^{\frac{\varepsilon\mp\wt{\gamma}h}{\sigma\sqrt{h}}}\!\!\bigg(\!\int_{0}^{\frac{\varepsilon\mp\wt{\gamma}h-\sigma\sqrt{h}x}{h^{1/Y}}}\!\!\!u^{2}p_{Z}(\pm u)du\!\bigg)\phi(x)dx\\
&\,\,=\frac{\varepsilon\mp\wt{\gamma}h}{\sigma\sqrt{h}}\int_{0}^{1}\bigg(\int_{0}^{\frac{(\varepsilon\mp\wt{\gamma}h)(1-\omega)}{h^{1/Y}}}u^{2}p_{Z}(\pm u)\,du\bigg)\phi\bigg(\frac{\varepsilon\mp\wt{\gamma}h}{\sigma\sqrt{h}}\omega\bigg)d\omega\\
&\,\,=\frac{C_{\pm}\big(\varepsilon\mp\wt{\gamma}h\big)}{\sigma\sqrt{h}}\int_{0}^{1}\bigg(\int_{0}^{\frac{(\varepsilon\mp\wt{\gamma}h)(1-\omega)}{h^{1/Y}}}u^{1-Y}du\bigg)\phi\bigg(\frac{\varepsilon\mp\wt{\gamma}h}{\sigma\sqrt{h}}\omega\bigg)d\omega\\
\label{eq:Decompb1eps211PlusPlus} &\,\,\quad +\frac{\varepsilon\mp\wt{\gamma}h}{\sigma\sqrt{h}}\int_{0}^{1}\bigg(\int_{0}^{\frac{(\varepsilon\mp\wt{\gamma}h)(1-\omega)}{h^{1/Y}}}u^{2}\Big(p_{Z}(\pm u)-C_{\pm}u^{-1-Y}\Big)du\bigg)\phi\bigg(\frac{\varepsilon\mp\wt{\gamma}h}{\sigma\sqrt{h}}\omega\bigg)d\omega.
\end{align}
For the first term in \eqref{eq:Decompb1eps211PlusPlus}, we have
\begin{align}
\int_{0}^{1}\bigg(\int_{0}^{\frac{(\varepsilon\mp\wt{\gamma}h)(1-\omega)}{h^{1/Y}}}\!u^{1-Y}du\bigg)\phi\bigg(\frac{\varepsilon\mp\wt{\gamma}h}{\sigma\sqrt{h}}\omega\bigg)d\omega&=\frac{\big(\varepsilon\mp\wt{\gamma}h\big)^{2-Y}}{(2-Y)h^{(2-Y)/Y}}\int_{0}^{1}(1-\omega)^{2-Y}\phi\bigg(\frac{\varepsilon\mp\wt{\gamma}h}{\sigma\sqrt{h}}\omega\bigg)d\omega\\
\label{eq:Limitb1eps211PlusPlus1} &\sim\frac{\big(\varepsilon\mp\wt{\gamma}h\big)^{2-Y}}{2(2-Y)h^{(2-Y)/Y}}\cdot\frac{\sigma\sqrt{h}}{\varepsilon\mp\wt{\gamma}h},\quad\text{as }\,h\rightarrow 0.
\end{align}
For the second term in \eqref{eq:Decompb1eps211PlusPlus}, since $Y\in(1,2)$, we first observe that, for any $z>0$,
\begin{align}\label{eq:IntPoweru}
\int_{0}^{z}u^{2}\big(u^{-Y-1}\!\wedge u^{-2Y-1}\big)du=\frac{z^{2-Y}}{2\!-\!Y}{\bf 1}_{(0,1]}(z)+\frac{1\!-\!z^{2-2Y}}{2(Y\!-\!1)}{\bf 1}_{(1,\infty)}(z)\leq\frac{Y}{2(Y\!-\!1)(2\!-\!Y)}.
\end{align}
Hence, we deduce from \eqref{eq:2ndOrderEstDenZ} that
\begin{align*}
&\int_{0}^{1}\bigg(\int_{0}^{\frac{(\varepsilon\mp\wt{\gamma}h)(1-\omega)}{h^{1/Y}}}u^{2}\big|p_{Z}(\pm u)-C_{\pm}u^{-1-Y}\big|du\bigg)\phi\bigg(\frac{\varepsilon\mp\wt{\gamma}h}{\sigma\sqrt{h}}\omega\bigg)d\omega\\ &\quad\leq\frac{\wt{K}Y}{2(Y-1)(2-Y)}\int_{0}^{1}\phi\bigg(\frac{\varepsilon\mp\wt{\gamma}h}{\sigma\sqrt{h}}\omega\bigg)d\omega=O\big(\sqrt{h}\,\varepsilon^{-1}\big),\quad\text{as }\,h\rightarrow 0.
\end{align*}
Therefore, we obtain that
\begin{align}\label{eq:Limitb1eps211PlusPlus}
E_{1}^{\pm}(h)=\frac{C_{\pm}}{2(2-Y)}\,h^{1-2/Y}\varepsilon^{2-Y}+O(1),\quad\text{as }\,h\rightarrow 0.
\end{align}
Using the same argument as above and since $\varepsilon\gg h$, we also obtain that, when $\pm\wt{\gamma}>0$, as $h\rightarrow 0$,
\begin{align}
E_{2}^{\pm}(h)&:=\wt{\bE}\Big(Z_{1}^{2}\,{\bf 1}_{\{0\leq\sigma\sqrt{h}W_{1}\pm h^{1/Y}Z_{1}\pm\wt{\gamma}h\leq\varepsilon,\,W_{1}\leq 0,\,\pm Z_{1}\leq 0\}}\Big)\\
\label{eq:Limitb1eps211MinusMinus} &\,\,=\wt{\bE}\Big(Z_{1}^{2}\,{\bf 1}_{\{0\leq\sigma\sqrt{h}W_{1}\pm h^{1/Y}Z_{1}\leq\mp\wt{\gamma}h,\,W_{1}\leq 0,\,\pm Z_{1}\leq 0\}}\Big)=O\big(h^{3-Y-2/Y}\big)+O(1).
\end{align}

Next, we consider
\begin{align}
E_{3}^{\pm}(h)&:=\wt{\bE}\Big(Z_{1}^{2}\,{\bf 1}_{\{0\leq\sigma\sqrt{h}W_{1}\pm h^{1/Y}Z_{1}\pm\wt{\gamma}h\leq\varepsilon,\,W_{1}\geq 0,\,\pm Z_{1}\leq 0\}}\Big)\\
\label{eq:Decompb1eps211PlusMinus} &\,\,=\!\int_{0}^{\frac{\varepsilon\mp\wt{\gamma}h}{\sigma\sqrt{h}}}\!\!\bigg(\!\int_{-\frac{\sigma\sqrt{h}x}{h^{1/Y}}}^{0}\!\!u^{2}p_{Z}(\pm u)du\!\bigg)\phi(x)dx\!+\!\!\int_{\frac{\varepsilon\mp\wt{\gamma}h}{\sigma\sqrt{h}}}^{\infty}\!\!\bigg(\!\int_{-\frac{\sigma\sqrt{h}x}{h^{1/Y}}}^{\frac{\varepsilon\mp\wt{\gamma}h-\sigma\sqrt{h}x}{h^{1/Y}}}\!\!\!\!u^{2}p_{Z}(\pm u)du\!\bigg)\phi(x)dx.\quad
\end{align}
By \eqref{eq:1stOrderEstDenZ}, the first term in \eqref{eq:Decompb1eps211PlusMinus} is such that
\begin{align*}
&\int_{0}^{\frac{\varepsilon\mp\wt{\gamma}h}{\sigma\sqrt{h}}}\bigg(\int_{-\frac{\sigma\sqrt{h}x}{h^{1/Y}}}^{0}u^{2}p_{Z}(\pm u)\,du\bigg)\phi(x)\,dx\leq\wt{K}\int_{0}^{\frac{\varepsilon\mp\wt{\gamma}h}{\sigma\sqrt{h}}}\bigg(\int_{0}^{\frac{\sigma\sqrt{h}x}{h^{1/Y}}}u^{1-Y}du\bigg)\phi(x)\,dx\\
&\quad =\frac{\wt{K}\sigma^{2-Y}h^{2-Y/2-2/Y}}{2-Y}\int_{0}^{\frac{\varepsilon\mp\wt{\gamma}h}{\sigma\sqrt{h}}}x^{2-Y}\phi(x)\,dx=O\big(h^{2-Y/2-2/Y}\big),\quad\text{as }\,h\rightarrow 0.
\end{align*}
Similarly, the second term in \eqref{eq:Decompb1eps211PlusMinus} can be estimated as follows:
\begin{align*}
&\int_{\frac{\varepsilon\mp\wt{\gamma}h}{\sigma\sqrt{h}}}^{\infty}\bigg(\int_{-\frac{\sigma\sqrt{h}x}{h^{1/Y}}}^{\frac{\varepsilon\mp\wt{\gamma}h-\sigma\sqrt{h}x}{h^{1/Y}}}u^{2}p_{Z}(\pm u)\,du\bigg)\phi(x)\,dx\leq\wt{K}\int_{\frac{\varepsilon\mp\wt{\gamma}h}{\sigma\sqrt{h}}}^{\infty}\bigg(\int_{\frac{\sigma\sqrt{h}x-(\varepsilon\mp\wt{\gamma}h)}{h^{1/Y}}}^{\frac{\sigma\sqrt{h}x}{h^{1/Y}}}u^{1-Y}du\bigg)\phi(x)\,dx\\
&\quad\leq\frac{\wt{K}\sigma^{2-Y}h^{2-2/Y-Y/2}}{2-Y}\int_{\frac{\varepsilon\mp\wt{\gamma}h}{\sigma\sqrt{h}}}^{\infty}x^{2-Y}\phi(x)\,dx=o\big(h^{2-2/Y-Y/2}\big),\quad\text{as }\,h\rightarrow 0.
\end{align*}
Therefore, we obtain that
\begin{align}\label{eq:Limitb1eps211PlusMinus}
E_{3}^{\pm}(h)=O\big(h^{2-Y/2-2/Y}\big),\quad\text{as }\,h\rightarrow 0.
\end{align}

To complete the analysis for $I_{3}(h)$, it remains to study
\begin{align}
E_{4}^{\pm}(h)&:=\wt{\bE}\Big(Z_{1}^{2}\,{\bf 1}_{\{0\leq\sigma\sqrt{h}W_{1}\pm h^{1/Y}Z_{1}\pm\wt{\gamma}h\leq\varepsilon,\,W_{1}\leq 0,\,\pm Z_{1}\geq 0\}}\Big)=\!\int_{-\infty}^{0}\!\bigg(\!\int_{\frac{-\sigma\sqrt{h}x\mp\wt{\gamma}h}{h^{1/Y}}}^{\frac{\varepsilon\mp\wt{\gamma}h-\sigma\sqrt{h}x}{h^{1/Y}}}\!\!u^{2}p_{Z}(\pm u)du\!\bigg)\phi(x)dx\\ &\,\,=C_{\pm}\int_{-\infty}^{0}\bigg(\int_{\frac{-\sigma\sqrt{h}x\mp\wt{\gamma}h}{h^{1/Y}}}^{\frac{\varepsilon\mp\wt{\gamma}h-\sigma\sqrt{h}x}{h^{1/Y}}}u^{1-Y}du\bigg)\phi(x)\,dx\\
\label{eq:Decompb1eps211MinusPlus} &\quad\,\,+\int_{-\infty}^{0}\bigg(\int_{\frac{-\sigma\sqrt{h}x\mp\wt{\gamma}h}{h^{1/Y}}}^{\frac{\varepsilon\mp\wt{\gamma}h-\sigma\sqrt{h}x}{h^{1/Y}}}u^{2}\Big(p_{Z}(\pm u)-C_{\pm}u^{-1-Y}\Big)du\bigg)\phi(x)\,dx.
\end{align}
For the first term in \eqref{eq:Decompb1eps211MinusPlus}, we have
\begin{align}
&C_{\pm}\int_{-\infty}^{0}\bigg(\int_{\frac{-\sigma\sqrt{h}x\mp\wt{\gamma}h}{h^{1/Y}}}^{\frac{\varepsilon\mp\wt{\gamma}h-\sigma\sqrt{h}x}{h^{1/Y}}}u^{1-Y}du\bigg)\phi(x)\,dx\\
&\quad =\frac{C_{\pm}\,\varepsilon^{2-Y}}{(2-Y)h^{2/Y-1}}\int_{-\infty}^{0}\left(\bigg(1-\frac{\sigma\sqrt{h}x\pm\wt{\gamma}h}{\varepsilon}\bigg)^{2-Y}\!-\bigg(\!-\frac{\sigma\sqrt{h}x\pm\wt{\gamma}h}{\varepsilon}\bigg)^{2-Y}\right)\!\phi(x)\,dx\\
\label{eq:Limitb1eps211MinusPlus1} &\quad\sim\frac{C_{\pm}\,h^{1-2/Y}\varepsilon^{2-Y}}{2(2-Y)},\quad\text{as }\,h\rightarrow 0.
\end{align}
For the second term in \eqref{eq:Decompb1eps211MinusPlus}, we deduce from \eqref{eq:IntPoweru} that
\begin{align*}
\int_{-\infty}^{0}\bigg(\int_{\frac{-\sigma\sqrt{h}x\mp\wt{\gamma}h}{h^{1/Y}}}^{\frac{\varepsilon\mp\wt{\gamma}h-\sigma\sqrt{h}x}{h^{1/Y}}}u^{2}\Big|p_{Z}(\pm u)-C_{\pm}u^{-1-Y}\Big|du\bigg)\phi(x)\,dx=O(1),\quad h\rightarrow 0.
\end{align*}
Therefore, we obtain that
\begin{align}\label{eq:Limitb1eps211MinusPlus}
E_{4}^{\pm}(h)=\frac{C_{\pm}}{2(2-Y)}h^{1-2/Y}\varepsilon^{2-Y}+O(1),\quad\text{as }\,h\rightarrow 0.
\end{align}
By combining \eqref{eq:Limitb1eps211PlusPlus}, \eqref{eq:Limitb1eps211MinusMinus}, \eqref{eq:Limitb1eps211PlusMinus}, and \eqref{eq:Limitb1eps211MinusPlus}, we conclude that
\begin{align}\label{eq:Limitb1eps211}
I_{31}(h)=\sum_{i=1}^{4}\big(E_{i}^{+}(h)\!+\!E_{i}^{-}(h)\big)=\frac{C_{+}\!+\!C_{-}}{2-Y}h^{1-2/Y}\varepsilon^{2-Y}+O\big(h^{2-Y/2-2/Y}\big),\quad\text{as }\,h\rightarrow 0.\quad
\end{align}

Next, we will study the asymptotic behavior of $I_{32}(h)$, as $h\rightarrow 0$. Clearly, by Cauchy-Schwarz inequality and self-similarity of $Z_{h}$ and $W_{h}$ under $\widetilde{\bP}$, we have that
\begin{align}\label{eq:Decompb1eps212}
I_{32}(h)\leq h^{2/Y}\left(\wt{\bE}\bigg(\!\Big(e^{-\wt{U}_{h}}-1\Big)^{2}\bigg)\right)^{1/2}\bigg(\wt{\bE}\Big(Z_{1}^{4}\,{\bf 1}_{\{|\sigma\sqrt{h}W_{1}+h^{1/Y}\!Z_{1}+\wt{\gamma}h|\leq\varepsilon\}}\Big)\bigg)^{1/2}.
\end{align}
By Assumption \ref{assump:Funtq}-(v) and denoting $\wt{C}_{\ell}=\int_{\bR_{0}}\big(e^{-\ell\varphi(x)}-1+\ell\varphi(x)\big)\tilde{\nu}(dx)$, $\ell=1,2$, we first have
\begin{align}\label{eq:Limitb1eps2121}
\wt{\bE}\bigg(\Big(e^{-\wt{U}_{h}}-1\Big)^{2}\bigg)=e^{\wt{C}_{2}h}-2e^{\wt{C}_{1}h}+1\sim\big(\wt{C}_{2}-2\wt{C}_{1}\big)h,\quad\text{as }\,h\rightarrow 0.
\end{align}
The analysis of the asymptotic behavior, as $h\rightarrow 0$, of the second factor in \eqref{eq:Decompb1eps212} is similar to that of $I_{31}(h)$. More precisely, we first consider
\begin{align*}
F_{1}^{\pm}(h)&:=\wt{\bE}\Big(Z_{1}^{4}\,{\bf 1}_{\{0\leq\sigma\sqrt{h}W_{1}\pm h^{1/Y}Z_{1}\pm\wt{\gamma}h\leq\varepsilon,\,W_{1}\geq 0,\,\pm Z_{1}\geq 0\}}\Big)\\
&\,\,=\frac{C_{\pm}\big(\varepsilon\mp\wt{\gamma}h\big)}{\sigma\sqrt{h}}\int_{0}^{1}\bigg(\int_{0}^{\frac{(\varepsilon\mp\wt{\gamma}h)(1-\omega)}{h^{1/Y}}}u^{3-Y}du\bigg)\phi\bigg(\frac{\varepsilon\mp\wt{\gamma}h}{\sigma\sqrt{h}}\omega\bigg)d\omega\\
&\,\,\quad +\frac{\varepsilon\mp\wt{\gamma}h}{\sigma\sqrt{h}}\int_{0}^{1}\bigg(\int_{0}^{\frac{(\varepsilon\mp\wt{\gamma}h)(1-\omega)}{h^{1/Y}}}u^{4}\Big(p_{Z}(\pm u)-C_{\pm}u^{-1-Y}\Big)du\bigg)\phi\bigg(\frac{\varepsilon\mp\wt{\gamma}h}{\sigma\sqrt{h}}\omega\bigg)d\omega.
\end{align*}
A similar argument as in \eqref{eq:Limitb1eps211PlusPlus1} shows that
\begin{align*}
\frac{C_{\pm}\big(\varepsilon\mp\wt{\gamma}h\big)}{\sigma\sqrt{h}}\int_{0}^{1}\bigg(\int_{0}^{\frac{(\varepsilon\mp\wt{\gamma}h)(1-\omega)}{h^{1/Y}}}u^{3-Y}du\bigg)\phi\bigg(\frac{\varepsilon\mp\wt{\gamma}h}{\sigma\sqrt{h}}\omega\bigg)d\omega\sim\frac{C_{\pm}}{2(4-Y)}h^{1-4/Y}\varepsilon^{4-Y},\quad\text{as }\,h\rightarrow 0,
\end{align*}
and by \eqref{eq:2ndOrderEstDenZ},
\begin{align*}
&\frac{\varepsilon\mp\wt{\gamma}h}{\sigma\sqrt{h}}\int_{0}^{1}\bigg(\int_{0}^{\frac{(\varepsilon\mp\wt{\gamma}h)(1-\omega)}{h^{1/Y}}}u^{4}\big|p_{Z}(u)-Cu^{-1-Y}\big|du\bigg)\phi\bigg(\frac{\varepsilon\mp\wt{\gamma}h}{\sigma\sqrt{h}}\omega\bigg)d\omega\\ &\quad\leq\frac{\varepsilon\!\mp\!\wt{\gamma}h}{\sigma\sqrt{h}}\cdot\frac{\wt{K}\big(\varepsilon\mp\wt{\gamma}h\big)^{4-2Y}}{2(2-Y)h^{(4-2Y)/Y}}\!\int_{0}^{1}(1-\omega)^{4-2Y}\phi\bigg(\frac{\varepsilon\!\mp\!\wt{\gamma}h}{\sigma\sqrt{h}}\omega\bigg)d\omega=O\big(h^{2-4/Y}\varepsilon^{4-2Y}\big),\quad\text{as }\,h\rightarrow 0.
\end{align*}
Hence, we obtain that
\begin{align}\label{eq:Limitb1eps2122PlusPlus}
F_{1}^{\pm}(h)=\frac{C_{\pm}}{2(4-Y)}h^{1-4/Y}\varepsilon^{4-Y}+O\big(h^{2-4/Y}\varepsilon^{{4}-2Y}\big),\quad\text{as }\,h\rightarrow 0.
\end{align}
Using the same argument as above and since $\varepsilon\gg h$, we also obtain that, when $\pm\wt{\gamma}>0$, as $h\rightarrow 0$,
\begin{align}
F_{2}^{\pm}(h)&:=\wt{\bE}\Big(Z_{1}^{4}\,{\bf 1}_{\{0\leq\sigma\sqrt{h}W_{1}\pm h^{1/Y}Z_{1}\pm\wt{\gamma}h\leq\varepsilon,\,W_{1}\leq 0,\,\pm Z_{1}\leq 0\}}\Big)\\
\label{eq:Limitb1eps2122MinusMinus} &\,\,=\wt{\bE}\Big(Z_{1}^{4}\,{\bf 1}_{\{0\leq\sigma\sqrt{h}W_{1}\pm h^{1/Y}Z_{1}\leq\mp\wt{\gamma}h,\,W_{1}\geq 0,\,\pm Z_{1}\leq 0\}}\Big)=O\big(h^{6-4/Y-2Y}\big).
\end{align}
Moreover, using arguments similar to those for $E_{3}^{\pm}(h)$, we deduce that
\begin{align}\label{eq:Limitb1eps2122PlusMinus}
F_{3}^{\pm}(h):=\wt{\bE}\Big(Z_{1}^{4}\,{\bf 1}_{\{0\leq\sigma\sqrt{h}W_{1}\pm h^{1/Y}Z_{1}\pm\wt{\gamma}h\leq\varepsilon,\,W_{1}\geq 0,\,\pm Z_{1}\leq 0\}}\Big)=O\big(h^{3-4/Y-Y/2}\big),\quad\text{as }\,h\rightarrow 0.\quad
\end{align}
Finally, we consider
\begin{align*}
F_{4}^{\pm}(h)&:=\wt{\bE}\Big(Z_{1}^{4}\,{\bf 1}_{\{0\leq\sigma\sqrt{h}W_{1}\pm h^{1/Y}Z_{1}\pm\wt{\gamma}h\leq\varepsilon,\,W_{1}\leq 0,\,\pm Z_{1}\geq 0\}}\Big)=\!\int_{-\infty}^{0}\!\bigg(\!\int_{\frac{-\sigma\sqrt{h}x\mp\wt{\gamma}h}{h^{1/Y}}}^{\frac{\varepsilon\mp\wt{\gamma}h-\sigma\sqrt{h}x}{h^{1/Y}}}\!\!u^{4}p_{Z}(\pm u)du\!\bigg)\phi(x)dx\\ &\,\,=C_{\pm}\!\int_{-\infty}^{0}\!\!\!\bigg(\!\int_{\frac{-\sigma\sqrt{h}x\mp\wt{\gamma}h}{h^{1/Y}}}^{\frac{\varepsilon\mp\wt{\gamma}h-\sigma\sqrt{h}x}{h^{1/Y}}}\!\!\!u^{3-Y}\!du\!\bigg)\phi(x)dx\!+\!\!\int_{-\infty}^{0}\!\!\left(\!\int_{\frac{-\sigma\sqrt{h}x\mp\wt{\gamma}h}{h^{1/Y}}}^{\frac{\varepsilon\mp\wt{\gamma}h-\sigma\sqrt{h}x}{h^{1/Y}}}\!\!\!u^{4}\bigg(\!p_{Z}(\pm u)\!-\!\frac{C_{\pm}}{u^{1+Y}}\!\bigg)du\!\right)\!\phi(x)dx.
\end{align*}
A similar argument as in \eqref{eq:Limitb1eps211MinusPlus1} shows that
\begin{align*}
C_{\pm}\int_{-\infty}^{0}\bigg(\int_{\frac{-\sigma\sqrt{h}x\mp\wt{\gamma}h}{h^{1/Y}}}^{\frac{\varepsilon\mp\wt{\gamma}h-\sigma\sqrt{h}x}{h^{1/Y}}}u^{3-Y}du\bigg)\phi(x)\,dx\sim\frac{C_{\pm}}{2(4-Y)}h^{1-4/Y}\varepsilon^{4-Y},\quad\text{as }\,h\rightarrow 0,
\end{align*}
and by \eqref{eq:2ndOrderEstDenZ},
\begin{align*}
&\int_{-\infty}^{0}\!\bigg(\int_{\frac{-\sigma\sqrt{h}x\mp\wt{\gamma}h}{h^{1/Y}}}^{\frac{\varepsilon\mp\wt{\gamma}h-\sigma\sqrt{h}x}{h^{1/Y}}}\!u^{4}\Big|p_{Z}(\pm u)-C_{\pm}u^{-1-Y}\Big|du\bigg)\phi(x)\,dx\leq\wt{K}\int_{-\infty}^{0}\!\bigg(\int_{\frac{-\sigma\sqrt{h}x\mp\wt{\gamma}h}{h^{1/Y}}}^{\frac{\varepsilon\mp\wt{\gamma}h-\sigma\sqrt{h}x}{h^{1/Y}}}\!u^{3-2Y}du\bigg)\phi(x)\,dx\\
&\quad =\frac{\wt{K}\varepsilon^{4-2Y}}{2(2\!-\!Y)h^{4/Y-2}}\int_{-\infty}^{0}\!\left(\!\bigg(\!1\!-\!\frac{\sigma\sqrt{h}x\!\pm\!\wt{\gamma}h}{\varepsilon}\bigg)^{4-2Y}\!\!\!-\!\bigg(\!\!-\!\frac{\sigma\sqrt{h}x\!\pm\!\wt{\gamma}h}{\varepsilon}\bigg)^{4-2Y}\right)\!\phi(x)\,dx=O\bigg(\frac{\varepsilon^{4-2Y}}{h^{4/Y-2}}\bigg).
\end{align*}
Hence, we obtain that
\begin{align}\label{eq:Limitb1eps2122MinusPlus}
F_{4}^{\pm}(h)=\frac{C_{\pm}}{2(4-Y)}h^{1-4/Y}\varepsilon^{4-Y}+O\big(h^{2-4/Y}\varepsilon^{4-2Y}\big),\quad\text{as }\,h\rightarrow 0.
\end{align}
Combining \eqref{eq:Limitb1eps2122PlusPlus}, \eqref{eq:Limitb1eps2122MinusMinus}, \eqref{eq:Limitb1eps2122PlusMinus}, and \eqref{eq:Limitb1eps2122MinusPlus}, leads to
\begin{align}
&\wt{\bE}\Big(Z_{1}^{4}\,{\bf 1}_{\{|\sigma\sqrt{h}W_{1}+h^{1/Y}\!Z_{1}+\wt{\gamma}h|\leq\varepsilon\}}\Big)=\sum_{i=1}^{4}\big(F_{i}^{+}(h)+F_{i}^{-}(h)\big)\\
\label{eq:Limitb1eps2122} &\quad =\frac{{\big(C_{+}+C_{-}\big)}h^{1-4/Y}\varepsilon^{4-Y}}{4-Y}+O\big(h^{2-4/Y}\varepsilon^{4-2Y}\big)+O\big(h^{3-4/Y-Y/2}\big),\quad\text{as }\,h\rightarrow 0.
\end{align}
Therefore, by combining \eqref{eq:Decompb1eps212}, \eqref{eq:Limitb1eps2121}, and \eqref{eq:Limitb1eps2122}, we have
\begin{align}\label{eq:Limitb1eps212}
I_{32}(h)=O\big(h\,\varepsilon^{2-Y/2}\big)+O\big(h^{3/2}\varepsilon^{2-Y}\big)+O\big(h^{2-Y/4}\big),\quad\text{as }\,h\rightarrow 0.
\end{align}
Finally, by combining \eqref{eq:Decompb1eps21}, \eqref{eq:Limitb1eps211}, and \eqref{eq:Limitb1eps212}, we obtain that
\begin{align}\label{eq:Limitb1eps21}
I_{3}(h)=\frac{C_{+}+C_{-}}{2-Y}\,h\varepsilon^{2-Y}+O\big(h\varepsilon^{2-Y/2}\big)+O\big(h^{2-Y/2}\big),\quad\text{as }\,h\rightarrow 0.
\end{align}

\smallskip
\noindent
\textbf{Step 2.2.} In this step, we will investigate the asymptotic behavior of $I_{4}(h)$, as $h\rightarrow 0$. Note that
\begin{align*}
I_{4}(h)&=h^{1/Y}\wt{\bE}\Big(Z_{1}{\bf 1}_{\{|\sigma\sqrt{h}W_{1}+h^{1/Y}\!Z_{1}+\wt{\gamma}h|\leq\varepsilon\}}\Big)+\wt{\bE}\bigg(\Big(e^{-\wt{U}_{h}}\!-1\Big)Z_{h}{\bf 1}_{\{|\sigma W_{h}+Z_{h}+\wt{\gamma}h|\leq\varepsilon\}}\bigg)\\
&\,\,\leq h^{1/Y}\bigg(\wt{\bE}\Big(Z_{1}^{2}\,{\bf 1}_{\{|\sigma\sqrt{h}W_{1}+h^{1/Y}\!Z_{1}+\wt{\gamma}h|\leq\varepsilon\}}\Big)\bigg)^{1/2}\left(1+\left(\wt{\bE}\bigg(\Big(e^{-\wt{U}_{h}}-1\Big)^{2}\bigg)\right)^{1/2}\right)\\
&\,\,=O\left(h^{1/Y}\bigg(\wt{\bE}\Big(Z_{1}^{2}\,{\bf 1}_{\{|\sigma\sqrt{h}W_{1}+h^{1/Y}\!Z_{1}+\wt{\gamma}h|\leq\varepsilon\}}\Big)\bigg)^{1/2}\right),\quad h\rightarrow 0,
\end{align*}
where the second inequality above follows from Cauchy-Schwarz inequality. Therefore, by \eqref{eq:Decompb1eps21} and \eqref{eq:Limitb1eps211}, we obtain that
\begin{align}\label{eq:Limitb1eps22}
I_{4}(h)=O\big(\sqrt{h}\,\varepsilon^{1-Y/2}\big)+O\big(h^{1-Y/4}\big),\quad\text{as }\,h\rightarrow 0.
\end{align}

Finally, by combining \eqref{eq:Decompb1eps2}, \eqref{eq:Limitb1eps23}, \eqref{eq:Limitb1eps21}, and \eqref{eq:Limitb1eps22}, we conclude that
\begin{align}\label{eq:Limitb1eps2}
\bE\Big(J_{h}^{2}\,{\bf 1}_{\{|\sigma W_{h}+J_{h}|\leq\varepsilon\}}\Big)=\frac{C_{+}+C_{-}}{2-Y}\,h\varepsilon^{2-Y}+O\big(h\varepsilon^{2-Y/2}\big)+O\big(h^{2-Y/2}\big),\quad\text{as }\,h\rightarrow 0,
\end{align}
which completes the analysis of Step 2.

\bigskip
\noindent
\textbf{Step 3.} In this last step, we will study the asymptotic behavior of the third term in \eqref{eq:Decompb1eps}, as $h\rightarrow 0$. By \eqref{eq:DenTranTildePP} and \eqref{eq:DecompUpm}, we first decompose it as
\begin{align}
&\bE\Big(W_{h}J_{h}{\bf 1}_{\{|\sigma W_{h}+J_{h}|\leq\varepsilon\}}\Big)=\wt{\bE}\Big(e^{-\wt{U}_{h}-\eta h}\,W_{h}J_{h}{\bf 1}_{\{|\sigma W_{h}+J_{h}|\leq\varepsilon\}}\Big)\\
&\quad =\sqrt{h}\,e^{-\eta h}\,\wt{\bE}\Big(W_{1}J_{h}{\bf 1}_{\{|\sigma\sqrt{h}W_{1}+J_{h}|\leq\varepsilon\}}\Big)+\sqrt{h}\,e^{-\eta h}\,\wt{\bE}\Big(\Big(e^{-\wt{U}_{h}}-1\Big)W_{1}J_{h}{\bf 1}_{\{|\sigma\sqrt{h}W_{1}+J_{h}|\leq\varepsilon\}}\Big)\\
\label{eq:Decompb1eps3} &\quad =:e^{-\eta h}\sqrt{h}\,I_{5}(h)+e^{-\eta h}\sqrt{h}\,I_{6}(h).
\end{align}
For $I_{5}(h)$, by conditioning on $J_{h}$, and using the fact that, for any $x_{1},x_{2}\in\bR$ with $x_{1}<x_{2}$,
\begin{align*}
\wt{\bE}\Big(W_{1}{\bf 1}_{\{W_{1}\in [x_{1},x_{2}]\}}\Big)=\phi(x_{1})-\phi(x_{2}),
\end{align*}
we obtain from \eqref{eq:LimitExpJhphi} that, as $h\rightarrow 0$,
\begin{align}\label{eq:Limitb1eps31}
I_{5}(h)=\wt{\bE}\left(J_{h}\bigg(\phi\bigg(\frac{\varepsilon+J_{h}}{\sigma\sqrt{h}}\bigg)-\phi\bigg(\frac{\varepsilon-J_{h}}{\sigma\sqrt{h}}\bigg)\bigg)\right)=O\Big(h\,e^{-\varepsilon^{2}/(2\sigma^{2}h)}\Big)+O\big(h^{3/2}\varepsilon^{-Y}\big).
\end{align}
As for $I_{6}(h)$, by Cauchy-Schwarz inequality, \eqref{eq:DecompJZpm}, \eqref{eq:Decompb1eps21}, \eqref{eq:Limitb1eps211}, and \eqref{eq:Limitb1eps2121}, we obtain that
\begin{align}
\big|I_{6}(h)\big|&\leq\bigg(\wt{\bE}\bigg(\Big(e^{-\wt{U}_{h}}-1\Big)^{2}\bigg)\bigg)^{1/2}\bigg(\wt{\bE}\Big(J_{h}^{2}\,{\bf 1}_{\{|\sigma\sqrt{h}W_{1}+J_{h}|\leq\varepsilon\}}\Big)\bigg)^{1/2}\\
&\leq\bigg(\wt{\bE}\bigg(\Big(e^{-\wt{U}_{h}}-1\Big)^{2}\bigg)\bigg)^{1/2}\bigg(\wt{\bE}\Big(2\big(Z_{h}^{2}+\wt{\gamma}^{2}h^{2}\big){\bf 1}_{\{|\sigma\sqrt{h}W_{1}+J_{h}|\leq\varepsilon\}}\Big)\bigg)^{1/2}\\
\label{eq:Limitb1eps32} &=O\big(h\varepsilon^{1-Y/2}\big),\quad\text{as }\,h\rightarrow 0.
\end{align}
Therefore, by combining \eqref{eq:Decompb1eps3}, \eqref{eq:Limitb1eps31}, and \eqref{eq:Limitb1eps32}, we obtain that, as $h\rightarrow 0$,
\begin{align}\label{eq:Limitb1eps3}
\bE\Big(W_{h}J_{h}{\bf 1}_{\{|\sigma W_{h}+J_{h}|\leq\varepsilon\}}\Big)&=O\Big(h^{3/2}e^{-\varepsilon^{2}/(2\sigma^{2}h)}\Big)+O\big(h^{2}\varepsilon^{-Y}\big)+O\big(h^{3/2}\varepsilon^{1-Y/2}\big),
\end{align}
which completes the analysis in Step 3.

\bigskip
Finally, by combining \eqref{eq:Decompb1eps}, \eqref{eq:Limitb1eps1}, \eqref{eq:Limitb1eps2}, and \eqref{eq:Limitb1eps3}, we conclude that, as $h\rightarrow 0$,
\begin{align}
\bE\big(b_{1}(\varepsilon)\big)\!=\!\sigma^{2}h\!-\!{\frac{\sigma\varepsilon\sqrt{2h}}{\sqrt{\pi}}e^{-\varepsilon^{2}/(2\sigma^{2}h)}\!+\!\frac{C_{+}\!\!+\!C_{-}}{2-Y}h\varepsilon^{2-Y}}\!\!+\!O\Big(he^{-\varepsilon^{2}/(2\sigma^{2}h)}\Big)\!+\!O\big(h\varepsilon^{2-Y/2}\big)\!+\!O\big(h^{2-Y/2}\big),
\end{align}
which completes the proof of {\Blue the theorem.}

\bibliographystyle{plain}

\begin{thebibliography}{99}

\bibitem{AitSahaliaJacod:2009}
Y. A\"{i}t-Sahalia and J. Jacod.
\newblock{Estimating the Degree of Activity of Jumps in High Frequency Data}.
\newblock {\em Ann. Stat.}, 37(5A):2202$-$2244, 2009.

\bibitem{Applebaum:2004}
D. Applebaum.
\newblock {\em L\'{e}vy Processes and Stochastic Calculus}, 2nd Ed..
\newblock {\em Cambridge Stud. Adv. Math.}, 116, Cambridge University Press, Cambridge, U.K., 2004.

\bibitem{Belomestny:2010}
D. Belomestny.
\newblock {Spectral Estimation of the Fractional Order of a L\'{e}vy Process}.
\newblock {\em Ann. Stat.}, 38(1):317$-$351, 2010.

\bibitem{Bull:2016}
A. D. Bull.
\newblock {Near-Optimal Estimation of Jump Activity in Semimartingales}.
\newblock {\em Ann. Stat.}, 44(1):58$-$86, 2016.

\bibitem{CarrGemanMadanYor:2002}
P. Carr, H. Geman, D. B. Madan, and M. Yor.
\newblock {The Fine Structure of Asset Returns: An Empirical Investigation}.
\newblock {\em J. Bus.}, 75(2):305$-$332, 2002.

\bibitem{ContTankov:2004}
R. Cont and P. Tankov.
\newblock {\em Financial Modelling with Jump Processes}.
\newblock {\em Chapman \& Hall/CRC Financ. Math. Ser.}, Chapman \& Hall/CRC, Boca Raton, FL, U.S.A., 2004.

\bibitem{FigueroaLopez:2012}
J. E. Figueroa-L\'{o}pez.
\newblock {Statistical Estimation of L\'{e}vy-Type Stochastic Volatility Models}.
\newblock {\em Ann. Finance}, 8(2):309$-$335, 2012.

\bibitem{FigueroaLopezGongHoudre:2016}
J. E. Figueroa-L\'{o}pez, R. Gong, and C. Houdr\'{e}.
\newblock {High-Order Short-Time Expansions for ATM Option Prices of Exponential L\'{e}vy Models}.
\newblock {\em Math. Financ.}, 26(3):516$-$557, 2016.

\bibitem{FigueroaLopezGongHoudre:2017}
J. E. Figueroa-L\'{o}pez, R. Gong, and C. Houdr\'{e}.
\newblock {Third-Order Short-Time Expansions for Close-to-the-Money Option Prices under the CGMY Model}.
\newblock {\em Appl. Math. Financ.}, 24(6):547$-$574, 2017.

\bibitem{FigueroaLopezMancini:2019}
J. E. Figueroa-L\'{o}pez and C. Mancini.
\newblock {Optimum Thresholding Using Mean and Conditional Mean Square Error}.
\newblock {\em J. Econom.}, 208(1):179$-$210, 2019.

\bibitem{FigueroaLopezOlafsson:2019(1)}
J. E. Figueroa-L\'{o}pez and S. \'{O}lafsson.
\newblock {Short-Time Expansions for Close-to-the-Money Options under a L\'{e}vy Jump Model with Stochastic Volatility}.
\newblock {\em Financ. Stoch.}, 20(1):219$-$265, 2016.

\bibitem{FigueroaLopezOlafsson:2019(2)}
J. E. Figueroa-L\'{o}pez and S. \'{O}lafsson.
\newblock {Short-Time Asymptotics for the Implied Volatility Skew under a Stochastic Volatility Model with L\'{e}vy Jumps}.
\newblock {\em Financ. Stoch.}, 20(4):973$-$1020, 2016.

\bibitem{JacodTodorov:2014}
J. Jacod and V. Todorov.
\newblock {Efficient Estimation of Integrated Volatility in Presence of Infinite Variation Jumps}.
\newblock {\em Ann. Stat.}, 42(3):1029$-$1069, 2014.

\bibitem{Kawai:2010}
R. Kawai.
\newblock {On Sequential Calibration for an Asset Price Model with Piecewise L\'{e}vy Processes}.
\newblock {\em IAENG Int. J. Appl. Math.}, 40(4):239$-$246, 2010.

\bibitem{KyprianouSchoutensWilmott:2005}
A. Kyprianou, W. Schoutens, and P. Wilmott.
\newblock {\em Exotic Option Pricing and Advanced L\'{e}vy Models}.
\newblock {John Wiley \& Sons Ltd.}, ChiChester, England, 2005.

\bibitem{Masuda}
H. Masuda. Parametric estimation of L\'evy processes. In L\'evy Matters IV, p. 179-286. Springer.
.
\bibitem{Mies:2019}
F. Mies.
\newblock {Rate-Optimal Estimation of the Blumenthal-Getoor Index of a L\'{e}vy Process}.
\newblock {\em Electronic Journal of Statistics}, 14(2):4165$-$4206, 2020.

\bibitem{Reiss:2013}
M. Rei{\ss}.
\newblock {Testing the Characteristics of a L\'{e}vy Process}.
\newblock {\em Stoch. Proc. Appl.}, 123(7):2808$-$2828, 2013.

\bibitem{Rosinski:2007}
J. Rosi\'{n}ski.
\newblock {Tempering Stable Processes}.
\newblock {\em Stoch. Proc. Appl.}, 117(6):677$-$707, 2007.

\bibitem{Sato:1999}
K. Sato.
\newblock {\em L\'{e}vy Processes and Infinitely Divisible Distributions}.
\newblock {\em Cambridge Stud. Adv. Math.}, 68, Cambridge University Press, Cambridge, U.K., 1999.

\bibitem{Tankov:2010}
P. Tankov.
\newblock {Pricing and Hedging in Exponential L\'{e}vy Models: Review of Recent Results}.
\newblock {{\em Paris-Princeton Lectures in Mathematical Finance 2010} (R. Carmona, E. \c{C}inlar, I. Ekeland, E. Jouini, J. A. Scheinkman, and N. Touzi (eds.))}, {\em Lect. Notes Math.}, 2003, 319$-$359, 2010.

\bibitem{FigueroaQiKuffner}
Q. Wang, J.E. Figueroa-L\'opez, and T. Kuffner.
\newblock {Bayesian Inference on Volatility in the Presence of Infinite Jump Activity and Microstructure Noise.} 
\newblock {\em Electronic Journal Of Statistics.}, 15(1):506$-$553, 2021.

\bibitem{ZhangMyklandAitSahalia:2005}
L. Zhang, P. A. Mykland, and Y. A\"{i}t-Sahalia.
\newblock {A Tale of Two Time Scales: Determining Integrated Volatility with Noisy High-Frequency Data}.
\newblock {\em J. Am. Stat. Assoc.}, 100(472):1394$-$1411, 2005.







\end{thebibliography}

\end{document}